\newtheorem{remarque}{Remark}
\newtheorem{hypothesis}{Hypothesis}
\newcommand{\myspecialendproof}{\begin{flushright}$\square$\end{flushright}}
\newcommand\taulinear{\tau_L}
\newcommand\taup{\tau}
\newcommand\cextin{C_{\tau}}
\begin{document}

\title{Stochastic nonlinear model for somatic cell population dynamics during ovarian follicle activation%\thanks{Grants or other notes
%about the article that should go on the front page should be
%placed here. General acknowledgments should be placed at the end of the article.}
}
%\subtitle{}
\titlerunning{Stochastic nonlinear cell population model for ovarian follicle activation}        % if too long for running head

\author{Fr\'{e}d\'{e}rique Cl\'{e}ment    \and
        Fr\'{e}d\'{e}rique Robin \and Romain Yvinec  %etc.
}

%\authorrunning{Short form of author list} % if too long for running head

\institute{Fr\'{e}d\'{e}rique Cl\'{e}ment \at
            Inria, Centre de recherche Inria Saclay-Île-de-France\\
              %Inria, Universit\'{e} Paris-Saclay \\
              %LMS, Ecole Polytechnique, CNRS, Universit\'{e} Paris-Saclay\\
              \email{frederique.clement@inria.fr}           %  \\
%             \emph{Present address:} of F. Author  %  if needed
           \and
           Fr\'{e}d\'{e}rique Robin \at
           Inria, Centre de recherche Inria Saclay-Île-de-France\\%, Palaiseau, France
           %Inria, Universit\'{e} Paris-Saclay \\
           %LMS, Ecole Polytechnique, CNRS, Universit\'{e} Paris-Saclay\\
           \email{frederique.robin@inria.fr}    \and 
            Romain Yvinec \at
           PRC, INRAE, CNRS, Universit\'{e} de Tours, 37380 Nouzilly, France \\
           \email{romain.yvinec@inrae.fr}    
}

\date{Received: date / Accepted: date}
% The correct dates will be entered by the editor

\maketitle

\begin{abstract}
% 150/250 words

In mammals, female germ cells are sheltered within somatic structures called ovarian follicles, which remain in a quiescent state until they get activated, all along reproductive life. We investigate the sequence of somatic cell events occurring just after follicle activation, {starting by the awakening of precursor somatic cells, and their transformation into proliferative cells}. We introduce a nonlinear stochastic model accounting for the joint dynamics of {the} two cell types, {and allowing us to investigate the potential impact of a feedback from proliferative cells onto precursor cells}. 
{To tackle the key issue of whether cell proliferation is concomitant or posterior to cell awakening, we} assess both the time needed for all precursor cells to {awake}, and the corresponding increase in the {total} cell number with respect to the initial cell number. Using the probabilistic theory of first passage times, we design a numerical scheme based on a rigorous Finite State Projection and coupling techniques to compute the mean extinction time and the cell number at extinction time. {We find that the feedback term clearly lowers the number of proliferative cells at the extinction time.} We calibrate the model parameters using an exact likelihood approach. We carry out a comprehensive comparison between the initial model and a series of submodels, which helps to select the critical cell events taking place during activation, {and suggests that awakening is prominent over proliferation}.

%Insert your abstract here. Include keywords, PACS and mathematical
%subject classification numbers as needed.

%keywords: 4/6
\keywords{stochastic cell population model \and first passage time \and finite state projection \and stochastic coupling techniques \and maximum likelihood estimate \and  embedded Markov chain}
\subclass{60J85 \and 60J28 \and 92D25 \and 62M05}

% \PACS{PACS code1 \and PACS code2 \and more}
% \subclass{MSC code1 \and MSC code2 \and more}
\end{abstract}

%\listoftodos 
\section{Introduction}\label{intro}
    In mammals, the number of oocytes (egg cells) available for a female throughout her reproductive life is fixed once for all, during the fetal or perinatal period \cite{monniaux_18}.  Dormant oocytes are sheltered within somatic structures called ovarian follicles, which remain in a quiescent state until they get activated and undergo a longstanding process of growth and maturation ending by ovulation (release of a fertilizable oocyte). Growth initiation is asynchronous among follicles, so that all developmental stages can be observed in the ovaries at a given time, and follicles can remain quiescent for as long as tens of years \cite{reddy_10}.
	
	In the earliest stages of development, ovarian follicles are made up of the oocyte and a single layer of surrounding somatic cells.
	The initial cell number is on the order of ten or several of tens according to the species and is quite variable between follicles. Such a variability is inherited from the mechanism underlying the formation of primordial follicles \cite{monniaux_18b,sawyer_02}, which assemble from the fragmentation of syncytium structures (the germ cell cysts) and retrieve more or less somatic cells. 
	
	The activation of primordial (quiescent) follicles is characterized by three main processes \cite{picton_01}: (i) an irreversible transition of the somatic cell phenotype, characterized by a change in their shape, from flattened (precursor cells) to cuboidal (proliferative cells); (ii) an increase in the number of somatic cells by cell division and (iii) the awakening and associated enlargement of the oocyte. The activation phase is ended when all somatic cells have transitioned, at which time the mono-layer developmental stage is completed, and somatic cells will go on proliferating and build up several concentric layers \cite{fortune_03,clement_coupled_2013,CRY2019}.
	
	In this work, we focus on the sequence of events occurring just after the initiation of follicle growth. A key issue is to determine whether cell proliferation is concomitant or posterior to cell shape change, and to assess both the time needed for all precursor cells to complete transition and the corresponding increase in the cell number with respect to the initial cell number. 
	
    We introduce {a continuous-time Markov chain model for} cell population dynamics accounting for both cell transition and division. Within such a  formalism, linear models have been built up on the branching property, disregarding cellular interactions \cite{kimmel_theory_1963,harris_theory_1963}, while nonlinear models have accounted for interactions among different cell populations (e.g., typically, a feedback from differentiated cells onto precursor cells) either to ensure homeostasis, as in dynamical models for blood cells \cite{getto_mathematical_2015,stiehl_stem_2017,pujo_blood_2016}, or to achieve a proper developmental sequence, as in dynamical models for neural cells \cite{freret-hodara_16}.
	On our side, we are interested in assessing the duration of the  activation process, {\textit{i.e.} the extinction time of the population of precursor cells}, and in ordering the events taking place
	during activation. A natural concept in probability theory to investigate these issues is the first passage time theory \cite{darling_first_1953,van_kampen_stochastic_1992}, which aims to characterize the statistics of random events related to some particular outcomes. The analysis of first passage times are becoming more and more popular in mathematical biology \cite{chou_first_2014,castro_mathematical_2015}, to quantify random times needed to reach a given final state, such as population extinction for instance.
	
	Typically, the parameters of cell dynamics models are calibrated using time series of cell counts sorted into different cell types \cite{marr_multiscale_2012,glauche_lineage_2007}. In contrast, in the case of early folliculogenesis, precursor and proliferative cell numbers are not available directly as a function of time, but only in relation with other morphological variables such as the oocyte and follicle diameters \cite{braw-tal_studies_1997,gougeon_morphometric_1987,lundy_populations_1999,meredith_classification_2000}, so that we lack kinetic information. To overcome this difficulty, {we use } the embedded discrete-time Markov chain {to} apply classical statistical tools like the maximum likelihood \cite{wilkinson_markov_2013}, and parameter identifiability concepts \cite{raue_structural_2009}. %thanks to a convenient change of variables.

	The manuscript is organized as follows. {In Section~2, we introduce a stochastic model of cell population dynamics, with two state variables and four cell events (reactions).} 
	{In section~3 we analyze both the linear and nonlinear versions of the model in the Markov chain framework}. In the linear case, we obtain analytical formulas for the mean extinction time. In the nonlinear case, we design a numerical scheme based on a rigorous Finite State Projection (see \cite{munsky_finite_2006,kuntz_deterministic_2017}) and coupling techniques to assess the mean extinction time. In both cases, we study the sensitivity of the extinction time, as well as of the proliferative cell number at extinction time, with respect to the parameter values. In section 4, using the embedded Markov chain, we calibrate the parameters of  {the model} from experimental, time-free datasets, and analyze the practical identifiability. {Using model selection criteria, we identify the key parameters that shed light on the most likely events that occur during the  activation process.} From {this} data-fitting {approach}, we manage to retrieve hidden kinetic information and provide some biological interpretations of our results. We conclude in section~5.	
	
	\section{Model design and formulation}
    Our model allows us to study the joint dynamics of the precursor cells $F$ and  proliferative cells $C$ within a single follicle, whose populations are ruled by four types of possible cell events. {In the absence of specific information, we used the simplest formulation as possible for all event rates, according to Occam’s razor principle. }
    Two cell events occur at the expense of the precursor cells, which are consumed during their transition~: (i) $\mathcal{R}_1$ is the spontaneous transition of precursor cells into proliferative cells, whose rate $\alpha_1 F$ is linearly proportional to the number of precursor cells; (ii) $\mathcal{R}_2$ is the auto-amplified transition of precursor cells into proliferative cells, which occurs at rate $\beta \frac{FC}{F + C}$. This event represents the feedback of proliferative cells onto the transition of the precursor cells.
	Two other cell events increase the proliferative cell population without affecting the precursor cell population: (i) $\mathcal{R}_3$ is an asymmetric division of precursor cells $F$ (giving rise to one precursor cell and one proliferative cell), which occurs at rate $\alpha_2 F$; (ii) $\mathcal{R}_4$ is a symmetric division of the proliferative cells $C$ (giving rise to two proliferative cells), which occurs at rate $\gamma C$. \\
	These four cell events are the building blocks of the main model $\mathcal{M}_{FC}$, which is summarized below :
	\begin{equation}\tag{$\mathcal{M}_{FC}$}\label{Model_FC}
	\begin{array}{lcl}
	& \text{ Cell events } & \text{ Rate } \\
	\mathcal{R}_1 : & (F,C)  \rightarrow (F- 1, C + 1),  & \,\,\, \alpha_1 F, \\
	\mathcal{R}_2 : & (F,C)  \rightarrow (F- 1, C + 1),  & \,\,\,  \beta \frac{FC}{F + C} , \\
	\mathcal{R}_3 : & (F,C)  \rightarrow (F, C + 1), & \,\,\, \alpha_2 F, \\
	\mathcal{R}_4 : &(F,C)  \rightarrow (F, C + 1), & \,\,\, \gamma C \, .
	\end{array}
	\end{equation}
	
	Cell events $\mathcal{R}_1$ and $\mathcal{R}_4$ constitute the fundamental ingredients involved in the activation process. We also consider two additional cell events, $\mathcal{R}_3$ and $\mathcal{R}_{{2}}$, which are not only intended to enrich the model behavior, but are also substantiated by biological observations.
	
	{Cell event $\mathcal{R}_1$ corresponds to the spontaneous transition undergone by a precursor cell, including the very first event. Firing only $\mathcal{R}_1$ events is sufficient to complete activation, yet in this case the final cell number is unchanged with respect to the initial number, which is not what is systematically observed in the experimental data {\cite{lundy_populations_1999,gougeon_morphometric_1987,lintern_79}}. On the scale of a whole follicle, the awakening of precursor cells triggers the exit from the primordial follicle pool and initiate the process of follicle growth and development \cite{zhang_somatic_2014}. Awakening is induced by activation of the protein complex mTORC1 in somatic cells (and not oocyte), by oxygen and stress-or energy-induced metabolites reaching the follicle environment in the ovarian cortex {\cite{zhang_somatic_2014}}.}
	{A natural choice for this spontaneous reaction is to consider that it occurs} {independently in each precursor cell,} {so that the transition rate of precursor cells is proportional (with coefficient $\alpha_1$) to their number $F$.}
			
	Cell event $\mathcal{R}_2$ {corresponds to auto-amplified precursor cell transitions. It has the same cell output as cell event $\mathcal{R}_1$ (loss of one precursor cell), yet it can speed up the transition rate after the first triggering event. The amplification is mediated by a positive feedback exerted by already transitioned, proliferative cells; the transition rate is $0$ when $C=0$, and increases with $C$.} {Such an auto-amplification is expected} to result from the molecular mechanisms underlying follicle activation and establishing a dialog between the oocyte and somatic cells \cite{monniaux_16}. {The activated somatic cell(s) start} stimulating the oocyte through specific signaling pathways (KIT-Ligand cytokine).
	In turn, once activated, the oocyte signals to the somatic cells through factors of the TGF$\beta$ family \cite{knight_06} (mainly GDF9 and BMP15). This molecular dialog settles a positive feedback loop, which can be represented by an auto-amplified transition rate.
	In sheep, there exist natural mutations affecting this molecular dialog (disruption of either the GDF9 or BMP15 ligand, or the receptor to BMP15). Introducing cell event $\mathcal{R}_2$ can help to investigate possible differences in the activation process in wild-type compared to mutant strains. More specifically, we have access to experimental cell numbers (courtesy of Ken McNatty) obtained either from a wild-type strain (Ile-de-France) or a mutant strain for BMP15R (Booroola), whose follicle development is known to be clearly different in the multi-layer stages \cite{lundy_populations_1999}, especially as far as cell dynamics. Whether cell dynamics is also affected during the mono-layer stage remains unclear \cite{reader_12}, which is an additional motivation for this work.
	{In the analysis performed in Section~\ref{model_analysis}, the specific formulation of the reaction rate of event $\mathcal{R}_2$ does not matter much. We just need to assume that it is linearly bounded by $F$, which is sensible with respect to cell cycle constraints. 
	To fit the model to available data in Section~\ref{sec-parameter_calibration}, we needed to specify further the shape of the nonlinearity, and chose a parameterization including as few parameters as possible. We refer to Appendix \ref{sec:appendix-R2} for a basic justification of this choice.} 
	
	{Cell event $\mathcal{R}_3$ corresponds to a self-renewal transition event that does not consume a precursor cell and may be fired as a first triggering event. It is directly inspired from asymmetric division events commonly observed in developmental cell lineages.} {The speculation that} precursor (flattened) cells {might divide while transitioning} is {compatible} with experimental studies where KI67 staining (a marker of cell cycle progression) was detected in some flattened cells \cite{dasilva_08}. Since the number of flattened cells is non increasing, one can envisage the existence of self-renewing asymmetric divisions in flattened cells, giving birth to one proliferative cell (and keeping the precursor cell number unchanged). {The rate of event $\mathcal{R}_3$ is chosen in a similar way as that of event  $\mathcal{R}_1$.}
	
	{Cell event $\mathcal{R}_4$ corresponds to the symmetric division of proliferative cells, giving birth to two identical daughter cells. It has the same cell output as cell event $\mathcal{R}_4$ (gain of one proliferative cell). All $C$ cells are supposed to progress through the cell cycle (growth fraction of one), and there is no cell loss at mitosis, so that the proliferative population growths exponentially (with rate $\gamma$).} 
	
	All the reactions rates ($\alpha_1$, $\beta$, $\alpha_2$ and $\gamma $) are non-negative. At initial time, there are only precursor cells, and the initial condition is chosen as a random positive integer variable, in consistency with the observed biological variability. 

    In the following, we will use different submodels derived from the full model $\mathcal{M}_{FC}$, by removing either one or several cell events (hence setting to zero the corresponding parameter values  $\beta $, $\alpha_2$ and/or $\gamma$). We will name these submodels by explicitly mentioning the remaining events. For instance, model ($\mathcal{R}_1,\mathcal{R}_3$) consists only of the spontaneous cell transition event and asymmetric cell division ($\beta=\gamma=0$), while model ($\mathcal{R}_1,\mathcal{R}_4$) is composed of the spontaneous cell transition event and asymmetric cell division ($\beta=\alpha_2=0$).\\
	
	Model $\mathcal{M}_{FC}$ {is} mathematically formulated {as a} Continuous time Markov chain (CTMC). 
	The stochastic description is especially appropriate when dealing with a small number of cells. {Before introducing a precise mathematical formulation, w}e can illustrate the dynamics of both the precursor and proliferative cells (Figure \ref{fig:trajecttime}). {Initially, the whole population is made of precursor cells ($C=0$), so that the first event has to be a triggering event} {generating the first proliferative cell} ($\mathcal{R}_1$ or $\mathcal{R}_3$). 
	The $C$ population grows as the $F$ population decreases until extinction (top-left panel). {The last event} {before extinction of precursor cells} {has to be a consuming event ($\mathcal{R}_1$ or $\mathcal{R}_2$).} The proportion of proliferative cells $p_C := \frac{C}{F+C}$ increases monotonously from $0$ to $1$ (bottom-left panel). In the $(C,F)$ phase plane (top-right panel), we can observe that the number of precursor cells remains constant (aligned red or black points on the horizontal line $(k, F)$, $k\in \mathbb{N}$) whenever there is a division event ($\mathcal{R}_3$ or $\mathcal{R}_4$). In contrast, whenever there is a transition event ($\mathcal{R}_1$ or $\mathcal{R}_2$), the number of precursor cells decreases by one, as illustrated by the jump from  the current line ($(k, F), k\in \mathbb{N}$) to the lower one ($(k, F-1)$, $k\in \mathbb{N}$). Hence, in this simulation, we observe a sequence of transition and division events (which appear to be here mainly spontaneous transitions $\mathcal{R}_1$ and asymmetric divisions $\mathcal{R}_3$ due to the specific parameter choice). If we are only given the sequence of events in this plane, we cannot discriminate $\mathcal{R}_1$ from $\mathcal{R}_2$, neither $\mathcal{R}_3$ from $\mathcal{R}_4$. Note that, depending on the initial condition, some parts of the phase plane cannot be reached.
	The trajectories can also be observed in the $(C,p_C)$ phase plane  (bottom-right panel). In this case, the trajectories remain on the curves parameterized by ($(k, \frac{k}{F + k})$, $k\in \mathbb{N}$) if a division event  ($\mathcal{R}_3$ or $\mathcal{R}_4$) occurs, whereas they move to the upper curves parameterized by ($(k, \frac{k}{F-1 + k})$, $k\in \mathbb{N}$) whenever a transition event ($\mathcal{R}_1$ or $\mathcal{R}_2$) occurs. 
	
	\begin{figure}[h]
		\centering
		\includegraphics[width=\linewidth]{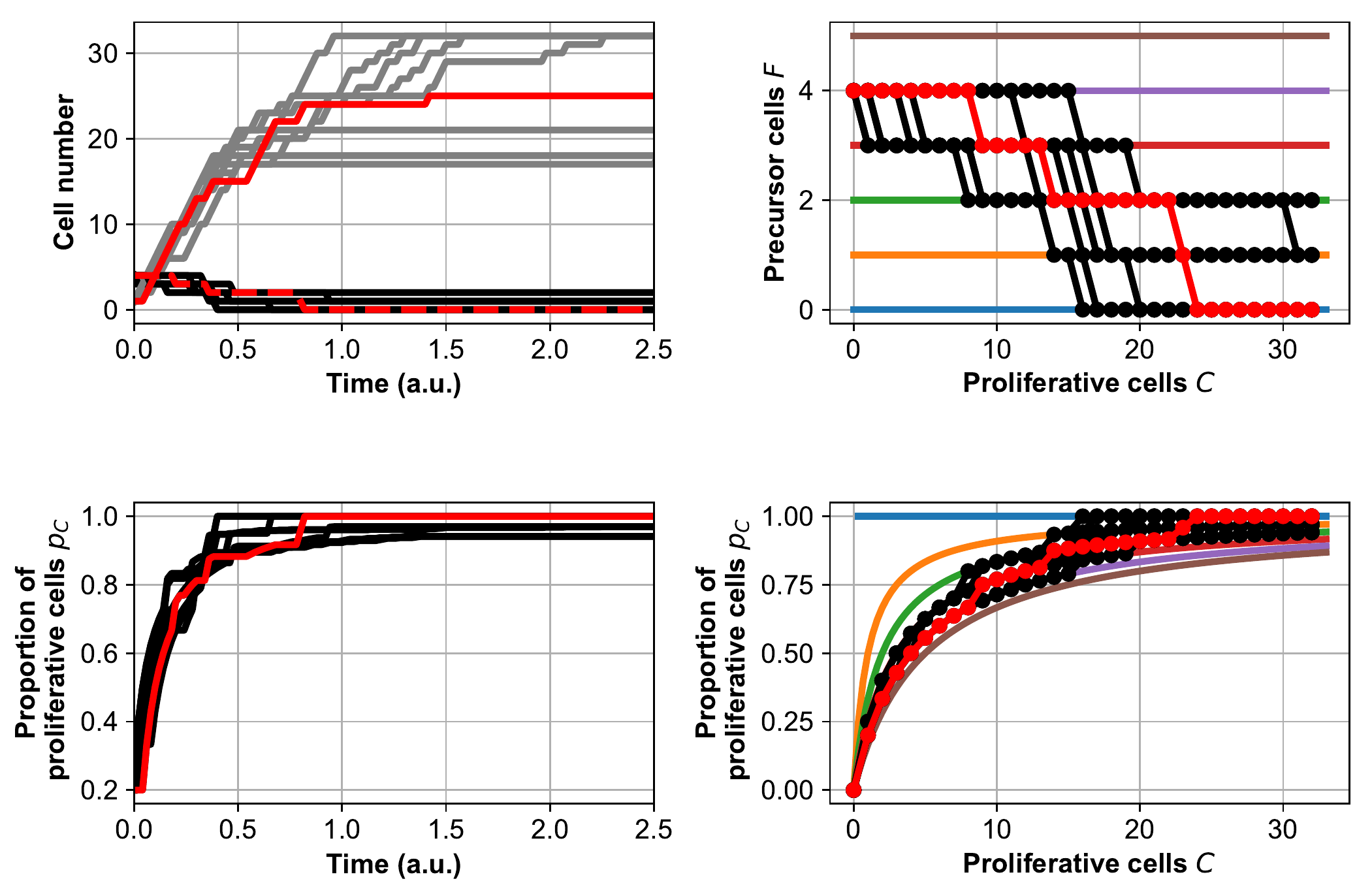}
		\caption{\textbf{Illustration of the dynamics generated by model \ref{Model_FC}.} The dynamics of the precursor and proliferative cells are computed using a Gillespie SSA algorithm \cite{gillespie_approximate_2001} with the parameter values: $\alpha_1 = 1$, $\beta = 0.01$, $\alpha_2 = 10$, $\gamma = 0.001$ and a deterministic initial condition $F(0) = 4$. In each panel, the black or gray lines represent 9 different trajectories of the process and the red line corresponds to one specific trajectory. Top-left panel: Number of precursor $F$ (black lines) and proliferative $C$ (gray lines) cells as a function of time (arbitrary units, (a.u)). Bottom-left panel: Proportion of proliferative cells $p_C$ as a function of time. Top-right panel: Number of precursor cells $F$ as a function of the number proliferative cells $C$. Bottom-right panel: Proportion of proliferative cells $p_C$ as a function of the number of proliferative cells $C$.}
		\label{fig:trajecttime}
	\end{figure}

	\paragraph{Model formulation and hypotheses}
		On a probability space $(\Omega, \mathcal{F}, \mathbb{P}) $,  let the initial number of flattened cells $F_0$ be a positive integer random variable. The population of precursor cells $F$ and proliferative cells $C$ follows the Stochastic Differential Equation (SDE) below:
	\begin{multline}\label{eq-SDE}
		F_t = F_0 - \mathcal{Y}_1 \left( \int_{0}^{t}\alpha_1 F_s ds\right) - \mathcal{Y}_2 \left( \int_{0}^{t}\beta \frac{F_s C_s}{F_s + C_s}ds\right), \\
		C_t = \mathcal{Y}_1 \left( \int_{0}^{t}\alpha_1 F_s ds\right) +  \mathcal{Y}_2 \left( \int_{0}^{t}\beta \frac{F_s C_s}{F_s + C_s}ds\right) \\
		 +  \mathcal{Y}_3 \left( \int_{0}^{t}\alpha_2 F_s ds\right)  + \mathcal{Y}_4 \left( \int_{0}^{t}\gamma C_s ds\right).
	\end{multline}
	where $\mathcal{Y}_i$, for all $i = 1, 2, 3, 4$, are mutually independent standard Poisson processes. $X = (X_t)_{t \geq 0} $, with $X_t := (F_t,C_t)$ for all $t\geq 0$, denotes the solution of \eqref{eq-SDE}. $(\mathcal{F}_t)_{t \geq 0} $ denotes the canonical filtration generated by the process X.
	
	{Classically, X can also be seen} as a continuous-time Markov chain with countable state space $\mathcal{S}:= \mathbb{N}^2 \backslash \{(0,0) \}$ {whose} infinitesimal generator $\mathcal{L}$ is given by
	\begin{multline*}%\label{def-operator-process}
		\mathcal{L}g(f,c) = (\alpha_1 f + \beta \frac{fc}{f + c}) \left[ g(f- 1, c+ 1 ) - g(f,c)\right]  \\
		+  (\alpha_2 f + \gamma c) \left[ g(f, c + 1) - g(f, c)\right],
	\end{multline*}
	for all $g$ bounded functions and for all $(f,c) \in \mathcal{S}$.
	
	In the whole study, we will need the following hypotheses:
	\begin{hypothesis}\label{hyp-rate1}
		The spontaneous activation rate $ \alpha_1$ is positive.
	\end{hypothesis}
	\begin{hypothesis}\label{hyp-L0}
		The initial condition $F_0 $ is $L_2$-integrable.
	\end{hypothesis}

	With Hypothesis \ref{hyp-L0}, we apply Theorem 1.22 of \cite{anderson_stochastic_2015} (p.12-13) and deduce that the process $M^g_t $ defined as 
	\begin{equation}\label{MartingaleProblem_eq}
	M^g_t := g(X_t) - g(X_0) - \int_{0}^{t}\mathcal{L}g(X_s)ds
	\end{equation}
	is a $\mathcal{F}_t$-martingale, for all $t \geq 0$ and any bounded function $g$. \\
	
	Note that process $F$ is a non-negative decreasing process. To study the hitting time of the state $F = 0$, we introduce the following definition
	
	\begin{definition}\label{def-hitting-time}
		Let $\tau$ be the extinction time of the precursor cell population~$F$ 
		\begin{equation*}%\label{eq-tau-def}
			\tau := \inf \{  t \geq 0; \quad F_t = 0 | F_0 \} \, .
		\end{equation*}
		The number of proliferative cells $C$ at $t =  \tau$ is $C_{\tau} $.
	\end{definition}
	{To control the first moment of $C_{\tau} $, the number of proliferative cells at the extinction time}, we will also need an additional hypothesis: 
	\begin{hypothesis}\label{hyp-rate2}
		The {maximal} activation rate $\alpha_1{+\beta}$ is strictly greater than the proliferation rate $\gamma$: $\alpha_1 {+\beta} > \gamma $.
	\end{hypothesis}

 \section{Model analysis}\label{model_analysis}
		In this section we analyze  the mean extinction time of the precursor cell population and the number of proliferative cells at extinction. {We start in subsection \ref{part-analyticalexp} by recalling some analytical formulas for model ($\mathcal{R}_1,\mathcal{R}_3,\mathcal{R}_4$) (when $\beta=0$, linear rate functions). Then, in subsection \ref{subsec:upperbound}, we deduce a necessary and sufficient condition to ensure that the mean of $\cextin$ is finite for the complete model \eqref{Model_FC}, by finding a lower and an upper-bound thanks to a coupling argument. In subsection \ref{part-general-case}, we finally use the upper-bound in a finite-state projection algorithm to obtain an efficient way to simulate the means of $\taup$ and $\cextin$, and numerically investigate the role of the feedback rate $\beta$ on their values.}
		
		To simplify the proofs, we will consider in the following that the initial condition is a deterministic value $f_0 \in \mathbb{N}^*$. All the proofs can be generalized to the random $ F_0 $ case by conditioning by the law of $F_0$. 
		
			\subsection{Analytical expressions in the linear case, model ($\mathcal{R}_1,\mathcal{R}_3,\mathcal{R}_4$).}\label{part-analyticalexp}
			 	When $\beta=0$, process $X$ is linear, and we can compute the law of the extinction time. In the case when, in addition, $\alpha_2=0$ and/or $\gamma=0 $, the mean of $\cextin$ can also be computed. \\
			 	In this subsection we will write $X^L_t = (F^L_t,C^L_t)  $ the solution of SDE \eqref{eq-SDE} when $ \beta = 0 $ and $\taulinear $ the associated extinction time of the population $F^L_t $:
			 	\begin{equation}\label{eq-def-tau-lin}
			 			\taulinear := \inf \{  t; \quad F^L_t = 0 | f_0 \} \, .
			 	\end{equation} 
			 	
			Note that process $F^L$ is independent of process $C^L $. The jumping times $T_k $ of $F^L$, for all $k \in \llbracket 0, f_0 -1 \rrbracket $, are given by
			\begin{equation}\label{eq-jump-time-FL}
				T_{k+1} := T_k + \mathcal{E}\left(\alpha_1(f_0 - k) \right),
			\end{equation}
			with $T_0 = 0$ by convention, {and $\mathcal{E}(\lambda)$ denotes an exponential random variable of mean $1/\lambda$}. Note that $T_{f_0} = \taulinear$.
			
			\begin{proposition}[$F^L_t $ and $\taulinear $ laws]\label{prop-loi-fLT}
				Under Hypothesis \ref{hyp-rate1} and for all $t \geq 0$,
				$ F^L_t$ follows a binomial law with parameters $(f_0,e^{-\alpha_1 t})$, and the extinction time $\taulinear $, defined by formula \eqref{eq-def-tau-lin}, follows a generalized Erlang law (or hypo-exponential law) of density:
				\begin{equation*}%\label{eq-densite-tau-lin}
					f_{\taulinear}(t) = \alpha_1 f_0 e^{- \alpha_1t}(1- e^{- \alpha_1 t})^{f_0 - 1} \mathds{1}_{[0, +\infty[}(t),
				\end{equation*}
				such that $\mathbb{E}\left[ \taulinear \right] =  \displaystyle \frac{1}{\alpha_1} \sum_{k = 1} \frac{1}{k} $.
			\end{proposition}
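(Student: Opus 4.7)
The strategy is to exploit the fact that when $\beta=0$, the precursor dynamics $F^L$ is a pure death process independent of $C^L$, with per-individual death rate $\alpha_1$. The essential tool is the classical ``tagging'' (or thinning) argument: because the jump rate at state $F^L=n$ is $\alpha_1 n$, one can assign to each of the $f_0$ initial precursor cells its own independent exponential clock $S_1,\dots,S_{f_0}\sim\mathcal{E}(\alpha_1)$, so that the $i$-th cell transitions at time $S_i$. Rigorously, this follows either from the superposition property of independent Poisson processes applied to \eqref{eq-SDE}, or from the memoryless property of exponential inter-jump times combined with $T_{k+1}-T_k\sim\mathcal{E}(\alpha_1(f_0-k))$.

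Given this individual representation, the first two claims are immediate. One writes
\begin{equation*}
F^L_t=\sum_{i=1}^{f_0}\mathds{1}_{\{S_i>t\}},
\end{equation*}
an independent sum of Bernoulli variables with common success probability $\mathbb{P}(S_i>t)=e^{-\alpha_1 t}$. Hence $F^L_t\sim\mathrm{Bin}(f_0,e^{-\alpha_1 t})$. Since $\tau_L=\inf\{t:F^L_t=0\}=\max_{i\le f_0}S_i$, its cumulative distribution function is
\begin{equation*}
\mathbb{P}(\tau_L\le t)=\prod_{i=1}^{f_0}\mathbb{P}(S_i\le t)=(1-e^{-\alpha_1 t})^{f_0},\quad t\ge 0,
\end{equation*}
and differentiating yields the claimed density $f_{\tau_L}(t)=\alpha_1 f_0\,e^{-\alpha_1 t}(1-e^{-\alpha_1 t})^{f_0-1}\mathds{1}_{[0,+\infty[}(t)$.

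For the expectation, rather than integrating the density, I would use the representation of $\tau_L$ as the sum of the $f_0$ inter-jump times \eqref{eq-jump-time-FL}. Setting $j=f_0-k$, the increments $T_{k+1}-T_k\sim\mathcal{E}(\alpha_1(f_0-k))$ have expectation $1/(\alpha_1 j)$, so that by telescoping and Fubini
\begin{equation*}
\mathbb{E}[\tau_L]=\sum_{k=0}^{f_0-1}\mathbb{E}[T_{k+1}-T_k]=\sum_{k=0}^{f_0-1}\frac{1}{\alpha_1(f_0-k)}=\frac{1}{\alpha_1}\sum_{j=1}^{f_0}\frac{1}{j},
\end{equation*}
recovering the stated harmonic sum (the upper bound $f_0$ is missing in the displayed formula). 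There is no genuine obstacle in this proof: the only subtle point is to justify carefully that the linear death rate $\alpha_1 F^L$ is equivalent to independent exponential lifetimes, which is standard but deserves to be stated either via Poisson thinning on \eqref{eq-SDE} or by induction on the jump chain using the memoryless property.
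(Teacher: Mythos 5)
Your proof is correct, but it takes a genuinely different route from the paper's. The paper establishes the binomial law of $F^L_t$ by writing the forward Kolmogorov (master) equation for the pure death process and solving it by recurrence, and then obtains the density of $\taulinear$ by writing the generic hypoexponential density of the sum $\sum_{k}\mathcal{E}(\alpha_1(f_0-k))$ (a sum over $i$ of exponentials weighted by partial-fraction coefficients) and simplifying it combinatorially, via the binomial theorem, down to $\alpha_1 f_0 e^{-\alpha_1 t}(1-e^{-\alpha_1 t})^{f_0-1}$. You instead use the individual-based (tagging/thinning) representation: independent lifetimes $S_i\sim\mathcal{E}(\alpha_1)$ give the binomial law of $F^L_t$ immediately, and the R\'enyi-type identity $\taulinear=\max_i S_i$ gives the density by differentiating the product CDF $(1-e^{-\alpha_1 t})^{f_0}$, bypassing the paper's combinatorial simplification entirely. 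Both compute $\mathbb{E}[\taulinear]$ identically by telescoping the inter-jump times. Your route is shorter and more probabilistically transparent, provided you do justify (as you indicate, via Poisson superposition or the memoryless property applied to the embedded jump chain) that the linear death rate is equivalent to independent exponential clocks; the paper's route is more mechanical but exhibits explicitly the hypoexponential structure of $\taulinear$ as a sum of independent exponentials with distinct rates, which is the form it reuses later (e.g.\ in Proposition \ref{prop-esp-tilde} and in the coupling arguments where only the rates change). Your remark that the upper summation bound $f_0$ is missing in the displayed formula for $\mathbb{E}[\taulinear]$ is accurate.
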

			{The proof of Proposition \ref{prop-loi-fLT} is standard and given in Appendix \ref{ssec:proof_linear} for the {reader convenience}.}
		
		We now study the mean number of proliferative cells at the extinction time. {We first decompose process $C_t^L$ as a sum of elementary processes.}
 {We introduce the following binary branching process}
			\begin{equation}\label{eq:Yule}
				C^{0,0}_t = 1 + \mathcal{Y}\left( \gamma  \int_{0}^{t} C^{0,0}_s ds\right),
			\end{equation}
			where $\mathcal{Y}$ is a Poisson process. {This process is often referred as the Yule process \cite[Chap. 8]{Bailey}}. 
						We {then} define the stochastic processes $C^{k,j}$, for $(k,j) \in \mathbb{N} \times \mathbb{N} $, as independent and identically distributed Yule processes. {Each process $C^{k,j}$ represents the cell population that arises from the successive symmetric divisions starting from a single newly transitioned proliferative cell. We thus refer the $C^{k,j}$ processes as "cell lineages".} {Process $ C^ L_t $ is a branching process with immigration driven by cell events $\mathcal{R}_1$ and $\mathcal{R}_3$, } 
			it can {indeed} be written as the sum of the cell lineages $C^{k,j} $ (illustrated in Figure \ref{fig:branchingillustration}, {such lineage decomposition goes back at least to clonal cell population size studies like \cite{Luria_43}}): for all $t \geq 0$,
			\begin{equation}\label{eq-decomp-Yule}
				 C^L_t = \underbracket{\sum_{k = 1}^{F_0}  C^{k,0}_{t - T_k^0} \mathds{1}_{t \geq T_k^0}}_{\text{ cell lineages generated by cell event } \mathcal{R}_1 } + \underbracket{\sum_{k = 0}^{F_0 - 1} \sum_{j = 1}^{N_k(t)} C^{k,j}_{t - T_k^j} \mathds{1}_{t \geq T_k^j}}_{ \text{cell lineages generated by cell event } \mathcal{R}_3 },
			\end{equation}
			where we define, for all $k \in \llbracket 1, f_0 \rrbracket$,
			\begin{itemize}
				\item $T^0_k:= T_k $ (with $T_k$ given by equation \eqref{eq-jump-time-FL}), the $k$-th jumping time of cell event $\mathcal{R}_1$. 
				\item $N_k(t)$, the number of occurrences of cell event $ \mathcal{R}_3 $ between $T_k$ and $T_{k+1}$, for $t \geq T_k$. Note that 
				\begin{equation}\label{eq-N_k-formula}
					\displaystyle N_k(t) = \mathcal{Y}_3 \left( \alpha_2 \int_{0}^{t \wedge T_{k+1}}  F^L_s ds \right) - \mathcal{Y}_3 \left( \alpha_2 \int_{0}^{T_{k}}  F^L_s ds \right).
				\end{equation}
				\item for all $j \in \llbracket 1, N_k(t) \rrbracket $,  
				\begin{equation}\label{eq-T-jk}
					T^j_k := T_k^{j-1} + \mathcal{E}\left(\alpha_2 (f_0 -k)\right),
				\end{equation} 
				the $j$-th jumping time of the cell event $\mathcal{R}_3$ occurring between random times $T_{k}$ and $T_{k + 1} $. {Hence
				\begin{equation*}
				    N_k(t)=\sum_j 1_{t<\min(T_k^j,T_{k+1})}\,.
				\end{equation*}
				}
			\end{itemize}
			\begin{figure}[h!]
				\centering
				\includegraphics[width=0.85\linewidth]{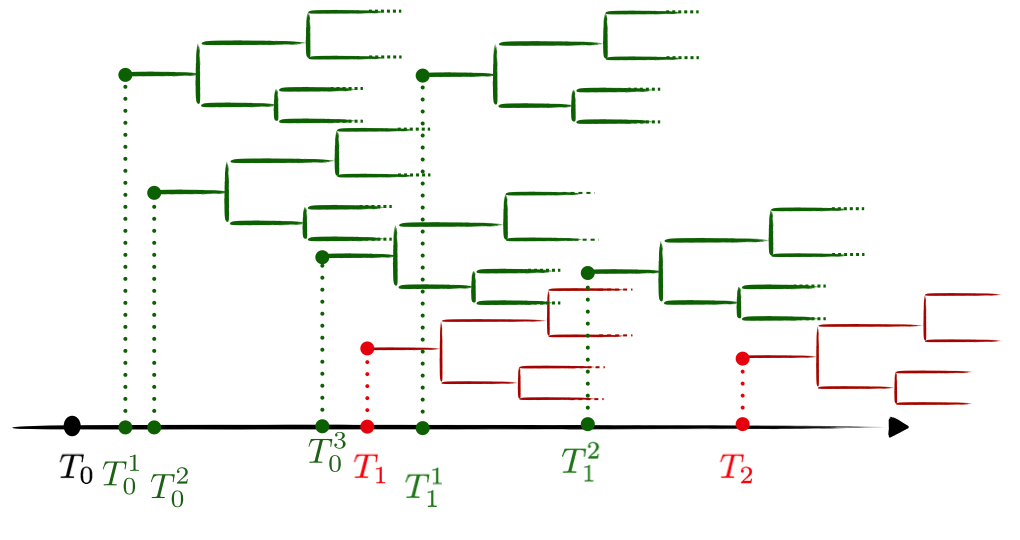}
				\caption{\textbf{Jumping times and cell lineages.} Each cell lineage represents schematically the random process $C^{k,j}$, arising either from the cell event $\mathcal{R}_1$ (green trees) or $ \mathcal{R}_2 $ (red trees), for the linear version of model $\mathcal{M}_{FC} $ (submodel $ (\mathcal{R}_1,\mathcal{R}_3,\mathcal{R}_4)$). For all $ k \in \llbracket 0, f_0 \rrbracket$, the random times $T_k$ are defined by equation \eqref{eq-jump-time-FL} and, for all $j \in \llbracket 1, N_k(t)  \rrbracket $ where $N_k(t)$ is given by equation \eqref{eq-N_k-formula}, the random times $T^j_k $ are defined by equation \eqref{eq-T-jk}. The times of the subsequent symmetric division events following times $T^j_k$ and $T_k$ are represented at arbitrary time points. }
				\label{fig:branchingillustration}
			\end{figure}
			
			{We use the sum defined in Eq. \eqref{eq-decomp-Yule} to obtain a necessary and sufficient condition to ensure that the mean of $C^L_{\taulinear}$ is finite, and to obtain analytical formulas for the submodels $(\mathcal{R}_1,\mathcal{R}_3)$ ($\beta=\gamma=0$) and $(\mathcal{R}_1,\mathcal{R}_4)$ ($\beta=\alpha_2=0$).}
			\begin{proposition}[First moment of $C^L_{\taulinear}$]\label{prop-Ce-linear}
			{Under Hypothesis \ref{hyp-rate1} we have that
			\begin{equation*}
					\mathbb{E}\left[ C^{L}_{\taulinear}  \right] <\infty
			\end{equation*}
			if, and only if, Hypothesis \ref{hyp-rate2} holds.
			Moreover,}
				\begin{enumerate}
					\item  {For the submodel $(\mathcal{R}_1,\mathcal{R}_3)$ ($\gamma=0$), we have}
					\begin{equation*}
					\mathbb{E}\left[ C^{L}_{\taulinear}  \right] = f_0 (1 + \frac{\alpha_2}{\alpha_1}).
					\end{equation*}
					\item {For the submodel $(\mathcal{R}_1,\mathcal{R}_4)$ ($\alpha_2=0$), we have 
					\begin{equation*}
					   \mathbb{E}\left[ C^{L}_{\taulinear} \right] = 1+\sum_{k=1}^{f_0-1} \frac{(f_0-k)!}{\left((f_0-k)-\frac{\gamma}{\alpha_1}\right)!}
					\end{equation*}
					 where we used notation $(m-x)!=\prod_{i=1}^{m}(i-x)$ for $m\in \mathbb{N}^*$ and $x\in[0,1)$.}
				\end{enumerate}
			\end{proposition}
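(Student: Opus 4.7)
The plan is to take expectations term by term in the lineage decomposition \eqref{eq-decomp-Yule}, exploiting the independence of the Yule subpopulations $C^{k,j}$ from the jump structure $(T_k,T_k^j)$ and the explicit law of the exponential inter-jump times established in Proposition \ref{prop-loi-fLT}. First I would record the standard fact that a Yule process of rate $\gamma$ started at $1$ satisfies $\mathbb{E}[C^{0,0}_t]=e^{\gamma t}$, obtained from equation \eqref{eq:Yule} by Gronwall / optional sampling applied to \eqref{MartingaleProblem_eq}. Evaluating \eqref{eq-decomp-Yule} at $t=\taulinear=T_{f_0}$ removes all indicators (all $T_k^j\le\taulinear$), and conditioning on the jump times gives
\begin{equation*}
\mathbb{E}\bigl[C^L_{\taulinear}\bigr]=\sum_{k=1}^{f_0}\mathbb{E}\bigl[e^{\gamma(\taulinear-T_k)}\bigr]+\sum_{k=0}^{f_0-1}\mathbb{E}\Bigl[\sum_{j=1}^{N_k(\taulinear)}e^{\gamma(\taulinear-T_k^j)}\Bigr].
\end{equation*}

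For the first sum I would use the telescoping $\taulinear-T_k=\sum_{i=1}^{f_0-k}\mathcal{E}(\alpha_1 i)$ with mutually independent exponentials (a consequence of \eqref{eq-jump-time-FL}) and the elementary moment-generating identity $\mathbb{E}[e^{\gamma\mathcal{E}(\lambda)}]=\lambda/(\lambda-\gamma)$ valid only for $\gamma<\lambda$, to obtain $\mathbb{E}[e^{\gamma(\taulinear-T_k)}]=\prod_{i=1}^{f_0-k}\alpha_1 i/(\alpha_1 i-\gamma)$. For the second sum, conditionally on $(T_k,T_{k+1})$ the process $F^L$ is constant equal to $f_0-k$, so by \eqref{eq-N_k-formula} the $T_k^j$'s form the arrival times of a homogeneous Poisson process of rate $\alpha_2(f_0-k)$ on $[T_k,T_{k+1}]$, independent of the family $\{C^{k,j}\}_j$. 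A Campbell-type identity then yields $\mathbb{E}\bigl[\sum_j e^{\gamma(\taulinear-T_k^j)}\bigm|T_k,T_{k+1},\taulinear\bigr]=\alpha_2(f_0-k)\int_{T_k}^{T_{k+1}}e^{\gamma(\taulinear-s)}\,ds$, and a further expectation over the jump law finishes the general formula.

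The finiteness equivalence follows by inspecting the factor $i=1$ in the product above: if $\gamma\ge\alpha_1$ then already $\mathbb{E}[e^{\gamma(\taulinear-T_{f_0-1})}]=\mathbb{E}[e^{\gamma\mathcal{E}(\alpha_1)}]=+\infty$, and since $C^L_{\taulinear}\ge C^{1,0}_{\taulinear-T_1}$ pointwise, the full expectation is infinite (for $f_0\ge 2$; the case $f_0=1$ is trivial as $C^L_{\taulinear}=1$ a.s.). Conversely, if $\gamma<\alpha_1$ then every factor in every product is finite, and a straightforward term-by-term bound shows that both sums above are finite, giving $\mathbb{E}[C^L_{\taulinear}]<\infty$.

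For the two submodels, the specialization is almost immediate: when $\gamma=0$ the Yule processes collapse to $C^{k,j}\equiv 1$, so the first sum contributes $f_0$ and the second contributes $\sum_{k=0}^{f_0-1}\mathbb{E}[N_k(\taulinear)]=\sum_{k=0}^{f_0-1}\alpha_2(f_0-k)\mathbb{E}[T_{k+1}-T_k]=f_0\alpha_2/\alpha_1$, recovering $f_0(1+\alpha_2/\alpha_1)$. When $\alpha_2=0$ the second sum vanishes, the $k=f_0$ term in the first sum is $1$, and the remaining products simplify through the convention $(m-x)!=\prod_{i=1}^{m}(i-x)$ to $(f_0-k)!/\bigl((f_0-k)-\gamma/\alpha_1\bigr)!$, giving the announced formula. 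The main technical obstacle is the rigorous justification of the Fubini/Campbell exchange in the second sum under random stopping at $\taulinear$; I would handle this by first truncating $t$ to a deterministic horizon $T$, applying Fubini to the filtered expectation, and then letting $T\to\infty$ using monotone convergence together with the finiteness bound established above.
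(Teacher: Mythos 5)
Your proposal is correct and follows essentially the same route as the paper: both take expectations in the lineage decomposition \eqref{eq-decomp-Yule}, use $\mathbb{E}[C^{k,j}_t]=e^{\gamma t}$ conditionally on the (independent) jump times, and obtain the necessity by lower-bounding with a single lineage; your product-of-exponential-MGFs evaluation of $\mathbb{E}[e^{\gamma(\taulinear-T_k)}]$ and the Campbell identity for the $\mathcal{R}_3$ arrivals are equivalent reformulations of the paper's Beta-function integral against the generalized Erlang density and its domination of $N_k(t)$ by $\mathcal{Y}_3(\alpha_2 f_0 t)$ combined with Wald's equation. The only slip is the parenthetical claim that $C^L_{\taulinear}=1$ a.s.\ when $f_0=1$, which holds only if in addition $\alpha_2=0$ (the same edge case the paper itself sets aside), and this does not affect the rest of the argument.
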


{The proof of Proposition \ref{prop-Ce-linear} is classical and is given in Appendix \ref{ssec:proof_linear} for the reader convenience.}
			\begin{remarque}
				A simple analytical formula cannot be obtained for the first moment of $C^L_{\taulinear}$ for submodel ($\mathcal{R}_1,\mathcal{R}_3,\mathcal{R}_4$) since it is tricky to deal with expectation in the second term of relation \eqref{eq-decomp-Yule}.
			\end{remarque}

		\subsection{Lower and upper bounds of the nonlinear model \eqref{Model_FC}}\label{subsec:upperbound}
				In the general case, we cannot obtain analytical expressions for $\taup$, and we will rather use numerical simulations. To control the numerical error, we need tractable bounds of the stochastic model introduced in Eq. \eqref{eq-SDE}, which are obtained in this subsection. {We first note that all moments of $\taup$ are unconditionally finite, as $F_t$ decreases by one at rate at least $\alpha_1 F_t$, so that $\taup$ is stochastically dominated by $\taulinear$ given in Proposition \ref{prop-loi-fLT}. Our main result is a necessary and sufficient condition to obtain finite moments for $\cextin$.}
{\begin{theorem}\label{thm_borne_cns}
	For any $p\geq 1$, we have, under Hypothesis \ref{hyp-rate1}, that
		\begin{equation*}
		\mathbb{E}\left[\left(\cextin\right)^p \right]<\infty\,,
		\end{equation*}
	if, and only if
		\begin{equation}\label{eq:hyp_p_gamma_cns}
		p\gamma<\alpha_1+\beta\,.
		\end{equation}
		In particular, for $p=1$, $\mathbb{E}\left[\left(\cextin\right)\right]<\infty$ if and only if Hypothesis \ref{hyp-rate2} holds.
\end{theorem}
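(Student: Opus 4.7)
The plan is to prove the two implications separately: necessity via an explicit tail estimate coming from a Yule coupling, and sufficiency via a Foster--Lyapunov supermartingale.

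For necessity, assume $p\gamma\geq\alpha_1+\beta$. Let $T_{f_0-1}<T_{f_0}=\tau$ be the last two transition times of $F$, and set $U:=\tau-T_{f_0-1}$. On $[T_{f_0-1},\tau)$ the process $F$ equals $1$, so the rate at which $F$ jumps to $0$ equals $\alpha_1+\beta C_t/(1+C_t)\leq\alpha_1+\beta$. Since this upper bound holds pointwise along the trajectory, conditioning on any sub-$\sigma$-algebra $\mathcal{G}$ yields the deterministic lower bound $\mathbb{P}(U>u\mid\mathcal{G})\geq e^{-(\alpha_1+\beta)u}$. Because $C_{T_{f_0-1}}\geq 1$, the symmetric-division sub-clocks attached to one fixed cell at $T_{f_0-1}$ generate a rate-$\gamma$ Yule process $Y$ with $Y_0=1$ embedded pathwise in $C$, so $C_\tau\geq Y_u$ on $\{U>u\}$. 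Using the exact geometric tail $\mathbb{P}(Y_u>n)=(1-e^{-\gamma u})^n$ and optimizing by the choice $u=\log(n)/\gamma$, we obtain
\begin{equation*}
\mathbb{P}(C_\tau>n)\geq\mathbb{P}(Y_u>n)\,e^{-(\alpha_1+\beta)u}\geq c\,n^{-(\alpha_1+\beta)/\gamma}
\end{equation*}
for some $c>0$ and all $n$ large. The tail integral
\begin{equation*}
\mathbb{E}[C_\tau^p]\geq p\int_{n_0}^\infty m^{p-1}\mathbb{P}(C_\tau>m)\,dm\geq c p\int_{n_0}^\infty m^{p-1-(\alpha_1+\beta)/\gamma}\,dm
\end{equation*}
diverges precisely when $p\gamma\geq\alpha_1+\beta$.

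For sufficiency, assume $p\gamma<\alpha_1+\beta$, and introduce the Lyapunov function $g(f,c):=h(f)(c+a)^p$, with $h(0):=1$, $h(f):=h(f-1)\cdot(\alpha_1+\beta)f/[(\alpha_1+\beta)f-p\gamma]$ for $f\geq 1$, and a constant $a\geq f_0$ to be chosen later. The recursion is well-defined and yields a positive finite sequence $h$ precisely under the hypothesis. Expanding $\mathcal{L}g(f,c)$ via $(c+a+1)^p-(c+a)^p=p(c+a)^{p-1}+O((c+a)^{p-2})$, the leading-order $(c+a)^p$-coefficient cancels as $c\to\infty$ by construction of $h$; the residual contributions involve only the correction $\beta f^2/[(\alpha_1+\beta)(f+c)]$ (the gap between $\alpha_1 f+\beta fc/(f+c)$ and $(\alpha_1+\beta)f$) and lower-order powers of $(c+a)$, both of which can be bounded by $K(1+f)$ after taking $a$ large enough (the case $p=1$ can be checked explicitly, where $a\geq f_0$ already suffices). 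Having established $\mathcal{L}g(f,c)\leq K(1+f)$ on $\mathcal{S}$, Dynkin's formula applied to $M^g_{t\wedge\tau_N}$ with $\tau_N:=\tau\wedge N$, combined with Fatou's lemma for the non-negative variable $g(X_{\tau_N})$ and the bound $\mathbb{E}[\tau]\leq\mathbb{E}[\tau_L]<\infty$ from Proposition~\ref{prop-loi-fLT}, gives
\begin{equation*}
\mathbb{E}[(C_\tau+a)^p]=\mathbb{E}[g(X_\tau)]\leq g(f_0,0)+K(1+f_0)\mathbb{E}[\tau]<\infty,
\end{equation*}
whence $\mathbb{E}[C_\tau^p]<\infty$.

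The main obstacle I foresee is the uniform bookkeeping in the Lyapunov verification: the recursion on $h$ only kills the leading $(c+a)^p$-coefficient in the asymptotic regime $c\to\infty$, so the sub-leading cross terms between the $\beta fc/(f+c)$-nonlinearity and the expansion of $(c+a+1)^p$ must be controlled uniformly on $\mathcal{S}$, which constrains the choice of $a$. A secondary technicality is the unboundedness of $g$ in $c$, which requires localization by $\tau_N$ and an application of Fatou's lemma to pass to the limit; this part is routine once $\mathbb{E}[\tau]<\infty$ is in hand.
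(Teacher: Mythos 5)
Your necessity argument is, in substance, the paper's own: the paper likewise lower-bounds $C_{\taup}$ by a rate-$\gamma$ Yule process run up to an exponential time of parameter $\alpha_1+\beta$ (the survival time of the last precursor cell, whose transition hazard never exceeds $\alpha_1+\beta$), and then concludes via the exact moment computation of Proposition~\ref{prop-esp-tilde} with $n=1$ rather than via your explicit geometric-tail estimate $\mathbb{P}(C_\tau>n)\gtrsim n^{-(\alpha_1+\beta)/\gamma}$; the two computations are equivalent. One point you omit and the paper states explicitly: the ``only if'' direction is false in the degenerate case $f_0=1$, $\alpha_2=0$, where $C_\tau=1$ deterministically. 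Your step ``$C_{T_{f_0-1}}\geq 1$'' is precisely what breaks there, so this case must be excluded.

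Your sufficiency argument is a genuinely different route. The paper builds a concrete dominating pair $(F^{n},C^{n})$ (Proposition~\ref{prop-couplage}): after the first time $\tau_n$ at which $C$ reaches level $n$, $F$ is dominated by a pure-death chain of rate $\bigl(\alpha_1+\beta\frac{n}{1+n}\bigr)f$ and $C$ by an independent linear birth process with constant immigration $(\alpha_1+\beta+\alpha_2)f_0$, whose moments at the death chain's absorption time are computed exactly in Proposition~\ref{prop-esp-tilde}; taking $n$ large enough that $p\gamma<\alpha_1+\beta\frac{n}{1+n}$ concludes. Your Foster--Lyapunov function $h(f)(c+a)^p$ with $h(f)=\prod_{j=1}^{f}\frac{(\alpha_1+\beta)j}{(\alpha_1+\beta)j-p\gamma}$ encodes the same quantity (it is exactly the moment factor appearing in Proposition~\ref{prop-esp-tilde}) without constructing the coupling, which is more self-contained; what it loses is the explicit, computable upper bound that the paper reuses to control the truncation error of its numerical scheme (Proposition~\ref{prop-erreur}). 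Two points in your verification are genuinely incomplete. First, when $\gamma=0$ and $p\geq 2$ the recursion makes $h$ constant and
\begin{equation*}
\mathcal{L}g(f,c)=\Bigl[(\alpha_1+\alpha_2)f+\beta\tfrac{fc}{f+c}\Bigr]\bigl[(c+a+1)^p-(c+a)^p\bigr]\sim p(\alpha_1+\alpha_2+\beta)f\,(c+a)^{p-1},
\end{equation*}
which is \emph{not} bounded by $K(1+f)$: there is no $-\gamma a p\,h(f)(c+a)^{p-1}$ term to absorb the $f$-driven birth terms, so your ``take $a$ large enough'' only works for $\gamma>0$ and the case $\gamma=0$, $p\geq 2$ must be handled separately (it is easy, since then $C_\tau$ is dominated by $f_0$ plus a Poisson count, but it must be said). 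Second, the localization has to be spatial, e.g.\ $\sigma_N=\inf\{t:\,C_t\geq N\}$, not merely temporal as in your $\tau\wedge N$: the function $g(X_s)$ is unbounded on any finite time interval, so the stopped process is only a true martingale after cutting off in $c$; with that change, Dynkin's formula plus Fatou goes through as you describe.
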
}

		{The main step in the proof of Theorem \ref{thm_borne_cns} is Proposition \ref{prop-couplage}, which provides us with a concrete upper bound usable to control the numerical error of our algorithm. We first need the following definitions to set up the upper-bound process:
		\begin{definition}[truncated extinction time]\label{def-trunc-ext}
	Let $X_t := (F_t,C_t)$, for all $t\geq 0$, the solution of \eqref{eq-SDE}. For any $n\geq 1$, we define
	\[\tau_n=\min\left( \inf\{t\geq 0\,, C_t\geq n\}\,, \taup \right)\,,\]
\end{definition}
		\begin{definition}[upper-bound process]\label{def-gene-sup}
 For any $n\geq 1$, we define the following infinitesimal generators
\begin{align*}
	\mathcal{L}^{n}_F \phi(f) = \left(\alpha_1 +\beta\frac{n}{1+n}\right) f \left[ \phi(f-1) -   \phi(f) \right], \\		\mathcal{L}_C \phi(c) = \left[ (\alpha_1 + \beta + \alpha_2)f_0  + \gamma c \right] \left[ \phi(c+1) -   \phi(c) \right]\,,
\end{align*}
for any $\phi$ bounded on $\mathbb N$, and any $f,c\in \mathbb N$.
	\end{definition}
}
{
\begin{proposition}[Upper-bound]\label{prop-couplage}
Let $X_t := (F_t,C_t)$, for all $t\geq 0$, the solution of \eqref{eq-SDE}. For any $n\geq 1$, there exists a couple $(F^{n},C^{n})$ such that, for any $t<\tau_n$
\begin{equation}\label{eq:cond_init_upperbound}
F_t^{n}=f_0\,,\quad C_t^{n}=n\,
\end{equation}
and $(F_{t+\tau_n}^{n},C_{t+\tau_n}^{n})_{t\geq 0}$ is a continuous time Markov chain of generators \\
$(\mathcal{L}^{n}_F,\mathcal{L}_C)$ satisfying, 
\begin{equation}\label{eq:coupling_bound}
\cextin \leq C_{\tau_n+\overline{\tau}_n}^{n}\,.
\end{equation}
where
\begin{equation}\label{eq:tausup}
   \overline{\tau}_n:=\inf \{  t>0; \quad F_{t+\tau_n}^{n} = 0 | f_0 \}\,.
\end{equation}
\end{proposition}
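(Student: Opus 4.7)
The plan is to construct both processes $(F,C)$ and $(F^n,C^n)$ on a common probability space via Poisson random measures, and verify the inequality $\cextin \leq C^n_{\tau_n+\overline{\tau}_n}$ pathwise. First I would dispose of the trivial case $\tau_n=\tau$: then $F$ extinguishes before $C$ ever reaches $n$, so $\cextin < n = C^n_{\tau_n} \leq C^n_{\tau_n+\overline{\tau}_n}$ by monotonicity of $C^n$. Henceforth assume $\tau_n<\tau$; the unit-jump convention gives $C_{\tau_n}=n$ and $F_{\tau_n}\in\{1,\ldots,f_0\}$, and one has $F_t\geq 1$, $C_t\geq n$ throughout $[\tau_n,\tau)$.

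For the construction of $C^n$ I would use a standard thinning/domination argument. The total $C$-increment rate of the original process is bounded above by
\[
\alpha_1 F_t + \beta\frac{F_t C_t}{F_t+C_t} + \alpha_2 F_t + \gamma C_t \;\leq\; (\alpha_1+\beta+\alpha_2)\, f_0 + \gamma C_t,
\]
using $F_t\leq f_0$ and $F_tC_t/(F_t+C_t)\leq F_t$. An enlarged Poisson measure with intensity $(\alpha_1+\beta+\alpha_2)f_0+\gamma C^n_t$, into which each of the four original reactions is thinned, then drives $C^n$: this gives $C^n$ the prescribed generator $\mathcal{L}_C$ and ensures $C_t\leq C^n_t$ for every $t\geq\tau_n$.

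The delicate part is constructing $F^n$ with generator $\mathcal{L}^n_F$ so that $\overline{\tau}_n \geq \tau-\tau_n$ almost surely. The key pointwise inequality I would exploit is
\[
\frac{F_t C_t}{F_t+C_t} \;=\; \frac{1}{1/F_t+1/C_t} \;\geq\; \frac{n}{n+1} \qquad\text{for } F_t\geq 1,\ C_t\geq n,
\]
which follows from $1/F_t+1/C_t\leq 1+1/n=(n+1)/n$. This bounds the $F$-decrement rate from below by $\alpha_1 F_t+\beta n/(n+1)$ on $[\tau_n,\tau)$; the constant $\beta n/(n+1)$ is precisely the value appearing in $\mathcal{L}^n_F$, which explains that specific choice of rate. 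Concretely, I would tag each $\mathcal{R}_1$- or $\mathcal{R}_2$-firing of the original process with a forced $F^n$-decrement (while $F^n>0$), supplementing with independent exponential clocks drawn from a fresh Poisson source to realize the target generator $\mathcal{L}^n_F$; together with $F^n_{\tau_n}=f_0\geq F_{\tau_n}$, this yields $\overline{\tau}_n\geq\tau-\tau_n$ almost surely. The main obstacle is that the per-cell rate $\alpha_1+\beta n/(n+1)$ of $\mathcal{L}^n_F$ is not uniformly dominated by the per-cell rate $\alpha_1+\beta C_t/(F_t+C_t)$ of the original $F$-process (the latter is smaller exactly when $C_t<nF_t$), so the verification of $\overline{\tau}_n\geq\tau-\tau_n$ must be done at the aggregate level, exploiting the lower bound above, rather than cell by cell.

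Once the two couplings are in place, the conclusion follows immediately from the monotonicity of $C^n$ combined with $\tau\leq\tau_n+\overline{\tau}_n$: $\cextin \leq C^n_\tau \leq C^n_{\tau_n+\overline{\tau}_n}$.
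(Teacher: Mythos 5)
Your overall architecture is the same as the paper's: dispose of the event $\tau_n=\tau$, couple $C\leq C^n$ by thinning against the bounded birth rate $(\alpha_1+\beta+\alpha_2)f_0+\gamma C$, and reduce the conclusion to $C_t\leq C^n_t$ together with $\tau-\tau_n\leq\overline{\tau}_n$. The first two steps are fine. The genuine gap is the third: the almost-sure inequality $\overline{\tau}_n\geq\tau-\tau_n$ does not follow from your construction, and in fact it cannot hold under \emph{any} coupling for general parameters. After $\tau_n$ the death rate of $F$ at level $f$ is only bounded below by $\alpha_1 f+\beta\frac{nf}{n+f}$, whereas $F^{n}$ dies at rate $\alpha_1 f+\beta\frac{n}{n+1}f$; since $\frac{nf}{n+f}<\frac{nf}{n+1}$ for $f\geq 2$, the comparison runs the wrong way exactly when the two processes sit at a common level $f\geq 2$. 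Your "aggregate" bound $\frac{FC}{F+C}\geq\frac{n}{n+1}$ is correct but only gives $F$ a total death rate of $\alpha_1 f+\beta\frac{n}{n+1}$, still below the rate of $F^n$ for $f\geq 2$, so moving to the aggregate level does not repair anything. Concretely, take $\alpha_2=\gamma=0$, $\beta=1$, $\alpha_1\to 0^+$, $f_0=10$, $n=2$: then $F+C\equiv 10$, so $\mathbb{E}[\tau-\tau_2]=\sum_{k=1}^{8}\frac{10}{k(10-k)}=H_8+H_9-1\approx 4.55$, while $\mathbb{E}[\overline{\tau}_2]=\frac{3}{2}H_{10}\approx 4.39$ (with $H_m=\sum_{k=1}^m 1/k$); a random variable with larger mean cannot be a.s.\ dominated, so $\mathbb{P}[\tau-\tau_2>\overline{\tau}_2]>0$ whatever the coupling.

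Your concrete construction also pushes in the wrong direction: forcing $F^{n}$ to decrement at every $\mathcal{R}_1$/$\mathcal{R}_2$ firing and then adding fresh clocks makes $F^{n}$ jump down at least as often as $F$, which tends to make $F^{n}$ die \emph{sooner}, not later (and the supplementation is not always feasible, since the forced rate $(\alpha_1+\beta\frac{C}{F+C})F$ can exceed the target rate $(\alpha_1+\beta\frac{n}{n+1})F^{n}$ once the two processes decouple). To be fair, you have put your finger on precisely the fragile point of the paper's own argument, which asserts $\frac{C_t}{F_t+C_t}\geq\frac{n}{1+n}$ from $C_t\geq n$ alone — an inequality that actually requires $C_t\geq nF_t$. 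So you have correctly located the crux but not resolved it; closing the gap would require either a different dominating generator for $F^n$ (e.g.\ total death rate $\alpha_1 f+\beta\frac{n}{n+1}$, which genuinely lower-bounds the death rate of $F$) or a joint coupling of the jumps of $C^{n}$ with those of $F^{n}$ guaranteeing $C^{n}$ exceeds $C_{\tau}$ by the time $F^{n}$ dies; your proposal does neither.
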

}

{The proof of Proposition \ref{prop-couplage} proceeds by a coupling argument between $(F,C)$ and $(F^{n},C^{n})$. The random variable $C_{\tau_n+\overline{\tau}_n}^{n}$ has a finite $p$-moment under \eqref{eq:hyp_p_gamma_cns}, and this moment is analytically tractable, thanks to Proposition \ref{prop-esp-tilde} in Appendix \ref{ssec:proof_linear}.
\begin{proof}[of Proposition \ref{prop-couplage}]
Let $F_t,C_t$ given by equation \eqref{eq-SDE}. 
Let $n\geq 1$ and $(F^{n},C^{n})$ defined on $(0,\taup_n)$ by \eqref{eq:cond_init_upperbound}. 
Clearly, for $t\in (0,\taup_n)$, 
\begin{equation*}
   F_{t}\leq F_{t}^{n}\,,\quad  C_{t} \leq C_{t}^{n}\,.
\end{equation*}
Then, we may choose $(F_{t+\taup_n}^{n}, C_{t+\taup_n}^{n})_{t\in \mathbb R^+}$ such that its infinitesimal generator is given by $(\mathcal{L}^{n}_F,\mathcal{L}_C)$ in Definition \ref{def-gene-sup}, and such that its trajectories satisfy, for any $t\geq 0$,
\[F_{t+\taup_n}\leq F_{t+\taup_n}^{n}\,,\quad C_{t+\taup_n}\leq C_{t+\taup_n}^{n}\,, \quad \textit{a.s.}\]
This is clearly possible as
\begin{item}
\item[$\bullet$] For $t=0$, the initial condition satisfies
\begin{equation*}
   F_{\taup_n}\leq F_{\taup_n}^{n}\,,\quad  C_{\taup_n} \leq C_{\taup_n}^{n}\,.
\end{equation*}
\item[$\bullet$] For any $t> 0$, we can that ensure $F_{t+\taup_n}$ stays below $ F_{t+\taup_n}^{n}$ because
\begin{equation*}
   \left(\alpha_1+\beta\frac{C_{t+\taup_n}}{F_{t+\taup_n} + C_{t+\taup_n}}\right)F_{t+\taup_n}\geq \left(\alpha_1+\beta\frac{n}{1 + n}\right)F_{t+\taup_n}\,,
\end{equation*}
as $C_t$ is non-decreasing.
\item[$\bullet$] For any $t> 0$, we can ensure that $C_{t+\taup_n}$ stays below $ C_{t+\taup_n}^{n}$ because
\begin{equation*}
   (\alpha_1+\alpha_2)F_{t+\taup_n} + \beta\frac{C_{t+\taup_n}}{F_{t+\taup_n} + C_{t+\taup_n}}F_{t+\taup_n}\leq (\alpha_1+\alpha_2+\beta)f_0\,,
\end{equation*}
as $F_t$ is non-increasing.
\end{item}\\
On the event $(\taup_n<\taup)$, we have $C_{t+\taup_n}\leq C_{t+\taup_n}^{n}$ for all times $t\geq 0$, and $0<\taup-\taup_n \leq \overline{\tau}_n$. 
We note that $\overline{\tau}_n$ follows a generalized Erlang law of parameter $\alpha_1+\beta\frac{n}{1+n}$, by straightforward adaptation of Proposition  \ref{prop-loi-fLT}. Hence 
\begin{equation*}
C_{\taup}=C_{\taup-\taup_n+\taup_n}\leq C_{\taup-\taup_n+\taup_n}^{n} \leq C_{\overline{\tau}_n+\taup_n}^{n}\,,
\end{equation*}
as $C^{n}(t)$ is non-decreasing. 
\\
On the event $(\taup_n=\taup)$, we clearly have $C_\taup \leq n \leq C_{\overline{\tau}_n+\taup_n}^{n}$.\\
Combining both cases proves equation \eqref{eq:coupling_bound}.\\
\myspecialendproof
\end{proof}
}
{We now proceed to the proof of Theorem \ref{thm_borne_cns}.
	\begin{proof}[of Theorem \ref{thm_borne_cns}]
Let $p\geq 1$. Condition~\eqref{eq:hyp_p_gamma_cns} implies that there exists (a sufficiently large) $n\geq 1$ such that 
	\begin{equation*}
p\gamma<\alpha_1+\beta\frac{n}{1+n}\,
\end{equation*}
holds true. In particular, for such a $n$, we have
		\[\mathbb{E}\left[\left(C_{\overline{\tau}_n+\tau_n}^{n}\right)^p \right]<\infty\,\]
		by consequence of Proposition \ref{prop-esp-tilde} in Appendix \ref{ssec:proof_linear} (which also provides explicit bounds for $p=1,2$).
We now prove that condition \eqref{eq:hyp_p_gamma_cns} is necessary, excluding the trivial situation where $f_0=1$ and $\alpha_2=0$ (in which case $C_{\taup}=1$). First, note that $\taup$ is stochastically lower-bounded by an exponential random variable of rate $(\alpha_1+\beta)$, because the maximal activation rate of $F$ is $\alpha_1+\beta$. The Yule process in Eq.~\eqref{eq:Yule} provides a lower-bound for $C_{t}$ for times $t$ greater than the first event (given by an exponential random variable of rate $\alpha_1+\alpha_2$). Thus, the Yule process stopped at an exponential time of parameter $\alpha_1+\beta$ provides a lower bound for $C_{\taup}$. We conclude again by Proposition \ref{prop-esp-tilde} in Appendix \ref{ssec:proof_linear} (with $n=1$). 
\myspecialendproof
\end{proof}
}
			
		\subsection{Numerical scheme for the mean extinction time and mean number of proliferative cells at the extinction time}\label{part-general-case}
		
		    {We now have all the ingredients to study} {numerically} {the impact of the model parameters on the mean activation duration of an ovarian follicle (mean extinction time of precursor cells) and the mean number of proliferative cells produced during this phase.} \\
		    {From the martingale problem \eqref{MartingaleProblem_eq}, it is a standard result to} compute the moment of $ \taup$ and $C_\taup$. 	Let the domain $\mathcal{D}$ be defined as 
			\begin{equation*}%\label{def-D}
				\mathcal{D} := \llbracket 1, f_0 \rrbracket \times \mathbb{N}.
			\end{equation*}
{We look for} the value $g(f_0,0) $ where $g$ is solution of
			\begin{equation}\label{g_function}
				\forall (f,c) \in \mathcal{D}, \, \mathcal{L}g(f,c) = \alpha \text{ and }  g(0,c) = g_0(c), \, \forall c \in \mathbb{N} 
			\end{equation}
			where function  $g_0 $ and scalar $\alpha$ are to be chosen according to whether we want to obtain  $\mathbb{E}\left[ \taup \right] $ or $\mathbb{E}\left[ \cextin \right] $. 
				\begin{enumerate}
					\item For $\mathbb{E}\left[ \taup \right] $, we take, for all $c \in \mathbb{N} $, $g_0(c) = 0$ and $\alpha = -1 $.
					\item For $\mathbb{E}\left[ \cextin \right] $, we take, for all $c \in \mathbb{N} $, $g_0(c) = c$ and $\alpha = 0 $. 
				\end{enumerate}
		
			We can notice that system \eqref{g_function}, which is similar to the Kolmogorov backward equation, is unclosed, and there exists no analytical solution. We now obtain a numerical estimate for the scalar $g(f_0,0)$ using a domain truncation method, as proposed in \cite{munsky_finite_2006,kuntz_deterministic_2017}. 
			
			\paragraph{Domain truncation method}
			
			{For $r \in \mathbb{N}^*$, let $\mathcal{D}^r$ be the following truncated domain
			\begin{equation*}
			    \mathcal{D}^r=\llbracket 1, f_0 \rrbracket \times \llbracket 0, r-1 \rrbracket\,.
			\end{equation*}
			Note that the truncated extinction time\footnote{{Although the cut-off $r$ plays a similar role as the index $n$ from section \ref{subsec:upperbound}, we will need two distinct values for the numerical scheme, so that we stick with two different notations, to avoid possible confusion.}} $\tau_r$ defined in Definition \ref{def-trunc-ext} is the first exit time from $\mathcal{D}^r$,
			\begin{equation*}
					\tau_r = \inf \left( t \text{ such that } X_t \notin \mathcal{D}^r \right) .
				\end{equation*}
				As $\mathcal{D}^r \subset \mathcal{D}$, we clearly have $\tau_r \leq \tau$ and consequently  $C_{\tau_r}\leq C_{\taup}$. Also, as $\mathcal{D}^r$ is a strictly increasing sequence of sets such that $\cup_r \mathcal{D}^r = \mathcal{D}$, the upper-bound obtained in Proposition \ref{prop-couplage} will allow us to prove that 
				\[
				\lim_{r\to\infty}\tau_r= \tau\,,
				\]
				and 
				\[
            \lim_{r\to\infty}C_{\tau_r}= C_{\taup}\,,
				\]
				and to control the speed of convergence.
				\begin{proposition}[Domain truncation relative error]\label{prop-erreur}
				Let $p \in \mathbb{N}^*$, such that $\mathbb{E}[(C_{\taup })^p] < \infty  $. Then, we have
				\begin{equation*}
				0 \leq \mathbb{E}\left[ \taup \right] -	\mathbb{E}\left[\tau_r \right] \leq \mathbb{E}\left[\overline{\tau}_r\right]\frac{\mathbb{E}[(C_{\taup })^p]}{r^p} 
				\end{equation*}
				 and 
				\begin{equation*}
				0 \leq  \mathbb{E}\left[C_{\tau}\right] -	\mathbb{E}\left[C_{\tau_{r}} \right] \leq \mathbb{E}\left[C_{\tau_r+\overline{\tau}_r}^{r}-r\right]\frac{\mathbb{E}[(C_{\taup })^p]}{r^p} \,,
				\end{equation*}
				 where $\overline{\tau}_r$ and $C^{r}$ are defined in Proposition \ref{prop-couplage} (see Eqs.~\eqref{eq:cond_init_upperbound}-\eqref{eq:tausup}).
			\end{proposition}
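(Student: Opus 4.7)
The plan is to establish both bounds by conditioning on the event where truncation actually loses information, namely $\{\tau_r < \tau\}$, and then combining the coupling from Proposition \ref{prop-couplage} with Markov's inequality. The lower bounds are immediate: $\tau_r \leq \tau$ by definition of $\tau_r$ as a first exit time from $\mathcal{D}^r \subset \mathcal{D}$, and $C_{\tau_r} \leq C_{\tau}$ because $C$ is non-decreasing (all four cell events $\mathcal{R}_1,\ldots,\mathcal{R}_4$ produce $C \to C+1$ or leave $C$ unchanged). So the content of the proposition lies in the upper bounds.

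For the time bound, I would first write $\tau - \tau_r = (\tau - \tau_r)\mathbf{1}_{\tau_r < \tau}$, since the two random times coincide on the complementary event. On $\{\tau_r < \tau\}$, since $F$ and $C$ only jump by one unit, we have $C_{\tau_r} = r$ and $F_{\tau_r} = f_0 - (\text{number of transitions before } \tau_r)$, and the coupling construction from Proposition \ref{prop-couplage} (with $n=r$) gives exactly $\tau - \tau_r \leq \overline{\tau}_r$, where $\overline{\tau}_r$ is the hitting time of $0$ by the upper-bound process $F^r$ restarted at $f_0$ after time $\tau_r$. Since $F^r_{t+\tau_r}$ evolves after $\tau_r$ as a fresh CTMC with generator $\mathcal{L}^r_F$ driven by Poisson clocks independent of $\mathcal{F}_{\tau_r}$, the random variable $\overline{\tau}_r$ is independent of the $\mathcal{F}_{\tau_r}$-measurable event $\{\tau_r < \tau\}$. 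Therefore
\begin{equation*}
\mathbb{E}\!\left[\tau - \tau_r\right] \;\leq\; \mathbb{E}\!\left[\overline{\tau}_r \mathbf{1}_{\tau_r < \tau}\right] \;=\; \mathbb{E}\!\left[\overline{\tau}_r\right]\mathbb{P}(\tau_r < \tau).
\end{equation*}
Finally, $\{\tau_r < \tau\} \subset \{C_\tau \geq r\}$ (again using that $C$ is non-decreasing), so Markov's inequality applied to $C_\tau^p$ yields $\mathbb{P}(\tau_r < \tau) \leq \mathbb{E}[(C_\tau)^p]/r^p$, and the first estimate follows.

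For the bound on the cell number, the same decomposition gives $\mathbb{E}[C_\tau - C_{\tau_r}] = \mathbb{E}[(C_\tau - C_{\tau_r})\mathbf{1}_{\tau_r < \tau}]$. On $\{\tau_r < \tau\}$, we have $C_{\tau_r} = r$ and, directly from \eqref{eq:coupling_bound} in Proposition \ref{prop-couplage}, $C_\tau \leq C^r_{\tau_r + \overline{\tau}_r}$; hence $C_\tau - C_{\tau_r} \leq C^r_{\tau_r + \overline{\tau}_r} - r$. The increment $C^r_{\tau_r + \overline{\tau}_r} - r = C^r_{\tau_r + \overline{\tau}_r} - C^r_{\tau_r}$ depends only on the fresh Poisson clocks driving $(F^r, C^r)$ after $\tau_r$, so it is independent of $\{\tau_r < \tau\}$, and one concludes exactly as before by Markov's inequality. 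Note that the $p$-integrability of $C^r_{\tau_r + \overline{\tau}_r}$ needed to make the bound non-vacuous is provided by Proposition \ref{prop-esp-tilde} under the assumption $\mathbb{E}[(C_\tau)^p] < \infty$, which via Theorem \ref{thm_borne_cns} gives $p\gamma < \alpha_1 + \beta$.

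The main subtle point, and the only step where something genuinely happens, is the independence claim used to separate $\mathbb{E}[\overline{\tau}_r]$ (respectively $\mathbb{E}[C^r_{\tau_r+\overline{\tau}_r}-r]$) from $\mathbb{P}(\tau_r < \tau)$. This is a direct consequence of the way $(F^r, C^r)$ is constructed after $\tau_r$ in Proposition \ref{prop-couplage}, where the post-$\tau_r$ dynamics of the upper-bound process are a CTMC with generators $(\mathcal{L}^r_F, \mathcal{L}_C)$ started deterministically at $(f_0,r)$, driven by Poisson processes independent of $\mathcal{F}_{\tau_r}$; I would simply invoke the strong Markov property and the construction of Proposition \ref{prop-couplage} to make this rigorous, without rebuilding the coupling.
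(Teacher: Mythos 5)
Your proposal is correct and follows essentially the same route as the paper's proof: the decomposition over the event $\{\tau_r<\tau\}$ (equivalently $\{F_{\tau_r}\geq 1\}$ in the paper's phrasing), the domination $\tau-\tau_r\leq\overline{\tau}_r$ and $C_\tau-C_{\tau_r}\leq C^{r}_{\tau_r+\overline{\tau}_r}-r$ via the coupling of Proposition~\ref{prop-couplage}, the independence of the post-$\tau_r$ upper-bound process from $\mathcal{F}_{\tau_r}$, and the Markov/Chebyshev bound $\mathbb{P}(C_\tau\geq r)\leq \mathbb{E}[(C_\tau)^p]/r^p$. No substantive differences.
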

			\begin{proof}
			At $\tau_r$, we either have $F_{\tau_r}=0$, in which case $\tau_r=\tau$, or $F_{\tau_r}\in [1,f_0]$, in which case $\tau_r<\tau$ and $C_{\tau}\geq r$. We use this dichotomy to compute the difference $\mathbb{E}\left[ \taup-\tau_r \right]$:
			\begin{multline*}
			    \mathbb{E}\left[ \taup-\tau_r \right]=\mathbb{E}\left[ \taup-\tau_r | F_{\tau_r}=0\right]\mathbb{P}\left[F_{\tau_r}= 0 \right]\\
			    +\mathbb{E}\left[ \taup-\tau_r | F_{\tau_r}\geq 1\right]\mathbb{P}\left[F_{\tau_r}\geq 1 \right]\\
			    =\mathbb{E}\left[ \taup-\tau_r | F_{\tau_r}\geq 1\right]\mathbb{P}\left[F_{\tau_r}\geq 1 \right]\,.
			\end{multline*}
            Given that $F_{\tau_r}\geq 1$, we have $C_t\geq r$ for all $t\in [\tau_r,\tau]$. Hence, by the same coupling procedure as in the proof of Proposition \ref{prop-couplage}, $\taup-\tau_r \leq \overline{\tau}_r$ where $\overline{\tau}_r$ is independent of $F_{\tau_r}$, and follows a generalized Erlang law of parameter $\alpha_1+\beta\frac{r}{1+r}$. Moreover, we clearly have
            \[\mathbb{P}\left[F_{\tau_r}\geq 1 \right] \leq \mathbb{P}\left[C_{\tau}\geq r \right]\,. \]
            We then conclude by Chebychev inequality that
            \begin{equation*}
			    \mathbb{E}\left[ \taup-\tau_r \right]\leq \mathbb{E}\left[\overline{\tau}_r\right]\frac{\mathbb{E}[(C_{\taup })^p]}{r^p}\, .
			\end{equation*}
			Using the same reasoning we obtain
			\begin{equation*}
			    \mathbb{E}\left[ C_{\tau}-C_{\tau_{r}} \right]=\mathbb{E}\left[ C_{\tau}-C_{\tau_{r}} | F_{\tau_r}\geq 1\right]\mathbb{P}\left[F_{\tau_r}\geq 1 \right]\,,
			\end{equation*}
			and, given that $F_{\tau_r}\geq 1$, 
			\begin{equation*}
		 C_{\tau}-C_{\tau_{r}}=C_{\tau-\tau_{r}+\tau_{r}}-C_{\tau_{r}}
			  \leq C_{\overline{\tau}_r+\tau_{r}} - r \leq C_{\overline{\tau}_r+\tau_{r}}^{r}- r
			\end{equation*} where again $C^{r}$ is defined in Proposition \ref{prop-couplage} and is independent of $F_{\tau_r}$.
				\myspecialendproof 
			\end{proof}
			The sequence of random variables $\overline{\tau}_r$ is uniformly bounded in mean,
			\[\sup_r \mathbb{E}\left[\overline{\tau}_r\right] \leq  \mathbb{E}\left[\tau_L\right]=\frac{1}{\alpha_1} \sum_{k = 1} \frac{1}{k}<\infty\,.\]
			Proposition \ref{prop-erreur} can then be used effectively with any $p\geq 1$ to approximate $\mathbb{E}\left[ \taup\right]$ with $\mathbb{E}\left[\tau_r \right]$. Using Proposition	\ref{prop-couplage}, we deduce that for any $\gamma <\alpha_1+\beta$ (e.g. under Hypothesis \ref{hyp-rate2}), $\mathbb{E}\left[ \taup-\tau_r \right]$ is decreasing as $\frac{1}{r}$, with a computable pre-factor given in Proposition \ref{prop-esp-tilde} in Appendix \ref{ssec:proof_linear}.\\
			Similarly,  under Hypothesis \ref{hyp-rate2} and from Proposition \ref{prop-esp-tilde}, we obtain the explicit bound for $r$ sufficiently large (such that $\gamma<\alpha_1+\beta\frac{r}{1+r}$),
		\[\mathbb{E}\left[C_{\overline{\tau}_r+\tau_{r}}^{r}- r\right]\leq \left(r+ \frac{(\alpha_1 + \beta + \alpha_2)f_0 }{\gamma}\right) \left(\frac{f_0!}{\left(f_0-\frac{\gamma}{\alpha_1+\beta\frac{r}{1+r}}\right)!}-1\right)<\infty\,.\]
		The latter expression increasing at most linearly in $r$, Proposition \ref{prop-erreur} can be used with any $p\geq2$. Under the assumption that $2\gamma < \alpha_1+\beta$, we deduce from Proposition	\ref{prop-couplage} that $\mathbb{E}\left[C_{\tau}\right] -	\mathbb{E}\left[C_{\tau^{r}} \right]$  is also decreasing as $\frac{1}{r}$, with a computable pre-factor given in Proposition \ref{prop-esp-tilde} in Appendix \ref{ssec:proof_linear}.
			}
			
			{To be complete, we detail in Appendix \ref{appendix:numericalscheme} the pseudo-code (Algorithm \ref{algo-gr}) to compute $\mathbb{E}\left[\tau_r \right]$ and $\mathbb{E}\left[C_{\tau_{r}} \right]$.}
	
			We simulate Algorithm \ref{algo-gr} to explore the influence of {the amplification rate $\beta$} on both the mean of $\taup$, $\mathbb{E}\left[\taup\right]$, and the mean of $\cextin$, $\mathbb{E}\left[\cextin\right]$ for the nonlinear model. \\
				{First, we can prove that both $\mathbb{E}\left[\taup\right]$ and $\mathbb{E}\left[\cextin\right]$ are monotonously decreasing with increasing rate $\beta$, and that the following limits hold, with fixed $(f_0,\alpha_1,\alpha_2,\gamma)$ and under Hypothesis \ref{hyp-rate2}:
				\begin{equation}\label{eq:limit_beta1_tau_ctau}
				\begin{array}{ll}
				    \lim_{\beta\to 0} \mathbb{E}\left[\taup\right]=\mathbb{E}\left[\taup_L\right]\,, & \lim_{\beta\to \infty} \mathbb{E}\left[\taup\right]=\frac{1}{(\alpha_1+\alpha_2)f_0}\\
				    \lim_{\beta\to 0} \mathbb{E}\left[\cextin\right]=\mathbb{E}\left[C^L_{\taulinear}\right]\,, & \lim_{\beta\to \infty} \mathbb{E}\left[\cextin\right]=f_0+\frac{\alpha_2}{\alpha_1+\alpha_2}\,.
				\end{array}
				\end{equation}
				The limit $\beta\to 0$ is the consequence of the continuity of the Markov chain with respect to its parameters. For the limit $\beta\to\infty$, note that the very first event to occur is either $\mathcal{R}_1$ or $\mathcal{R}_3$, and the remaining ones are $\mathcal{R}_2$ almost surely in the limit $\beta\to\infty$. These limits are illustrated on Figure \ref{fig:odetimef}. Moreover, the different parameter configuration used in Figure \ref{fig:odetimef} leads to the guess\footnote{we are not able to prove it, as no analytical formula is available for the full model} that both $\mathbb{E}\left[\taup\right]$ (left panel) and $\mathbb{E}\left[\cextin\right]$ (right panel) have a high sensitivity to $\beta$ in the range  $\beta\approx \alpha_1$.
				}
			 The numerical simulations {indicate furthermore} that{, in the presence of the auto-amplified event $\mathcal{R}_2$}, the division rates $\alpha_2$ and $\gamma$ have {very little influence on $\mathbb{E}\left[\taup\right]$ while they affect dramatically $\mathbb{E}\left[\cextin\right]$}.  It is {also} clear from the analytical solutions of the linear model, that the initial number of precursor cells $f_0$ and the spontaneous transition rate $\alpha_1$ have a major impact on {both} $\mathbb{E}\left[\taup\right]$ and {$\mathbb{E}\left[\cextin\right]$}.
			 {A high sensitivity of model outputs to parameters is interesting to suggest possible key biological measurements (if feasible) in order to improve parameter identifiability (see paragraph \ref{ssec:timereconstruc}-\ref{ssec:moment_data}).}
			 
			\begin{figure}
					\centering
					\includegraphics[width=\linewidth]{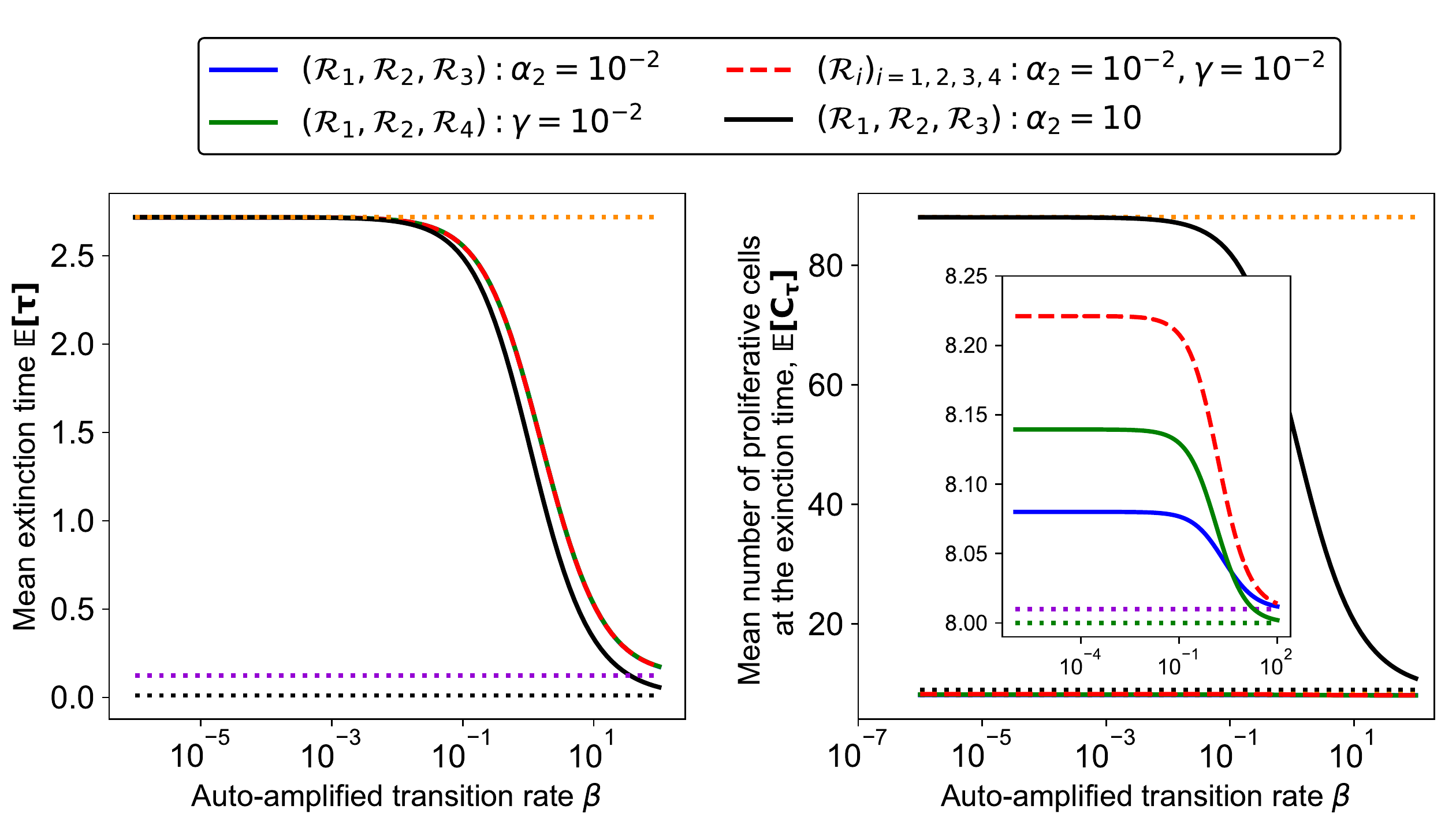}
					\caption{Mean extinction time $\mathbb{E}\left[\taup\right]$ and mean number of proliferative cells at the extinction time $\mathbb{E}\left[\cextin\right]$, as a function of the auto-amplified transition rate $\beta$. Using Algorithm \ref{algo-gr} with $\epsilon = 10^{-2}$, we compute $\mathbb{E}\left[\taup\right]$ and $\mathbb{E}\left[\cextin\right]$. {The solid and dashed lines correspond to the change in $\mathbb{E}\left[\taup\right]$ (left panel) and $\mathbb{E}\left[\cextin\right]$ (right panel) a function of $\beta$.}  In each panel, we use four different parameter configurations as follows. In all cases, $f_0 = 8$ and $\alpha_1=1$. Black solid line: submodel $(\mathcal{R}_1,\mathcal{R}_2,\mathcal{R}_3)$ with $\alpha_2 = 10$. Blue solid line: submodel $(\mathcal{R}_1,\mathcal{R}_2,\mathcal{R}_3)$ with $\alpha_2 = 0.01$. Green solid line: submodel $(\mathcal{R}_1,\mathcal{R}_2,\mathcal{R}_4)$ with $\gamma = 0.01$.  Red dashed line: complete model \ref{Model_FC} with $\alpha_2=\gamma = 0.01$.  The orange dotted horizontal lines represent $\mathbb{E}\left[\taup\right]$ and $\mathbb{E}\left[\cextin\right]$ when $\beta = 0$ (applying {Eq.~\eqref{eq:limit_beta1_tau_ctau} and} formulas in Proposition \ref{prop-Ce-linear} or, for submodel $(\mathcal{R}_1, \mathcal{R}_3,\mathcal{R}_4)$, simulating the stochastic process). The dotted  horizontal lines correspond to $\mathbb{E}\left[\taup\right]$ and $\mathbb{E}\left[\cextin\right]$ when $\beta\to\infty$ {(applying Eq.~\eqref{eq:limit_beta1_tau_ctau})}. {For $\mathbb{E}\left[\taup\right]$:  black dotted line: model $(\mathcal{R}_1,\mathcal{R}_2,\mathcal{R}_3)$ with $\alpha_2 = 10$;  purple dotted line: the three remaining models (superimposed). For $\mathbb{E}\left[\cextin\right]$: black dotted line: model $(\mathcal{R}_1,\mathcal{R}_2,\mathcal{R}_3)$ with $\alpha_2 = 10$; green dotted line: model $(\mathcal{R}_1,\mathcal{R}_2,\mathcal{R}_4)$; purple dotted line: the two remaining models (superimposed).}}
					\label{fig:odetimef}
				\end{figure}
			   
\section{Parameter calibration}\label{sec-parameter_calibration}
	In this section, we calibrate the model parameters using a likelihood approach. We first describe {in subsection \ref{subsect-dataset_description}} the available experimental dataset, as well as \textit{in silico} datasets that we use as a benchmark for our methodology. {In subsection \ref{subsec-likelihood-method}} we derive a likelihood function based on the embedded Markov chain from the underlying continuous-time Markov process. We explain how this likelihood is specifically adapted to the data, which are time-free measurements of cell numbers.
	{We present the estimation results in subsection \ref{ssec:fittingresuts} for model \eqref{Model_FC} and  each of the five different submodels: $(\mathcal{R}_1,\mathcal{R}_3)$, $(\mathcal{R}_1,\mathcal{R}_4)$, $(\mathcal{R}_1,\mathcal{R}_2,\mathcal{R}_3)$, $(\mathcal{R}_1,\mathcal{R}_2,\mathcal{R}_4)$ and $(\mathcal{R}_1,\mathcal{R}_3,\mathcal{R}_4)$. We recall that the different submodels are named by the reactions which have corresponding positive reaction rates. All the submodels considered are thus nested models, or reduced model compared to the full model $\eqref{Model_FC}$.  We carry out a comprehensive comparison between the different models using model selection criteria.}  
	{Thanks to a practical parameter identifiability analysis, we obtain model predictions in subsection \ref{ssec:prediction}, where we manage to retrieve hidden kinetic information and assess transit times and number of division events during the activation phase, with given confidence intervals. Finally, we discuss the biological interpretation of the calibration results in subsection \ref{sec:interpretation}.}

\subsection{Dataset description}\label{subsect-dataset_description}
	
	\paragraph{\textbf{Experimental dataset}}
	
	Follicles undergoing the activation process have been classified according to three types \cite{braw-tal_studies_1997,gougeon_morphometric_1987,lundy_populations_1999,meredith_classification_2000}. Primordial follicles (Type I or B) have either not yet or just initiated activation; they are composed of a single layer of flattened cells surrounding the oocyte. Primary follicles (Type II or C) have completed initiation; they only contain cuboidal (transitioned) somatic cells organized in less than two layers (this means that some follicles are strictly mono-layered, while in others an extra partially fulled layer is being built-up).
	In between Types I and II lies a class of transitory follicles (Type IA or B/C), with a mixture of flattened and cuboidal cells coexisting within a single layer. The progression from Type I to Type II is accompanied with a more or less pronounced increase in the total cell number (flattened plus cuboidal cells) and enlargement in the oocyte (and follicle) diameter (see bottom-right panel of Figure \ref{fig:plancf}).
	
	\begin{figure}[h!]
		\centering
		\includegraphics[width=5.8cm]{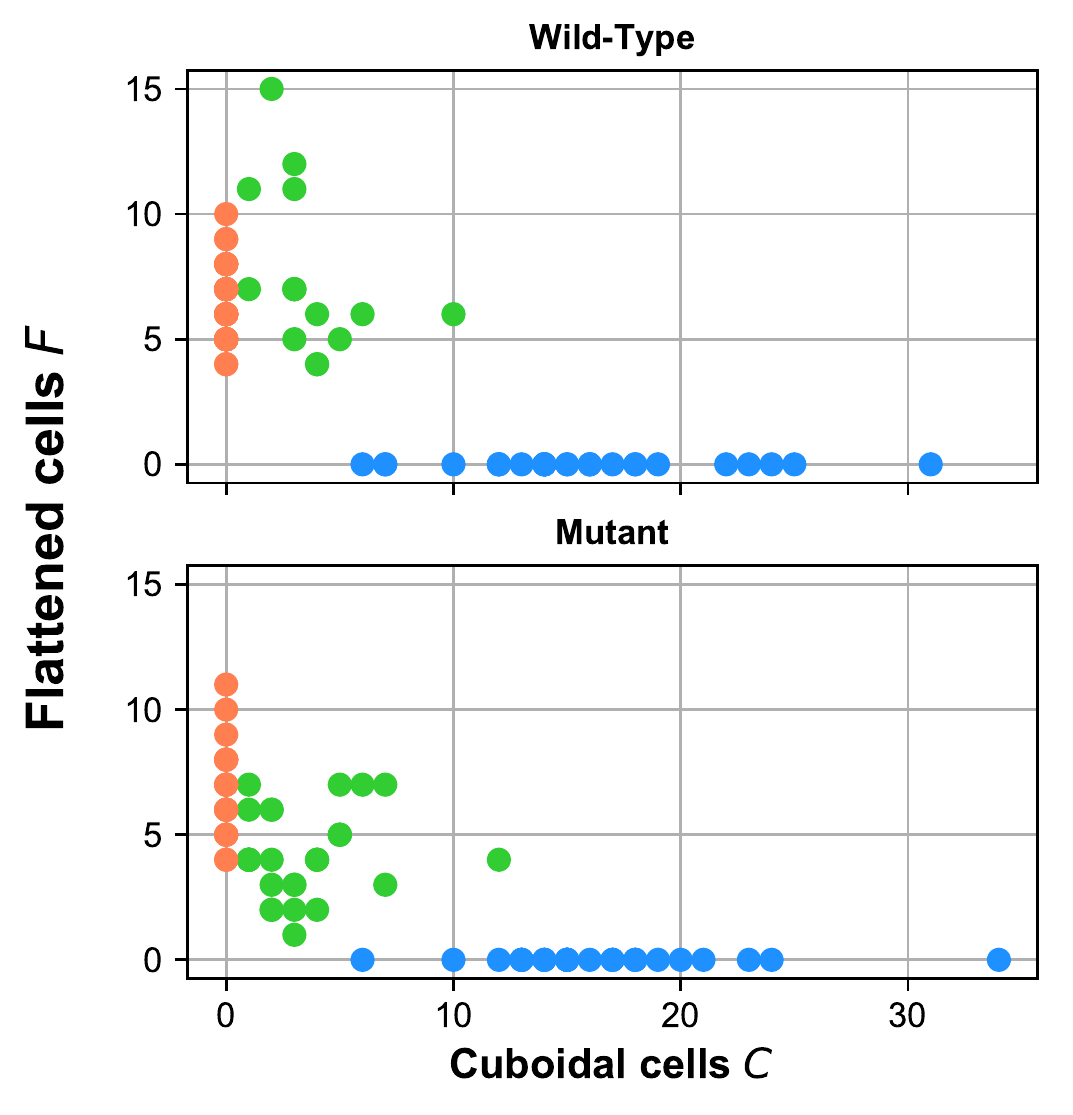}
		\includegraphics[width=5.8cm]{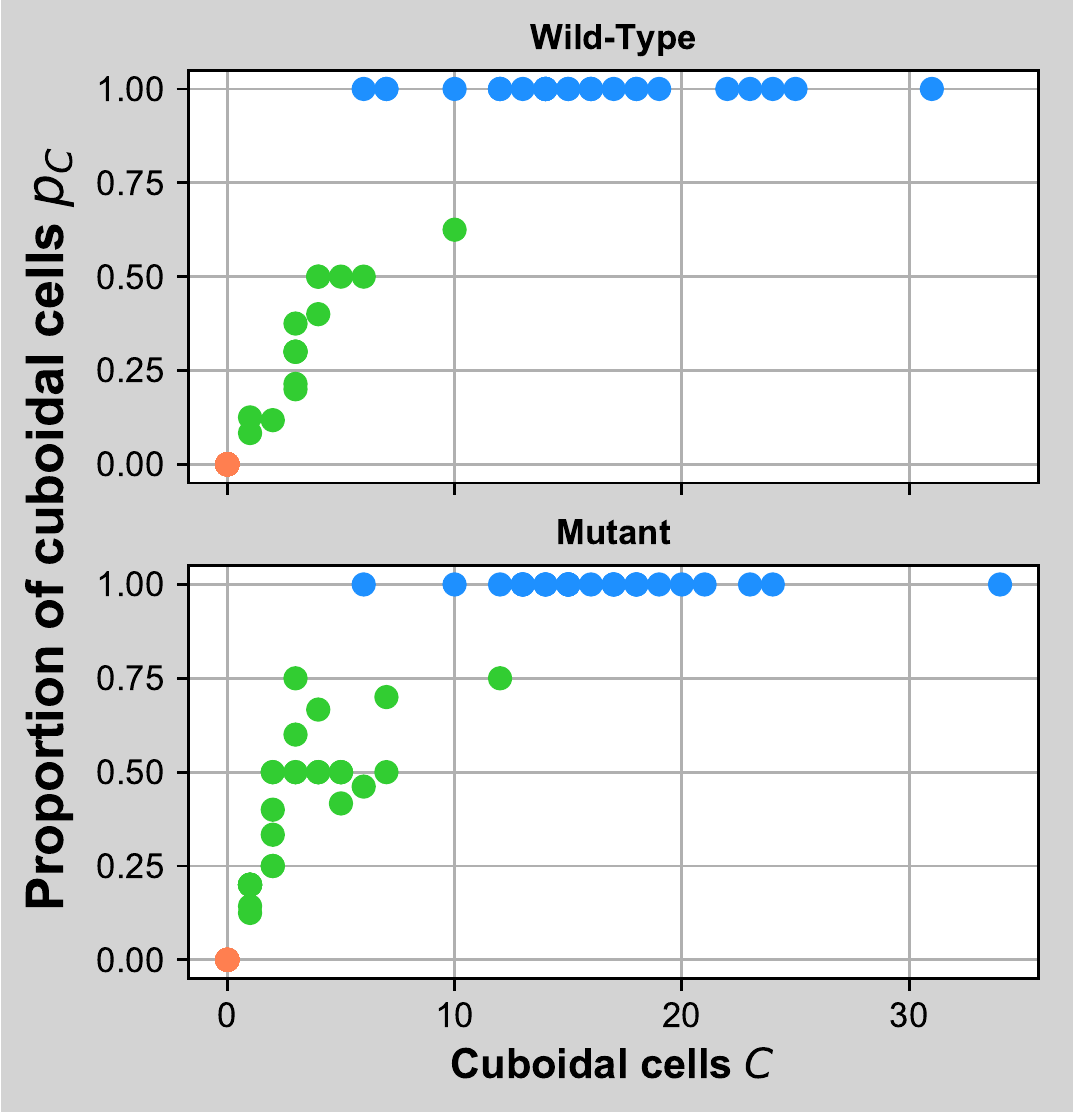}
		\includegraphics[width=5.8cm]{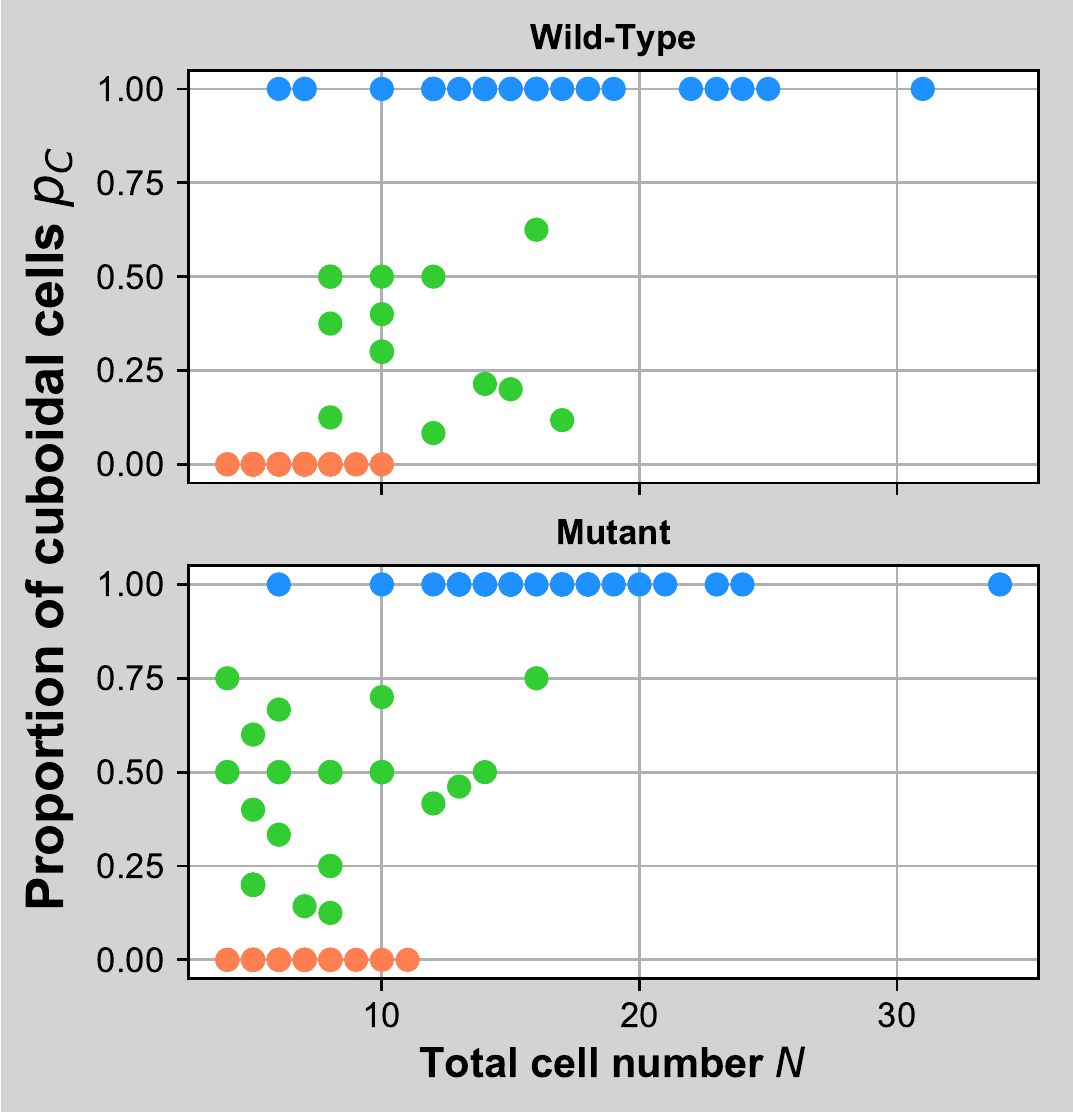}
		\includegraphics[width=5.8cm]{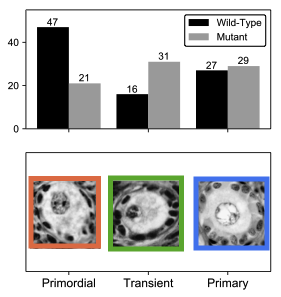}
		\caption{\textbf{Description of the experimental dataset}. {We recall that $F$ stands for the flattened cells, $C$ the cuboidal cells, $p_C$ the proportion of cuboidal cells, $p_C=C/(F+C)$, and $N=F+C$ the total number of cells.} Top-left, top-right and bottom-left panels: experimental data points projected onto three different phase planes, respectively: $(F,C)$, $(C,p_c)$ and $(N,p_C) $, for both the Wild-Type and Mutant subsets.  Red points: primary follicles, green points: transitory follicles, blue points: primary follicles. Bottom-right panel:  histological slices illustrating the different steps of activation (from left to right: primordial, transitory and primary follicles). Experimental dataset: courtesy of Ken McNatty; histological images: courtesy of Danielle Monniaux.}
		\label{fig:plancf}
	\end{figure}

    We have made use of a dataset acquired in sheep fetuses \cite{lundy_populations_1999,wilson_01} (courtesy of Ken McNatty), which provides us with flattened (precursor) and cuboidal (proliferative) cell numbers in a sample of follicles distributed into the three activation steps. The dataset is subdivided into two subsets corresponding to two different sheep strains : the ``wild-type'' Romney strain and the ``mutant'' Booroola strain. The latter is characterized by a natural mutation affecting the receptor to growth factor BMP15 and resulting in the alteration of follicle development (see the Introduction section).
	
		We have $90$ data points for the Wild-Type dataset, and $81$ for the Mutant dataset.
		More specifically, the measures consist of the cell numbers counted on the largest 2D cross-section of histologically fixed follicles of type I, IB or II. This 2D number can be correlated  with the total 3D cell number from standard stereological considerations \cite{lundy_populations_1999}. 
		
		{On a time horizon of several weeks, as it is the case in the experimental study we based on \cite{lundy_populations_1999}, a primordial follicle can undergo three different fates: it can either get activated, die, or remain quiescent. When considering the long term evolution (during the whole reproductive lifespan), all healthy primordial follicles will eventually get activated , the vast majority of non healthy primordial follicles will die before leaving the pool, and the remaining ones will never get activated {\cite{reddy_10}} (in women there are approximately one thousand follicles left in the ovaries after menopause {-- on the order of $0.2\%$ of the initial pool} {\cite{broekmans_09}}).} 

{Except in the case of morphological abnormalities, it is not obvious to classify primordial follicles as healthy, in the sense that they will get activated at one time or another. To use appropriate data for the model fitting step, which should concern only, or at least mostly, ``activable'' follicles, we thus needed to gather complementary {\it a priori} biological information.}

{The viability of the oocyte enclosed in a primordial follicle is the main determinant of the follicle health. Whether an oocyte is viable or not at this developmental stage results from the process of follicle formation, that we describe here very briefly (we refer the interested reader to \cite{juengel_02,monniaux_18b,tingen_09,wang_17} for a complete overview). During embryonic development, primordial germ cells colonize the territory corresponding to the future gonads. In females, these cells undergo several rounds of mitotic divisions while they interact locally with somatic cells. They gather in syncytium structures, the germ cell cysts, that ultimately fragment into primordial follicles. Only a small proportion of the germ cells survive to this step (25\% in sheep \cite{smith_93}), which also coincides with the entry into meiosis. Most oocytes will die and transfer a part of their cytoplasmic material to surviving oocytes. Each surviving oocyte recruits a variable number of somatic cells to build a primordial follicle \cite{sawyer_02}}.
{The recruitment of a sufficient number of somatic cells is crucial to ensure future oocyte survival. Somatic cells secrete trophic factors, such as neurotrophin, whose intrafollicular levels should be high enough to guarantee the oocyte survival \cite{spears_03}. A further survival requirement is the existence of tight intercellular communications between the oocyte and its surrounding somatic cells, and between the somatic cells.} 

{From these information, the following ``viability criteria'' can be derived to select proper type~I follicles.  First, there is a threshold oocyte diameter compatible with oocyte viability, and, second, for a given oocyte diameter, there is a threshold number of somatic cells surrounding the oocyte, that should be organized as a connected cell network paving the oocyte surface. The critical oocyte diameter can be determined in a rather straightforward way by comparison with the minimal oocyte diameter observed in transitory follicles. In the sheep species, following the study of  \cite{cahill_81} (see also \cite{picton_01} for other species), we set the oocyte diameter threshold to 24 $\mu$m (corresponding to a follicle diameter threshold of  34 $\mu$m given the nominal depth of the somatic cell layer, and an absolute number of 15 cells in the largest  2D cross-section).}

{To assess the critical cell number relative to oocyte diameter, we computed a paving index, $e_O$, that represents the average contact length between a somatic cell and the oocyte: $e_O = \frac{\pi d_O}{N^s_g}$, where $d_O$ is the oocyte diameter and $N^s_g$ is the number of cells counted on the largest section. With a 24 $\mu$m diameter and 15 cells, we get a higher
bound of $5 \mu$m for $e_O$, that we applied as a filter to the rough data.}

		{On the other side of the activation process, we have only retained the strictly mono-layered type II follicles. Indeed, we intend to}
		deal with a final cell number as close as possible to the number reached at the first time when all flattened cells have transitioned to cuboidal cells (hence to the extinction time in the model),  Yet, due to the oocyte enlargement and the resulting increased capacity of the first layer, one cannot preclude that a significant amount of cuboidal cells have been generated after the end of the transition period.
		
		{Combining these criteria, we get the  dataset described on Figure \ref{fig:plancf}, which} illustrates the repartition of the data points according to the follicle type and sheep strain in each phase plane  ($C$, $F$) ($C$, $p_C$), ($N$, $p_C$).
		
	   \paragraph{\textit{In silico} datasets} 
	    {In addition to the experimental dataset, we have constructed \textit{in silico} datasets from simulations of SDE \eqref{eq-SDE} in a way that mimics the experimental protocol} {(see details in Appendix~\ref{appendix-insilico}).} 

	   In the sequel these datasets will be used as benchmark tools for the parameter identifiability study and the statistical comparison between the submodels and full model. In any case, the set of estimated parameters will match the set of cell events included in the model used to generate the \textit{in silico} dataset. For instance, we will estimate the values of parameters $\alpha_2$ and $\gamma$ on the two datasets generated from submodel  $(\mathcal{R}_1,\mathcal{R}_3,\mathcal{R}_4)$ {($\alpha_1$ will be fixed to $1$ in the sequel)}.
	   
\subsection{Likelihood method}\label{subsec-likelihood-method}

    Since the experimental dataset is made of time-free observations, we are going to confront the model to the data using only the information on some state space values taken by the process, without their corresponding time information. This notion is intrinsically related to the embedded Markov chain which we detail below. We will use this Markov chain to compute a likelihood function.
	Note that the proliferative cell population increases by one cell at each event ($\mathcal{R}_1$, $\mathcal{R}_2$, $\mathcal{R}_3$ or $\mathcal{R}_4$), while the precursor cell population can either remain constant ($\mathcal{R}_3$ or $\mathcal{R}_4$) or decrease by one ($\mathcal{R}_1$ or $\mathcal{R}_2$). The proliferative cell population $C$ can thus be used as an event counter. 
	Indeed, as a continuous-time Markov process, $X$ (defined in Eq.~\ref{eq-SDE}) can be decomposed into an embedded Markov chain  $(F_n,C_n)_{n\in\mathbb{N}}$ and a sequence of random jump times $(s_{n})_{n \in \mathbb{N}}$ with
	\begin{equation*}
	    s_{n+1} = s_n + \mathcal{E}\left((\alpha_1 + \alpha_2) F_n + \beta \frac{F_n C_n}{F_n + C_n} + \gamma C_n\right), \quad s_0 = 0.
	\end{equation*}
	Note that the sequence of jump times $(s_{n})_{n \in \mathbb{N}}$ corresponds exactly to the sequence of jump times associated with process C, and
	\begin{equation*}
	    C(t)=\sum_{n\in \mathbb{N}}  \mathds{1}_{s_n\leq t}\,, \quad C_n=n\,.
	\end{equation*}\\
    Given that $C_n=n$ is deterministic, it is clear that the precursor cell population $F_n$ (alone) is also a (non-homogeneous) Markov chain. To clarify the link with the data, we will index the embedded chain $F_n$ by the number of proliferative cells $C=c$, rather than by the number of events that occurred: 
    let $F_c$ be the random variable corresponding to the number of precursor cells given that there are $c\in \mathbb{N}$ proliferative cells. 
	According to the dichotomy between the two division events ($\mathcal{R}_3$, $\mathcal{R}_4$) and the two transition events ($\mathcal{R}_1$, $\mathcal{R}_2$), we deduce the law of $F_c$ at the ``pseudo-time'' $C = c$ from the law of $F_{c-1}$ at the ``pseudo-time'' $C = c-1 $ as follows: for all $(f,c) \in \mathcal{S}$,
	\begin{equation}\label{eq-like}
	\mathbb{P}\left[ F_c = f \right] =  \underbracket{q_{f+1,f}(c-1) \mathbb{P}\left[ F_{c-1} = f + 1 \right]}_{\text{transition }} 
	+ 	\underbracket{q_{f,f}(c-1)  \mathbb{P}\left[ F_{c - 1}= f \right] }_{\text{asymmetric/symmetric division}},
	\end{equation}
	where
	\begin{multline}\label{eq-q}
		q_{f+1,f}(c) =  \frac{\alpha_1 (f+1) + \beta \frac{(f+1) c}{f+ 1+c}}{(\alpha_2 + \alpha_1) (f+1) + \gamma c  + \beta \frac{(f+1) c }{f+ 1+c}}, \\
		q_{f,f}(c) = \frac{ \alpha_2 f + \gamma c}{(\alpha_2 + \alpha_1)f + \gamma c+ \beta\frac{fc}{f + c} }.
	\end{multline}
	Hence $(F_c)_{c \in \mathbb{N}}$ is a non-homogeneous discrete time Markov chain. Notice that the law of  $C_{\tau}$, the number of proliferative cells at the extinction time of the precursor cells, corresponds to the law of the first ``pseudo-time'' $c$ such that $F_c=0$, e.g. $C_{\tau}=\inf \{ c \in \mathbb{N}^*,\, F_c = 0 \}$.
	
	In addition to Eq.~\eqref{eq-like}, to compute the law of  $(F_c)$, we need to specify an initial condition $F_0$. {As detailed in Section \ref{subsect-dataset_description}, the initial pool of flattened cells is highly variable.} {To limit the number of parameters,} we assume that the initial number of precursor cells follows a truncated Poisson law on $\mathbb{N}^*$ (with a single parameter $\mu \in \mathbb{R}_+$) given by, for all $f \in \mathbb{N}^*$,
	\begin{equation}\label{eq-poisson-trun-law}
		\mathbb{P}\left[ F_0 = f \right] =  \frac{\mu^f}{(e^\mu - 1)f!}.
	\end{equation}
	Then, we can use Eq.~\eqref{eq-like} to compute $\mathbb{P}[F_c = f]$ by recurrence from the initial probability vector $(\mathbb{P}[F_0 = i])_{i \in \llbracket 0, c + f \rrbracket } $.  Hence, we have built a discrete time Markov chain $(F_c)_{c \in \mathbb{N}}$ from model \eqref{Model_FC} adapted to our time-free observations. 
	
	As can be seen from Eq.~\eqref{eq-q}, the timescale cannot be inferred, so that we fix arbitrarily $\alpha_1 = 1$,  whatever the dataset, to obtain dimensionless parameters. The time unit of the remaining parameters is thus relative to the timescale of one spontaneous transition event, and their estimated  values may  depend on the specific dataset (experimental or {\it in silico}).
	
	Finally, we suppose that all data points are independent of one another, and that the observations are free of measurement errors. 
    {Therefore, accordingly to Eq.~\eqref{eq-like}-\eqref{eq-q}-\eqref{eq-poisson-trun-law}, our statistical model assumes that the observed variability is due to a random initial number of precursor cells, and to the occurrences of random cell events among cell transitions and cell divisions}. 
	We obtain the following likelihood function 
	\begin{equation}\label{eq:likelihood_total}
		\mathcal{L}((f_i,c_i)_{i=1..N};\theta) := \mathbb{P}\left[(f_i,c_i)_{i=1..N} | \theta \right] = \prod_{i = 1}^{N} \mathbb{P}\left[ F_{c_i} = f_i| \theta \right]\,,
	\end{equation}
{for $N$ data points $(f_i,c_i)_{i=1..N}$ and the parameter vector $\theta$ depending on each submodel.}

{We infer the parameter values using the maximum likelihood estimator (MLE), and apply the practical approach based on profile likelihood estimate (PLE) to analyze the parameter identifiability and assess confidence intervals \cite{raue_structural_2009}. We also perform model selection using classical AIC and BIC criteria to discriminate between the full model and different submodels. The whole procedure is described in Appendix~\ref{appendix:fittingproc}}.

{Note that the initial condition parameter $\mu$ can be either estimated together with the other parameters from a given dataset, using the likelihood given in Eq.~\eqref{eq:likelihood_total}, or, alternatively, from the cell number of the primordial follicles only.
        In the latter case, with the law of $F_0$ given by \eqref{eq-poisson-trun-law}, we obtain the likelihood function 
        \begin{equation}\label{eq:likeli_init}
            \mathcal{L}_{ini}((f_i,0)_{i=1..N'};\mu) := \prod_{i \in \llbracket 1, N' \rrbracket } \frac{\mu^{f_i}}{(e^\mu - 1)f_i !}\,,
        \end{equation}
        for $N'$ data points $(f_i,0)_{i=1..N'}$. 
        From the likelihood defined in Eq. \eqref{eq:likeli_init}, we deduce MLE and PLE to infer the value of $\mu$ solely from the primordial follicle data.}

\subsection{Fitting results}\label{ssec:fittingresuts}

	In this subsection, we {present our fitting results using the procedure described in subsection \ref{subsec-likelihood-method}} for several submodels derived from model \eqref{Model_FC}:
	\begin{itemize}
	    \item two-event submodels, including the spontaneous transition event together with either the asymmetric $(\mathcal{R}_1,\mathcal{R}_3)$ or symmetric division $(\mathcal{R}_1,\mathcal{R}_4)$;
	    \item three-event submodels, {the linear submodel $(\mathcal{R}_1,\mathcal{R}_3,\mathcal{R}_4)$ and the two-nonlinear one,} including auto-ampli\-fied transition events, together with either the asymmetric $(\mathcal{R}_1,\mathcal{R}_2,\mathcal{R}_3)$ or symmetric $(\mathcal{R}_1,\mathcal{R}_2,\mathcal{R}_4)$ division event;
	    \item the full model \eqref{Model_FC}$=(\mathcal{R}_1,\mathcal{R}_2,\mathcal{R}_3,\mathcal{R}_4)$
	\end{itemize}
	
	{The fitting results {obtained with the total likelihood (Eqs. \eqref{eq-like}-\ref{eq:likelihood_total})} on the experimental datasets are shown in Figure
			\ref{fig:bestfit} for submodels $(\mathcal{R}_1,\mathcal{R}_3)$,  $(\mathcal{R}_1,\mathcal{R}_4)$ and the full model . {The corresponding} fitting results for the {\it in silico} datasets for the same submodels are provided in Figure~\ref{fig:datafitt_2parms_insilico} (Appendix~\ref{appex:detail_cal}). The fitting results for the three-event submodels for both the experimental and \textit{in silico} datasets are provided in Appendix~\ref{appex:detail_cal}. For both the Wild-Type and Mutant datasets, a visual inspection shows that submodel $(\mathcal{R}_1, \mathcal{R}_4)$ leads to a ``direct'' transition, followed by prolonged cell proliferation after precursor cell extinction, while, with submodel $(\mathcal{R}_1, \mathcal{R}_3)$, there is a higher probability that the total number of cells increases before precursor cell extinction.} {The model selection criteria, summarized in Table \ref{table-comp-wt-vs-mut}, shows that all submodels without cell event $\mathcal{R}_4$ can be safely rejected. } {The visual inspection of  Figure~\ref{fig:bestfit} leads to the following explanation. If event $\mathcal{R}_4$ is present, as in submodel $(\mathcal{R}_1, \mathcal{R}_4)$, the proliferative cells can keep dividing after the extinction of the precursor cells (line $F=0$). Once the precursor cell number reaches zero for a given $c$, all remaining points $(0,c')$ for $c'\geq c$ are reached with probability one, which results in a high contribution of all $(0,c)$ data points to the maximum likelihood. In contrast, if event $\mathcal{R}_4$ is not present, as in submodel $(\mathcal{R}_1, \mathcal{R}_3)$, the process stops as soon as the precursor cell population $F$ gets extinct, which prevents the likelihood of all $(0,c')$ points from being close to one (they rather take all intermediate values).} {This observation is consistent with the fitting results of the \textit{in silico} datasets (Figure~\ref{fig:datafitt_ana_insilico} in Appendix~\ref{appex:detail_cal})}.\\
			{The model selection criteria further suggest that the best models associated with the experimental datasets are the full model and the three-event linear submodel $(\mathcal{R}_1,\mathcal{R}_3,\mathcal{R}_4)$. The two-event submodel $(\mathcal{R}_1,\mathcal{R}_4)$ appears to be a possible alternative but still less relevant than the two others.} {In Figure~\ref{fig:bestfit}, we observe that the trajectories associated with an intermediate level of cell proliferation before precursor cell extinction are more likely in the full model than in the two-event submodel $(\mathcal{R}_1,\mathcal{R}_4)$, with a more pronounced effect for the mutant subset than the wild-type subset. We will come back to this consideration in section \ref{ssec:prediction}.}\\
{\noindent The parameter identifiability study (detailed in Appendix~\ref{appex:detail_cal}) leads to the following results:}
	{\begin{itemize}
	        \item {The initial condition parameter $\mu$ is always practically identifiable (Figures \ref{fig:datafitt} and \ref{fig:datafitt_3parm}) and the MLE yields similar values from one submodel to another (see Tables \ref{table-wt} and \ref{table-mut}). Moreover, its fitted value is close to the true one for the \textit{in silico} datasets, with some small bias in some cases  (Figures~\ref{fig:datafitt_2parms_insilico} and \ref{fig:datafitt_3parm}).}
	        \item The asymmetric division parameter $\alpha_2$ is tightly identifiable only in submodel $(\mathcal{R}_1,\mathcal{R}_3)$,  (Figure~\ref{fig:datafitt}). {However, significant bias occurs, as revealed by the \textit{in silico} datasets yet (Figures~\ref{fig:datafitt_2parms_insilico} and \ref{fig:datafitt_3parm}).}
	        \item The symmetric division rate $\gamma$ is rarely practically identifiable, yet an upper-bound can always be obtained (Figures \ref{fig:datafitt}, \ref{fig:datafitt_2parms_insilico} and \ref{fig:datafitt_3parm}).
	        \item The self-amplified transition rate $\beta$ is not identifiable in any case (Figures \ref{fig:datafitt}, \ref{fig:datafitt_2parms_insilico} and \ref{fig:datafitt_3parm}).
	    \end{itemize}
	}
	
	{This one-dimensional parameter identifiability analysis hides however more subtle parameter constraints. The self-amplification transition rate is actually constrained to be greater than the symmetric division rate $\gamma$, as shown in the two-dimensional profile likelihood analysis in Figure \ref{fig:ple_fit_gamma_beta1} in Appendix~\ref{appex:detail_cal}. This {result confirms} the tendency {observed with} the best fit trajectories in Figure~\ref{fig:bestfit}, that favor transition over proliferation.}
	
	{The fitting results obtained on the initial condition parameter $\mu$ from primordial follicle data (using the likelihood given by Eq.~\eqref{eq:likeli_init}) is shown in Figure~\ref{fig:primary_primordial}. We have obtained identifiable parameter values with each submodel, yet associated with  broader confidence intervals than with the global fitting approach given by Eqs.\eqref{eq-like}-\eqref{eq:likelihood_total}. As expected, using more information {reduces} the uncertainty, hence the confidence intervals are smaller when the whole datasets are used (for all models and subsets considered).}
	
	\begin{figure}[htb!]
	\centering
	\includegraphics[width=0.9\linewidth]{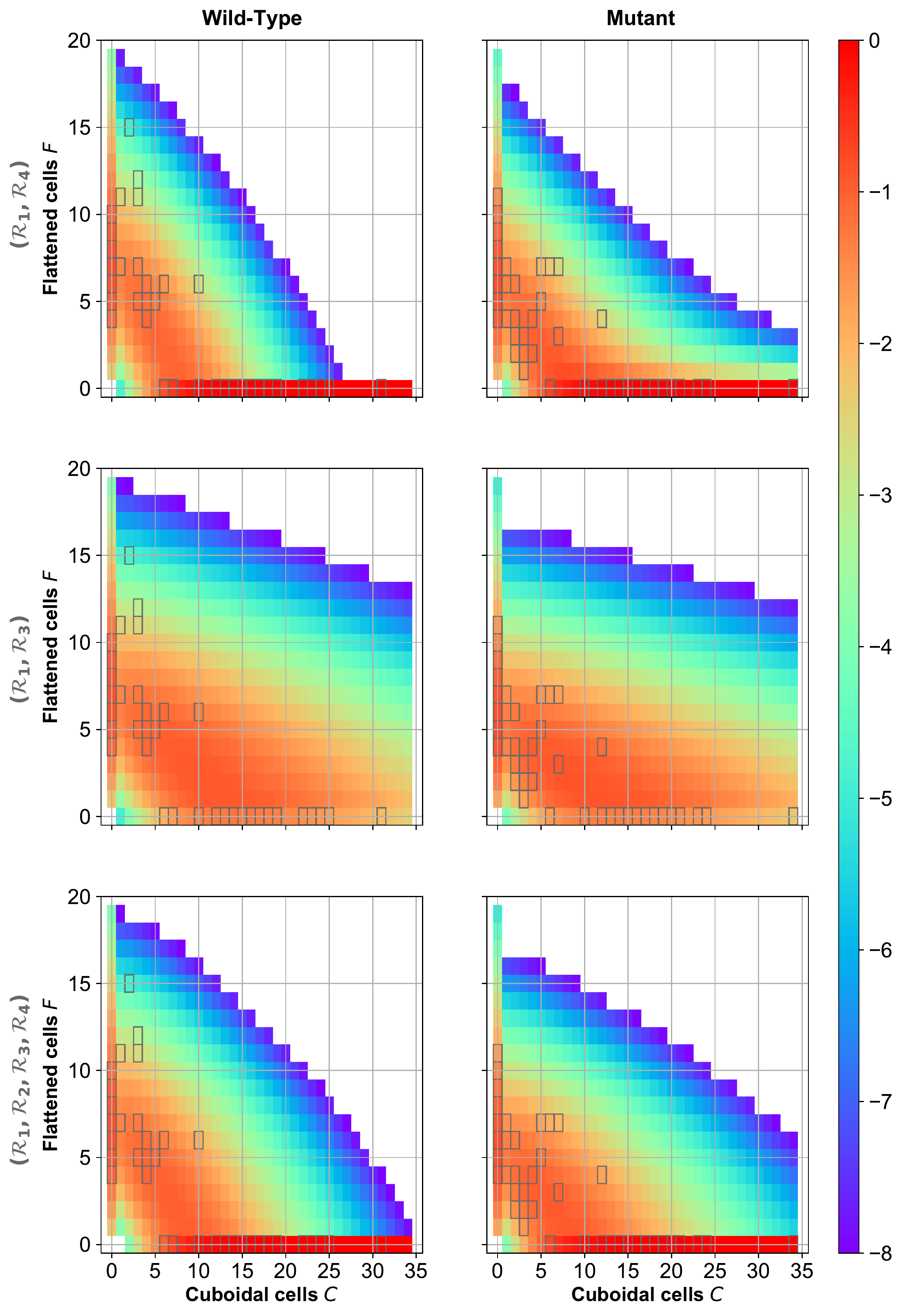}
	\caption{{\textbf{Two-event submodels and full model: best fit trajectories.} Using Eqs.\eqref{eq-like}-{\eqref{eq-poisson-trun-law}}, we compute each probability $\mathbb{P}\left[F_c = f\right] $ for submodels $(\mathcal{R}_1, \mathcal{R}_4)$ (top-panels), $(\mathcal{R}_1, \mathcal{R}_3)$ (middle panles) and the full model $(\mathcal{R}_1, \mathcal{R}_2, \mathcal{R}_3, \mathcal{R}_4)$ (bottom panels) with their respective MLE parameter set for Wild-Type dataset (left column) and Mutant dataset (right column). Each empty gray square corresponds to a data point. The colormap corresponds to the probability values $\mathbb{P}\left[F_c = f \right] $ in log10 scale.}}
		\label{fig:bestfit}
	\end{figure}

	\begin{figure}
				\centering
			\includegraphics[width=\linewidth]{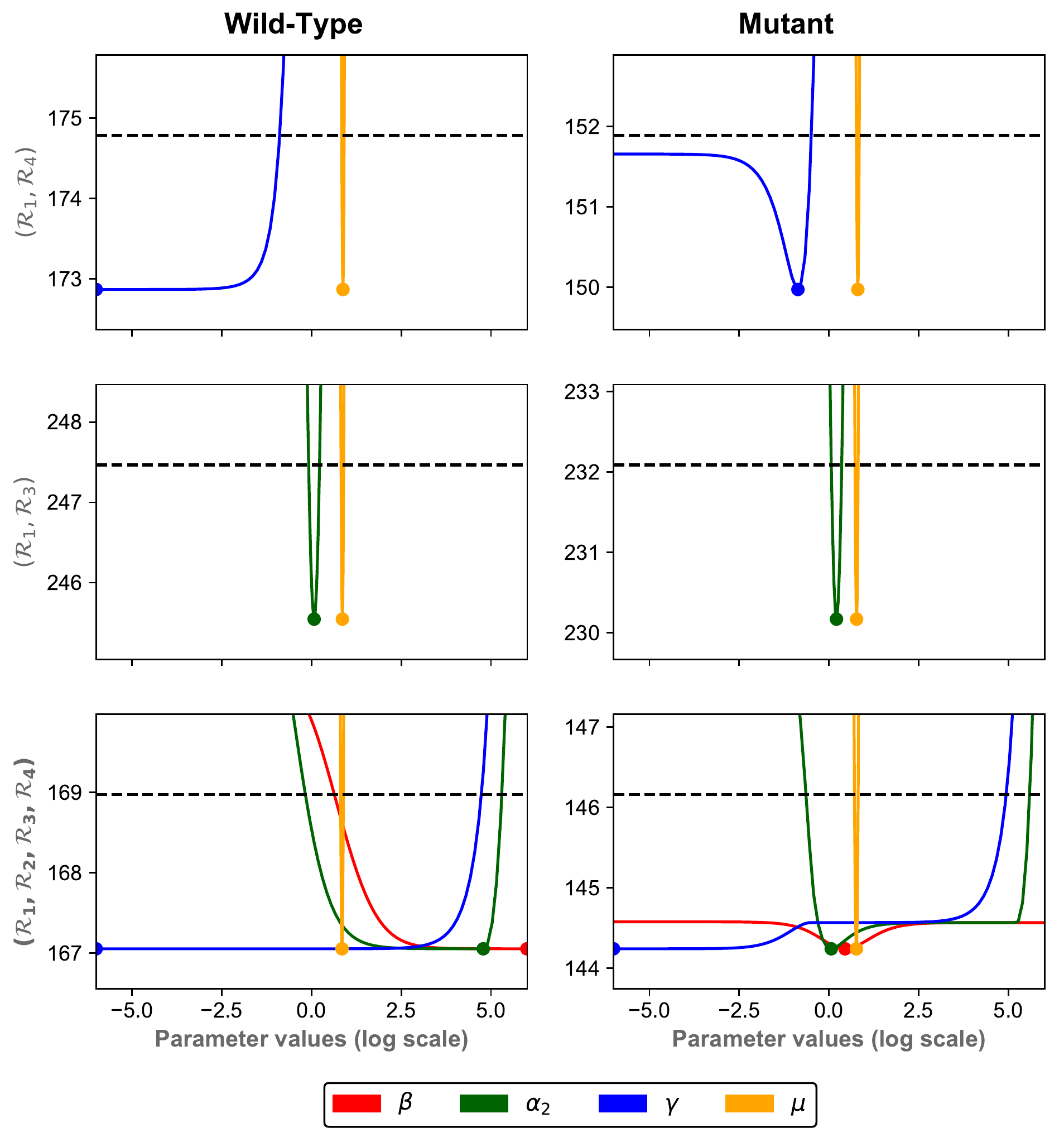}
				\caption{{\textbf{Two-event submodels and full model: PLE}. Each panel represents the PLE, in log10 scale, obtained from the experimental datasets, and either submodel $(\mathcal{R}_1,\mathcal{R}_4)$ (top panels), $(\mathcal{R}_1,\mathcal{R}_3)$ (center panels), or $(\mathcal{R}_1,\mathcal{R}_2, \mathcal{R}_3,\mathcal{R}_4)$ (bottom panels).
				The dashed black line represents the 95\%-statistical threshold. 
				Orange solid lines: PLE values for the initial condition parameter $\mu$; blue solid lines: PLE values for the symmetric cell proliferation rate $\gamma$; green solid lines: PLE values for the asymmetric cell division rate  $\alpha_2$; red solid lines: PLE values for the self-amplification transition rate $\beta$.
				 The colored points represent the associated MLE. }}
				\label{fig:datafitt}
			\end{figure}

	\begin{table}[htb!]
			\begin{tabular}{|c|| c|c|c||c|c|c| }
			    \hline 
                 &  \multicolumn{3}{|c||}{ \textbf{\sc{Wild-Type}}} & \multicolumn{3}{c|}{\textbf{\sc{Mutant}}}\\
				\hline 
				\vspace{0.02cm}
				Model & \tiny{-$\log \mathcal{L}(\theta; \mathbf{x})$} & AIC & BIC &  \tiny{-$\log \mathcal{L}(\theta; \mathbf{x})$}  & AIC & BIC\\ 
				\hline 
				\hline \\[-2.5ex]
				\footnotesize{$(\mathcal{R}_1, \mathcal{R}_4) $} & 172.87 &  \shortstack{ 349.74 \\ \tiny{$w = 0.02$} \\ \tiny{$\Delta = 7.6$} } & \shortstack{\textcolor{blue}{354.74} \\ \tiny{\textcolor{blue}{$w =0.15$}} \\ \tiny{\textcolor{blue}{$\Delta  = 3.0$}} }  & 149.97 &  \shortstack{303.94 \\ \tiny{$w = 0.08$} \\ \tiny{$\Delta = 8.8$} } & \shortstack{ \textcolor{blue}{308.73} \\ \tiny{\textcolor{blue}{$w =0.03$}} \\ \tiny{\textcolor{blue}{$\Delta = 6.4$}} } \\ 
				\hline \\[-2.5ex]
				\footnotesize{$(\mathcal{R}_1, \mathcal{R}_3) $} & 245.54 &   \shortstack{495.09 \\ \tiny{$w< 10^{-10}$} \\ \tiny{$\Delta >> 10$}}& \shortstack{500.09  \\ \tiny{$w < 10^{-10}$} \\ \tiny{$\Delta >> 10 $}} & 230.17 &  \shortstack{464.34 \\ \tiny{$ w < 10^{-10} $} \\ \tiny{$\Delta >> 10 $ }}  & \shortstack{ 469.13 \\ \tiny{$w < 10^{-10} $} \\ \tiny{$\Delta >> 10 $}}\\ 
				\hline \\[-2.5ex]
				\footnotesize{$(\mathcal{R}_1, \mathcal{R}_2, \mathcal{R}_4) $} & 172.77 & \shortstack{351.54 \\ \tiny{$w =0.008$} \\ \tiny{$\Delta = 9.44$}} & \shortstack{359.04 \\ \tiny{$w < 10^{-10}$}  \\ \tiny{$\Delta = 7.4$}}& 148.14 & \shortstack{302.27 \\ \tiny{$w = 0.02$} \\ \tiny{$\Delta = 7.1$}} & \shortstack{309.46  \\ \tiny{$w=0.02$} \\ \tiny{$\Delta = 7.1$}}\\ 
				\hline \\[-2.5ex]
				\footnotesize{$(\mathcal{R}_1, \mathcal{R}_2, \mathcal{R}_3) $} &  242.51 & \shortstack{491.02  \\ \tiny{$w < 10^{-10}$}  \\ \tiny{$\Delta >> 10$}} & \shortstack{498.52 \\ \tiny{$w < 10^{-10}$} \\ \tiny{$\Delta >> 10 $} } &  229.44  & \shortstack{464.89 \\ \tiny{$w < 10^{-10} $} \\ \tiny{$\Delta >> 10 $}}& \shortstack{472.07 \\ \tiny{$w< 10^{-10}$} \\ \tiny{$\Delta >> 10 $}}\\ 
				\hline \\[-2.5ex]
				\footnotesize{$(\mathcal{R}_1, \mathcal{R}_3, \mathcal{R}_4) $} & 170.58  & \shortstack{\textcolor{blue}{347.16} \\ \tiny{\textcolor{blue}{$w =0.07$}}  \\ \tiny{\textcolor{blue}{$\Delta = 5.0$}}} & \shortstack{\textcolor{blue}{354.66} \\ \tiny{\textcolor{blue}{$w =0.15$}} \\ \tiny{\textcolor{blue}{$\Delta = 3.0$}} }& 144.58  & \shortstack{ \textcolor{red}{295.15} \\ \tiny{\textcolor{red}{$w =0.64$}}}  & \shortstack{\textcolor{red}{302.34}\\ \tiny{\textcolor{red}{$w =0.81$}}} \\ 
				\hline \\[-2.5ex]
			    \footnotesize{$(\mathcal{R}_i)_{i \in \llbracket 1, 4 \rrbracket} $} & 167.05  & \shortstack{\textcolor{red}{342.10} \\ \tiny{\textcolor{red}{$w =0.90$}}} & \shortstack{\textcolor{red}{351.68} \\ \tiny{\textcolor{red}{$w =0.68$} }}& 144.24 & \shortstack{\textcolor{blue}{296.48} \\ \tiny{\textcolor{blue}{$w =0.33$}} \\ \tiny{\textcolor{blue}{$\Delta = 1.3$}}} & \shortstack{\textcolor{blue}{306.06} \\ \tiny{\textcolor{blue}{$w =0.12$}} \\ \tiny{\textcolor{blue}{$\Delta = 3.7$}} }\\ 
				\hline 
				\hline
			\end{tabular} 
			\caption{\textbf{Model comparison analysis.} For each experimental subset and each submodel, we compute both the  Akaike information criterion (AIC) and Bayesian information criterion (BIC), {the AIC and BIC differences $\Delta^{AIC}_i := AIC_i - AIC_{\min}$ and $\Delta^{BIC}_i = BIC_i - BIC_{\min}$, and the corresponding Akaike and Bayesian weights $w^{AIC}_i = \frac{\exp(-0.5 \Delta^{AIC}_i)}{\sum_{k = 1}^6\exp(-0.5 \Delta^{AIC}_k)} $ and $w^{BIC}_i = \frac{\exp(-0.5 \Delta^{BIC}_i)}{\sum_{k = 1}^6\exp(-0.5 \Delta^{BIC}_k)} $ following \cite[(Chapter 2 and 3)]{burnham_model_2003}.   
			{The best models are highlighted in red and the remaining selected models in blue (details are provided in Appendix \ref{appendix:fittingproc}).}
			}}\label{table-comp-wt-vs-mut}
		\end{table}

\subsection{Model prediction}\label{ssec:prediction}
    
    In this subsection, we use the {MLE together with their confidence interval obtained with the PLE}
    of the best models (the two linear submodels $(\mathcal{R}_1,\mathcal{R}_4)$ and $(\mathcal{R}_1,\mathcal{R}_3,\mathcal{R}_4)$ and the full model) 
    to infer information on the experimental subsets.
    
    \subsubsection{Distribution of the initial condition}
        In Figure \ref{fig:primary_primordial}, we compare for both the Wild-Type and Mutant subset the distributions derived from model $(\mathcal{R}_1,\mathcal{R}_4)$, $(\mathcal{R}_1,\mathcal{R}_3, \mathcal{R}_4)$ and $(\mathcal{R}_1,\mathcal{R}_2,\mathcal{R}_3, \mathcal{R}_4)$, {using the {whole} data, together with the distribution inferred from the primordial follicle data only}. From the top panels of Figure \ref{fig:primary_primordial}, we observe that in all cases, there is an overestimation {of the head and tail of the distribution of $F_0$, which suggests that a more peaked distribution than the {truncated} Poisson distribution {would be more suitable for the initial condition.} }
        {The distribution inferred from the primordial follicle data only is slightly closer to the datapoint than the distribution with $\mu$ inferred using the complete follicle data (as expected), as assessed by the evaluation of the likelihood \eqref{eq:likeli_init} at each MLE, shown in the lower panels of Figure \ref{fig:primary_primordial}.}\\
        {A detailed inspection of the lower panels of Figure \ref{fig:primary_primordial} shows furthermore that the likelihood \eqref{eq:likeli_init} based on the primordial follicle data cannot discriminate between the Wild-Type and Mutant subset. However, using the likelihood \eqref{eq-like}-\eqref{eq:likelihood_total} with the whole data induces} a shift of approximately one cell in average, in opposite directions for the Wild-Type and Mutant subset: for the Wild-Type subset, the mean cell number is found to be greater when the whole data are used, while for the Mutant subset, the mean cell number is found to be smaller (for all three models considered). 
        {Hence, considering the subsequent follicle trajectories, shaped by transition and proliferation, modifies the most likely value of $\mu$ and can discriminate the Wild-Type subset from the  Mutant subset. The precise value of $\mu$ is biologically important, since it can be considered as the equivalent of the number of founder cells in lineage studies. Indeed, until ovulation (where the total cell number is on the order of several millions in sheep), there will not be any recruitment of somatic cells, and all cells with derive from the initial flattened cells.}

    \begin{figure}[htb!]
		\centering
		\includegraphics[width=11.9cm]{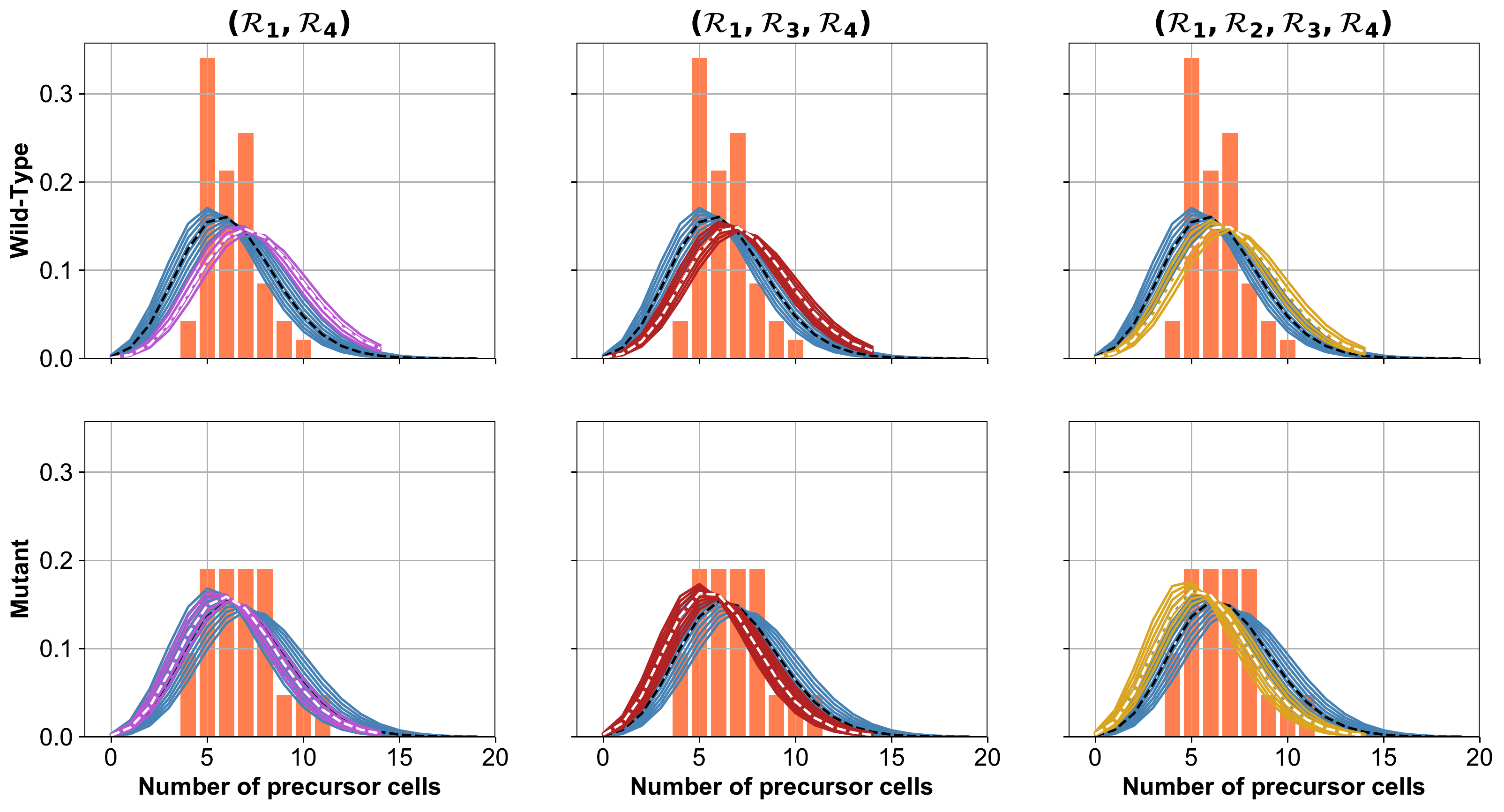}
		\includegraphics[width=11.9cm]{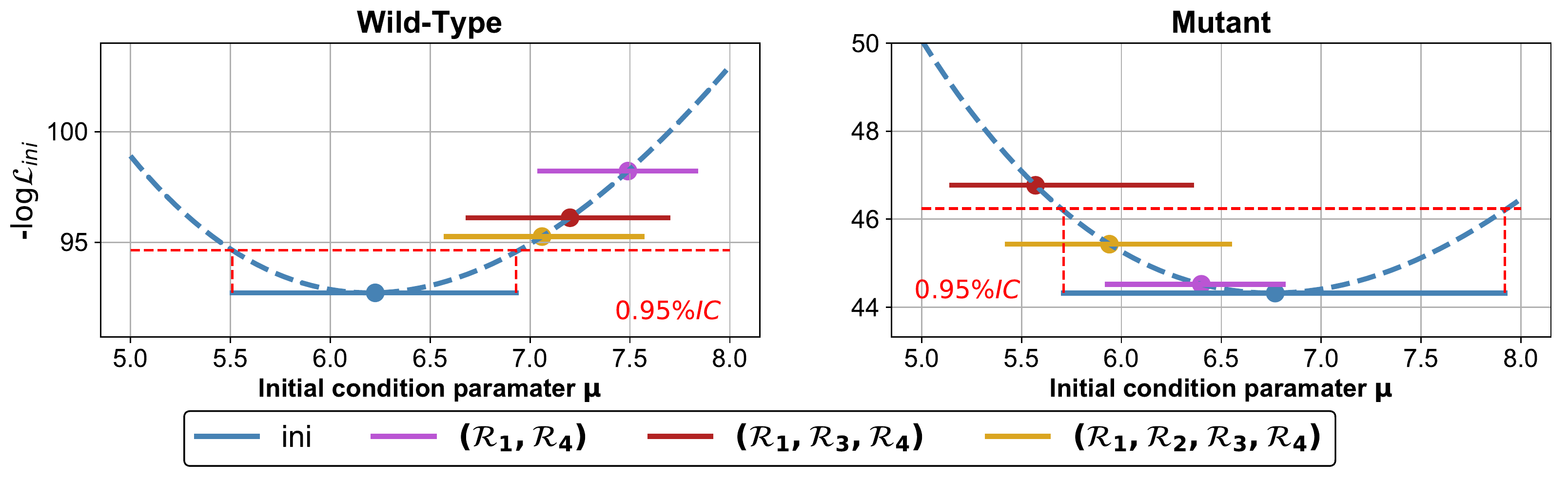}
		\caption{\textbf{Estimates on the initial condition parameter $\mu$ and initial distribution}. Top and middle panels: experimental data histograms of the number of precursor cells in primordial follicles with inferred Poisson distributions. Histograms with coral-colored bars: initial precursor cell number in primordial follicles for Wild-Type (top panels) and Mutant (middle panels) subsets. For submodels $(\mathcal{R}_1,\mathcal{R}_4)$ (left panels), $(\mathcal{R}_1,\mathcal{R}_3, \mathcal{R}_4)$ (center panels) and $(\mathcal{R}_1,\mathcal{R}_2,\mathcal{R}_3, \mathcal{R}_4)$ (right panels), we plot: in white dashed lines, the truncated Poisson distribution \eqref{eq-poisson-trun-law} with MLE using Eqs.\eqref{eq-like}-\eqref{eq:likelihood_total} ($\mu$ is estimated together with the remaining parameters) and, in colored solid lines, the truncated Poisson distribution with $\mu$ in the associated confidence interval of the MLE; in black dashed lines: the truncated Poisson distribution with MLE using Eq.~\eqref{eq:likeli_init}  ($\mu$ is estimated only with primordial dataset) and, in gray solid lines, the truncated Poisson distribution with $\mu$ in the associated confidence interval of the MLE.
		Bottom panels: Wild-Type (left panel), Mutant (right panel). Cyan dashed lines: log-likelihood function  $\mathcal{L}_{ini}$ given by Eq.~\eqref{eq:likeli_init} (primordial data set only); red dashed lines: $ 95\%$ confidence interval;  colored solid lines (resp. filled circles): confidence intervals of $\mu$ (resp. MLE) for each submodel with evalution of the log-likelihood function  $\mathcal{L}_{ini}$ at the MLE. }
		\label{fig:primary_primordial}
	\end{figure}
		
	\subsubsection{Proliferative cell proportion: reconstruction of time}\label{ssec:timereconstruc}
	In Figure \ref{fig:modelpred_twoparms}, we represent the predicted changes in the proliferative cell proportion with respect to time. 
	{For sake of readability, these predictions are derived from the deterministic formulation of the full model (Eq.~\eqref{eq-pc-time}). We expect that a similar trend would be observed with the stochastic CTMC formulation.} For each model, we superimpose the time trajectories corresponding to the parameter combinations for which the PLE is below the $95\%$ threshold. In both the Wild-Type and Mutant cases, despite the uncertainty affecting the model parameters for the two linear submodels (left and right upper panels), the dynamics just exhibit small uncertainties: the proportion of proliferative cells reaches $ 50\%$-$70\%$  in one time unit, which corresponds to the time unit of a single spontaneous transition event. This might due partly to the fact that  parameter $\gamma$ is partially identifiable and is estimated to relatively low values. In contrast, the lack of parameter identifiability of the full model results in a huge uncertainty on the dynamics, that can be up to 5 order of magnitude faster than a single spontaneous transition event: the proportion of proliferative cells reaches $ 50\%$ between $10^{-6}$ and $1$ time unit. Indeed, cell event $\mathcal{R}_2$ (controlled  by parameter $\beta$) can speed up the transition dynamics, and cell event $\mathcal{R}_3$ (controlled by parameter $\alpha_2$) can trigger the first transition, leading to a possible fast activation which avoids the bottleneck of the spontaneous transition timescale ($\alpha_1=1$). {It is difficult to instantiate these relative durations in physical time units. The only kinetic information available on the activation process is given by studies that have monitored the sequential apparition of different follicle types during fetal development. In wild-type animals, the first primordial follicles appear around 75 days of gestation, while the first primary follicles are observed around 100 days \cite{mcnatty_development_1995}. A 25 day-duration can thus be considered as close to the minimal duration. }
	No clear timescale separation between the Wild-type and Mutant dynamics can be revealed, although some parameter combinations are compatible with a faster transition in the Wild-Type case than in the Mutant case. {This is again compatible with monitoring studies, which observed that the times of apparition of both the first primordial and primary follicles are shifted compared with wild-type animals (they appear a little later), yet the delay in between does not appear to be significantly different.}
			
	\begin{figure}[htb!]
		\centering
		\includegraphics[width=\linewidth]{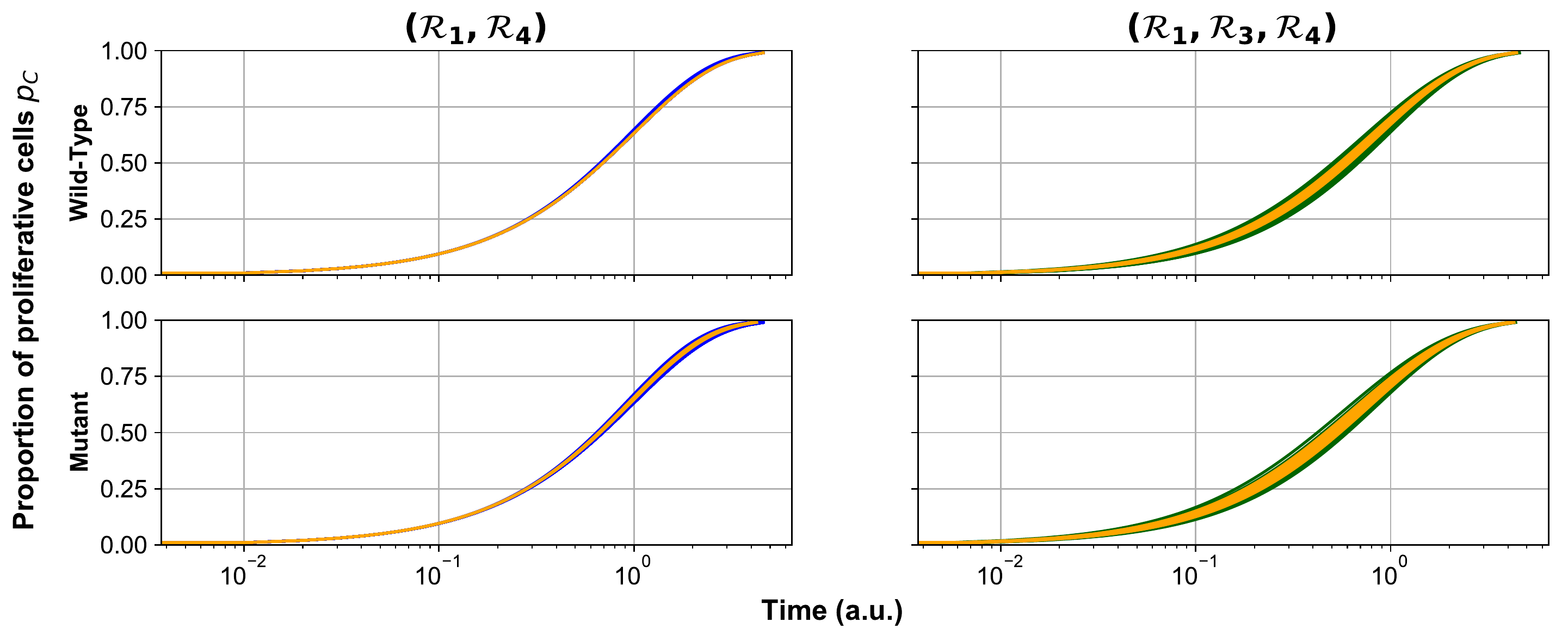}
		\includegraphics[width=\linewidth]{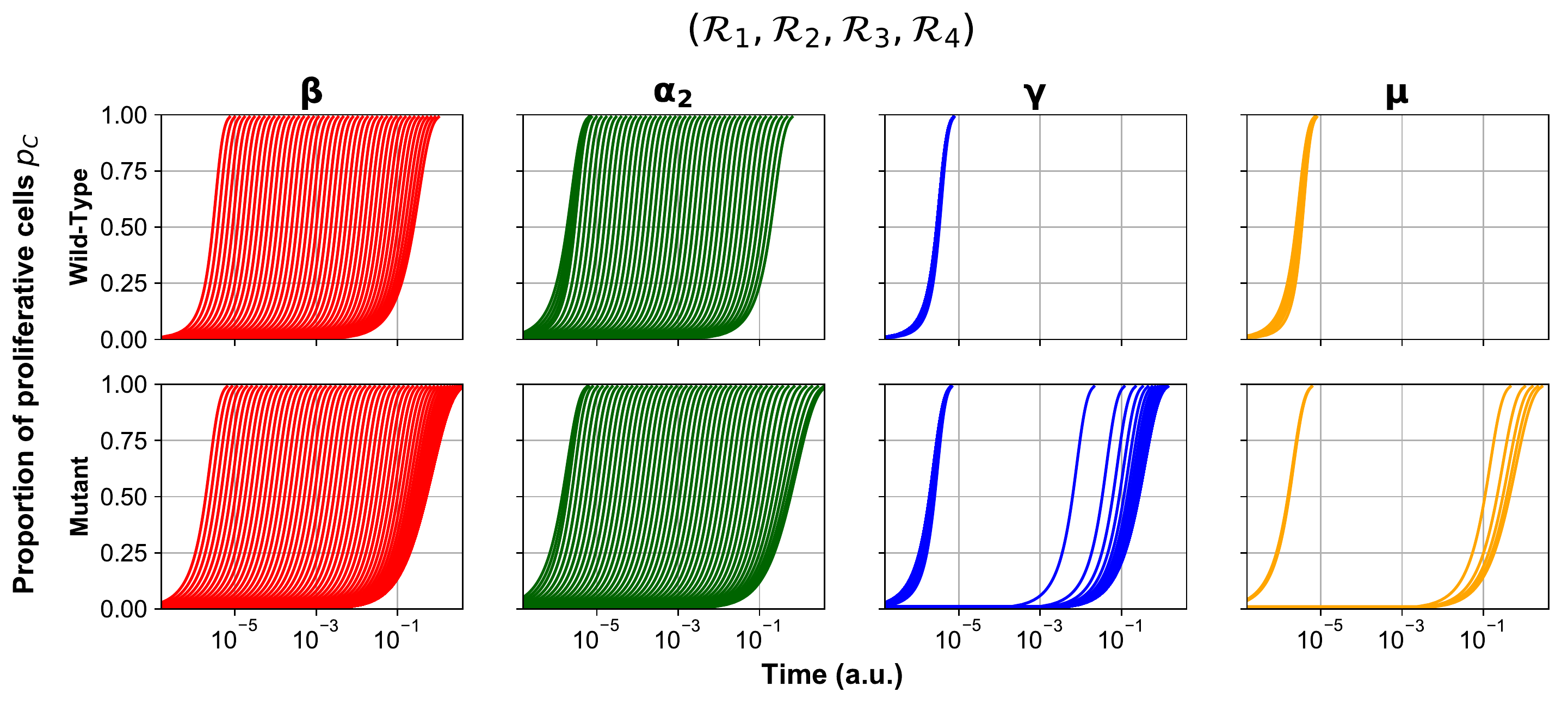}
		\caption{{\textbf{Dynamics of the proportion of proliferative cells $p_C(t)$}. For submodel $(\mathcal{R}_1, \mathcal{R}_4)$ (top left panels), $(\mathcal{R}_1,\mathcal{R}_3, \mathcal{R}_4)$ (top right panels) and whole model $(\mathcal{R}_1,\mathcal{R}_2,\mathcal{R}_3, \mathcal{R}_4)$ (bottom panels), we plot the deterministic proportion of proliferative cells $p_C(t)$ computed from Eq.~\eqref{eq-pc-time} with the fitted parameters lying in the MLE confidence interval associated with each PLE (see subsection \ref{subsec-likelihood-method} for details). Red lines: $p_C(t)$ with parameters in the PLE of the auto-amplified transition rate $\beta$; green lines: $p_C(t)$ with parameters in the PLE of the asymmetric division rate $\alpha_2$; blue lines: $p_C(t)$ with parameters in the PLE of symmetric division rate $\gamma$; yellow lines: $p_C(t)$ with parameters in the PLE of the initial condition parameter $\mu$.}}
		\label{fig:modelpred_twoparms}
	\end{figure}

    \subsubsection{Mean extinction time, mean number of cells at the extinction time and mean number of division events during activation}\label{ssec:moment_data}
        In Figure \ref{fig:modelpred_model_four}, we represent the mean number of proliferative cells, $\mathbb{E}\left[ C_{\tau}\right]$, and the mean number of division events during activation, $\mathbb{E}\left[ C_{\tau}-F_0\right]$, as a function of the mean extinction time $\mathbb{E}\left[ \tau\right]$, as predicted from the selected (sub)models $(\mathcal{R}_1,\mathcal{R}_4)$, $(\mathcal{R}_1,\mathcal{R}_3,\mathcal{R}_4)$ and $(\mathcal{R}_1,\mathcal{R}_2,\mathcal{R}_3,\mathcal{R}_4)$. These predictions are obtained from a direct stochastic simulation of the  trajectories of each model (with Gillespie algorithm, or SSA)\footnote{We use here the direct simulation rather than Algorithm \ref{algo-gr}, because the parameter range explored by the symmetric division rate $\gamma$ gets close to the theoretical necessary and sufficient condition $\gamma<\alpha_1+\beta$, while the Algorithm \ref{algo-gr} requires $2\gamma<\alpha_1+\beta$.}, using the parameter values obtained from the identifiability analysis, for which the PLE is below the $95\%$ threshold. For each subset (Wild-Type or Mutant), the predicted value for $\mathbb{E}\left[ C_{\tau}\right]$ is similar in each submodels and lies between 8 and 10 cells. Interestingly, the predicted value for $\mathbb{E}\left[ C_{\tau}\right]$ is approximately 6-8 cells lower than the empirical mean number of proliferative cells obtained directly from the primary follicle data (data points $(0,C)$
         with $F=0$) (Figure \ref{fig:modelpred_model_four}, top panels). This observation is consistent with the trajectory analysis performed from Figure \ref{fig:bestfit}, from which we have concluded that the activation process follows with high probability a trajectory reaching state $F=0$ with a low cell number, and characterized by direct transition and very little concomitant cell proliferation.
         {Similarly, {$\mathbb{E}\left[ C_{\tau}-F_0\right]$} is approximately 5-7 cells lower than the increase in the mean empirical number of cells between the primordial follicle datasets and primary follicle datasets (Figure \ref{fig:modelpred_model_four}, bottom panels).}
       {$\mathbb{E}\left[ \tau\right]$ in} the two linear submodels $(\mathcal{R}_1,\mathcal{R}_4)$ and $(\mathcal{R}_1,\mathcal{R}_3,\mathcal{R}_4)$ depends only on the initial condition and is estimated to a value around $2.5$ a.u. with a small uncertainty, similarly as in Figure \ref{fig:modelpred_twoparms}. In contrast, the full model yields a larger uncertainty on {$\mathbb{E}\left[ \tau\right]$}, with a confidence interval between $10^{-6}$ and $0.5$ a.u. for the Wild-Type subset, and between $10^{-6}$ and $2.5$ a.u. for the Mutant subset, consistently with the prediction on the dynamics of the proliferative cell proportion (Figure \ref{fig:modelpred_twoparms}). {From our theoretical results on parameter {sensitivity} in section \ref{part-general-case} (see Figure \ref{fig:odetimef}), we have found that $\beta$ has a profound impact on $\taup$. Any additional knowledge on the follicle activation duration would thus be valuable to further constraint the parameter uncertainty.}   

	\begin{figure}[htb!]
		\centering
		\includegraphics[width=\linewidth]{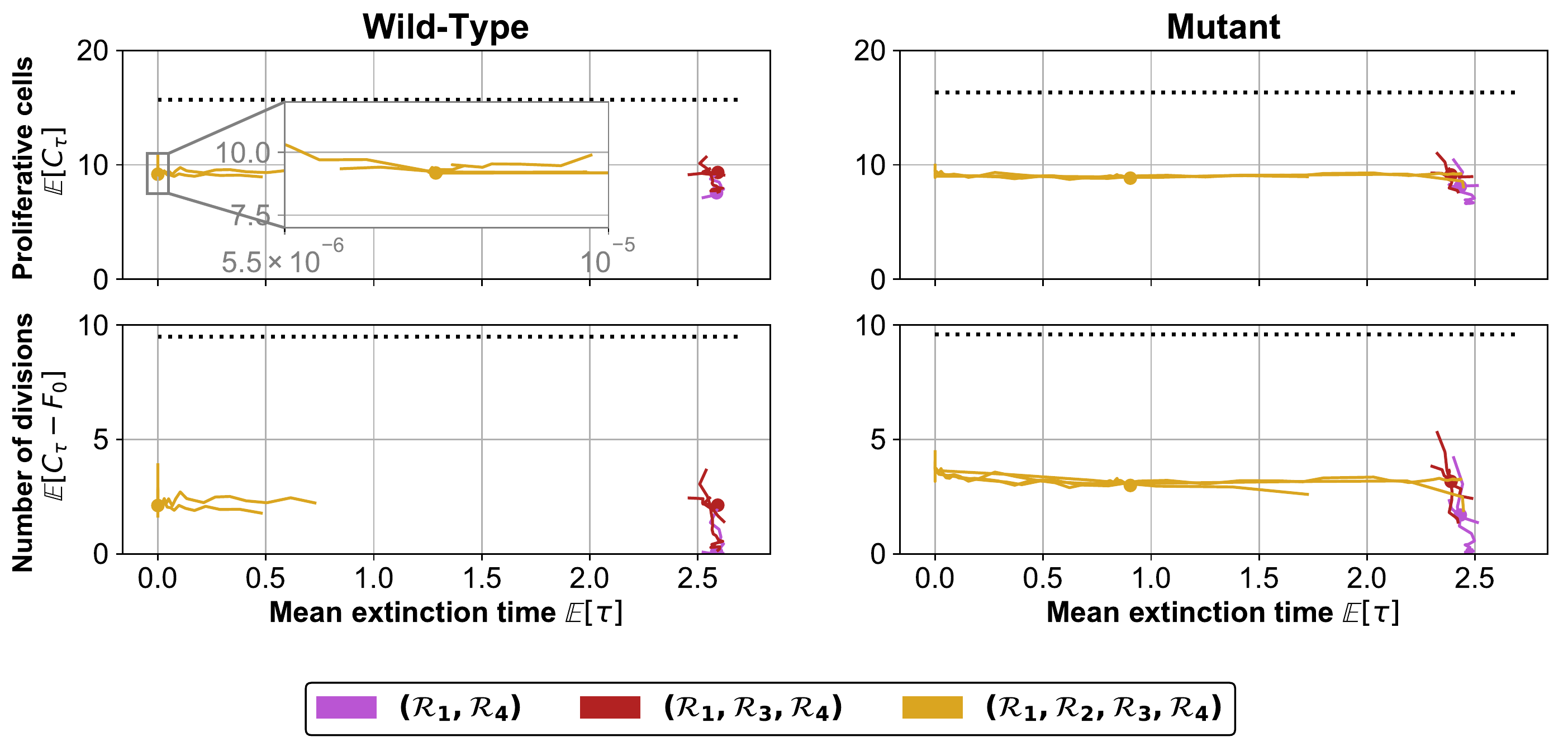}
		\caption{\textbf{Prediction of the  mean number of proliferative cells at the precursor cell extinction time, and mean number of division events during follicle activation.} We plot the mean number of proliferative cells at the extinction time $\mathbb{E}\left[\cextin\right]$ (top panels), and the mean number of division events before extinction $\mathbb{E}\left[\cextin-F_0\right]$ (bottom panels) as a function of the mean extinction time $\mathbb{E}\left[\taup\right]$ (left panels: Wild-Type; right panels: Mutant). For each selected submodel and for each parameter sets lying within a MLE confidence interval (see subsection \ref{subsec-likelihood-method} for details), we simulate 10,000 trajectories with the Gillespie algorithm, up to the extinction event $\{F=0\}$, and compute $\mathbb{E}\left[\taup\right] $, $\mathbb{E}\left[\cextin\right]$ and $\mathbb{E}\left[\cextin - F_0\right]$ from standard empirical mean estimates. Colored solid lines: $\mathbb{E}\left[\cextin\right]$, $\mathbb{E}\left[\cextin-F_0\right]$ as a function of $\mathbb{E}\left[\taup\right]$ for parameters lying in a MLE confidence interval; The filled circles represent the optimal MLE value. Dotted black lines: standard empirical mean estimate of proliferative cell numbers (top panels) and division events (bottom panels) before extinction using the primary follicles data set (all data points without flattened cells). }
		\label{fig:modelpred_model_four}
	\end{figure}

{Predictions on the mean number of divisions events could be in theory amenable to validation by experimental cell kinetics study. Such studies, enabling for instance to infer the possibly time-varying doubling times in cell populations have been performed for later developmental stages (\cite{turnbull_77} in sheep or \cite{pedersen_70} in mice). They cannot be conducted as such for the earliest stages because of the excessive slowness of cell events. Promising {\it ex-vivo}/{\it in vitro} devices \cite{morohaku_16} reproducing all steps of follicle development could be appropriate to settle elaborate cell lineage tracing informing on cell division events. Yet this is a long term perspective, since such devices are rather at the proof-of-concept level for the time being. In addition, they are up to now restricted to the mouse, since no feasible culture system for primordial and primary follicles is yet available for other species (see the overview picture including sheep in \cite{morohaku_19}).}
%\comR{parfait!}

\subsection{Biological interpretation}\label{sec:interpretation}

    {From the primordial follicle data, we have found that the mean initial number of precursor cells for the Wild-Type subset is about the same as for the Mutant. Moreover, the prediction on the total number of proliferative cells at the end of the activation phase, $\mathbb{E}\left[C_{\tau}\right]$, is also very similar in the Wild-Type and Mutant cases. The observed shift in opposite directions for the mean initial cell number inferred from the MLE of the dynamical models (see bottom panel of Figure \ref{fig:primary_primordial}) 
    	 is thus compensated for by the differences in cell dynamics. The number of divisions during the transition is smaller in the Wild-Type than in the Mutant subset ($\mathbb{E}[C_\tau-F_0]\approx 2$ in Wild-Type, $\mathbb{E}[C_\tau-F_0]\approx 4$ in Mutant), as a result of a global difference between the MLE parameters: the order of magnitude of the division rates are closer to that of the transition rates in the Mutant compared to the Wild-Type subset. In overall, we conclude from our extensive datafitting analysis that the Wild-Type subset exhibits a clearer separation of dynamics during follicle activation (first cell transition, then cell proliferation), while in the Mutant cell proliferation could occur at a substantial rate before precursor cell extinction. We note that this conclusion has to be tempered by the sparse character of our experimental dataset. In particular, a detailed examination of the experimental data reveals that the four data points available for transitory follicles in the Wild-Type subset correspond to a clearly higher number of precursor cells than any of the primordial follicles, which certainly impacts our results. In contrast, the Mutant subset contains transitory follicles with significantly fewer precursor cells than the primary follicles.}
    
    {Even if there is a clear trend in the data to substantiate the existence of an auto-amplification of the transition from flattened to cuboidal cells, complementary data would be useful to decide the question. Indeed, getting more datapoints with a proportion of cuboidal cells in the range of 50 to 100 \% would constrain a step further the {follicle activation} trajectories, hence the parameter values and differences between nested models. The very fact that fewer follicles are counted in this range in the Wild-Type subset pleads for a possible acceleration of the transition.  Statistically, including more animals and more gestation times in the study would increase the number of data, including data missing in our current dataset, yet it would require enrolling many experimental animals. Ideally, monitoring the cell dynamics of ovarian follicles {\it in vivo}, in a non invasive manner, would provide all needed data. Yet, it is far from being a reachable target at the moment. Even the morphological monitoring of follicles (individual changes in follicle diameters) can only be performed for much later developmental stages due to size resolution (no reliable data can be obtained below 2mm diameter). 
    {An alternative} would be to record the location of the cuboidal cells with respect to the flattened ones, in consistency with the spatial interpretation of auto-amplification. The auto-amplification rate is motivated by two possible (and non exclusive) underlying mechanisms. First, the very first cell transitions could awake the oocyte and settle a positive feedback loop between the somatic cells and the oocyte \cite{adhikari_09,monniaux_18b} that would in turn secrete stimulatory factors reaching the surrounding somatic cells by diffusion (global amplification). Second, communications between adjacent somatic cells could help propagate activation step by step, from one (or a few) originally activated cell (local amplification). Local amplification might be detected in the data by recording the location of cuboidal cells and checking whether cluster of spatially related cuboidal cells can be detected. Global detection is expected to have a more homogenous effect, hence to be hardly detectable from static histological data.} 

    Finally, we highlight that the $\beta$-free linear submodel $(\mathcal{R}_1, \mathcal{R}_3, \mathcal{R}_4) $ performs as well as, and even better than the complete  model \eqref{Model_FC} $(\mathcal{R}_i)_{i \in \llbracket 1, 4 \rrbracket}$ in Mutant compared to Wild-Type ewes, which is compatible with the functional hypotheses applicable to the BMP15R mutation {\cite{reader_12}}. Indeed, one could speculate that the diminished BMP15 signaling would hamper the molecular dialog between the oocyte and somatic cells after follicle activation triggering, so that the auto-amplified cell event would barely occur in the Mutant group.
	
\section{Conclusion}

    In this work, we have introduced a stochastic nonlinear cell population model to study the sequence of events occurring just after the initiation of follicle growth. We have characterized the dynamics of precursor and proliferative cell populations according to the parameter values, for both the stochastic model and its deterministic mean-field counterpart. We have studied in details the extinction time of the precursor cell population, and designed an algorithm to compute numerically both the mean extinction time and mean number of proliferative cells at the extinction time. The algorithm is based on a domain truncation similar to the Finite State Projection (FSP) method proposed in \cite{munsky_finite_2006,kuntz_deterministic_2017}. The FSP approach aims to approximate the law of the process at a given time by solving a truncated version of the Kolmogorov forward system. We have adapted the FSP algorithm to {close} the infinite recurrence relation satisfied by the extinction time moments.
    We have found a consistent spatial boundary to solve the closure problem, thanks to a coupling technique and tractable upper-bound process. The numerical cost of the algorithm is deeply related to the proper choice of the upper-bound processes {and gets worse than direct simulation as $2\gamma$ gets close to the required bound $\alpha_1+\beta$ of Algorithm~\ref{algo-gr}}. 
    
    This algorithm has nevertheless allowed us to investigate the parameter influence on the precursor cell extinction time and number of proliferative cells at the end of the follicle activation phase. The auto-amplified transition rate  $\beta$ exerts a critical control on the mean extinction time, with a sharp timescale reduction when $\beta$ exceeds the spontaneous cell transition $\alpha_1$, while the division rates ($\alpha_2$, $\gamma$) have relatively less effect. The effect of the auto-amplification process is probably dependent on the specific parameterization of the cell event rates chosen in this work, yet our findings bring interesting insight into the  mechanisms underlying follicle activation; nonlinear feedbacks mediated through cell-to-cell communication certainly play a role, and our estimation results have shown that any impairment of this feedback would change drastically the kinetics of follicle activation.
    
    Moreover, our results can be useful to understand the variability in the cell numbers among ovarian follicles at the end of the activation phase, which can be used as initial conditions for models describing the following stages of follicle development \cite{clement_coupled_2013,CRY2019}. 
     {Going even further, the sequence of events occurring just after the initiation of follicle growth is determinant for the remaining of the entire follicle development process. The whole cell population in mature (ovulatory) follicles (up to tens of millions in large mammal species as humans) emanates from the few cells a primordial follicle is endowed with. The timings of the cells' first divisions will determine the distribution of cytological cell ages in the population, which will ultimately influence the distribution of the times of cell cycle exit in fully differentiated cells. Collectively, the exit time distribution controls the switch from  proliferation to differentiation, a key event in the selection of ovulatory follicles \cite{clement_13b}. Also, the proliferative vitality of the cuboidal (transitioned) cells will control the clonal composition of the follicles and participate in the spatial and functional heterogeneity within follicle cell populations, persisting very late in development.}
    
    We have performed the parameter calibration in a special context of time-free data. It turns out that the proliferative cell {number} can be seen as a clock for the whole process, and that the embedded Markov chain is better adapted to time-free data than the continuous-time model. We have used the embedded Markov chain to define a proper likelihood function and a statistically rigorous framework. The likelihood function has allowed us to perform an extensive data fitting analysis, using the very useful concept of profile likelihood estimate. This analysis sheds light onto several aspects of the activation of ovarian follicles. First, the transition scenario, where cell proliferation is mostly posterior to cell transition, and the cell number increase is moderate, seems to be predominant versus a more proliferative scenario. While the question is still open, it seems likely that cell transition is favored in the Wild-Type strain compared to the Booroola mutant strain. With the available experimental dataset, we have yet not managed to make a clear distinction between, on one side, a progressive transition with a steady net flux from flattened to cuboidal cells, and, on the other side, an auto-catalytic transition with an ever increasing flux all along the activation phase.
    
    Beyond our application in female reproductive biology, we believe that the modeling approach presented here can have a more generic interest in cell kinetics related issues, especially when a small number of cells is involved. Also, from the mathematical biology viewpoint, the analysis performed on the extinction time, combining theoretical (coupling) and numerical (finite state projection) tools may have an interest for first passage time studies in stochastic processes.

\section{Appendix}

\subsection{Justification of the choice of the rate of $\mathcal{R}_2$}\label{sec:appendix-R2}
{As detailed in Section \ref{sec:interpretation} the auto-amplification can result from two non-exclusive mechanisms, a nonlocal (global) one and a local one.}\\
{{Global amplification}: consider that each proliferative cell sends a fixed amount of {growth }signals to the oocyte. The oocyte thus receives a signal proportional to the number of proliferative cells $C$. We consider that the oocyte secrete in turn (instantaneously) a stimulatory signal, {at a level} proportional to the amount of {growth} signals received {from somatic cells}. By homogeneous diffusion, {the oocyte signal} is shared equally to all somatics cells, so that each precursor cell receive a signal proportional to $C/(F+C)$.\\
{Local amplification}: for a given precursor cell, assuming a random repartition of the cell {types} around the oocyte {(hence neglecting local cell-to-cell effects)}, the probability that a neighbor cell is a proliferative cells is $C/(F+C-1)$, which is also consistent with our choice. 
}

\subsection{Mean-field formulation}
		{To get some insight into the model behavior, we describe the mean-field version of model \ref{Model_FC}, given by the following set of ODE:
		\begin{equation}\label{ODE_MC}
			\left\{ 
			\begin{array}{c}
				\frac{d}{dt}f(t) = - \alpha_1 f(t) - \beta f(t) \frac{c(t)}{f(t) + c(t)}, \\
				\frac{d}{dt}c(t) = (\alpha_1 + \alpha_2)f(t) + \beta f(t) \frac{c(t)}{f(t) + c(t)} +\gamma c(t),
			\end{array}
		\right.
		\end{equation}
		with the initial condition $(f(0), c(0)) = (f_0, 0) $, with $f_0 \in \mathbb{R}_+ $.
		We start by solving analytically the deterministic formulation, and then investigate the effect of each parameter on the model outputs.
		}

		{From the ODE sytem \eqref{ODE_MC}, we deduce the change in the proliferative cell proportion $p_C(t) := \frac{c(t)}{f(t) + c(t)}$:
			\begin{multline}\label{ODE_pC}
				\frac{d}{dt}p_C(t) =  \alpha_1 + \alpha_2 - (\alpha_1 + 2 \alpha_2 - \beta - \gamma)p_C(t) + (\alpha_2 - \beta - \gamma)p_C(t)^2 \\
				= (\alpha_2 - \beta - \gamma) (p_C(t) - 1) (p_C(t) - \frac{ \alpha_1 + \alpha_2}{\alpha_2 - \beta - \gamma}).
			\end{multline}
			From ODEs \eqref{ODE_MC} and \eqref{ODE_pC}, using the classical method of separation of variables, we can compute the analytical expressions for the proliferative cell proportion $p_C(t)$, proliferative cell number $c(t)$ and precursor cell number $f(t)$:  
			\begin{proposition}\label{prop-ODE_system_sol}
				The solution of the ODE system \eqref{ODE_MC} is, for all $t\geq 0$, 
				\begin{multline*}
				f(t) = f_0\exp \left(- \alpha_1 t - \beta \int_{0}^{t}p_C(s)ds \right),  \\ 
				c(t) = f_0 \left( \exp \left(\alpha_2 t +  (\gamma - \alpha_2)  \int_{0}^{t} p_C(s)ds\right) -  \exp \left(- \alpha_1 t - \beta \int_{0}^{t}p_C(s)ds \right)\right).
				\end{multline*}
				In addition, the solution of ODE \eqref{ODE_pC} is 
			\begin{equation}\label{eq-pc-time}
					p_C(t) =  \frac{1 - \exp\left( -(\alpha_1 + \beta + \gamma)t\right) }{1 -  \frac{\alpha_2 - \beta - \gamma}{\alpha_1 + \alpha_2}\exp\left( -(\alpha_1 + \beta + \gamma)t\right)} .
				\end{equation}
				and 
				the total cell number verifies
				\begin{equation*}
				n(t) := f(t) + c(t)= f_0 \exp \left(\alpha_2 t +  (\gamma - \alpha_2)  \int_{0}^{t} p_C(s)ds\right).
				\end{equation*}
			\end{proposition}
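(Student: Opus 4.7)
The plan is to reduce the two-dimensional ODE system to scalar quadrature problems by exploiting the natural change of variables to the proportion $p_C$ and the total population $n = f + c$, then reconstruct $(f,c)$ at the end.

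First, I would derive the ODE for $p_C$ already displayed in \eqref{ODE_pC}. Starting from $p_C = c/(f+c)$, the quotient rule gives
\begin{equation*}
\dot p_C = \frac{\dot c f - \dot f c}{(f+c)^2} = \frac{(\alpha_1+\alpha_2) f + \beta f p_C + \gamma c}{f+c} (1-p_C) + \alpha_1 p_C + \beta p_C^2\,,
\end{equation*}
and after grouping in powers of $p_C$ one recovers the quadratic right-hand side of \eqref{ODE_pC}. This step is purely algebraic.

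Next, I would solve the separable scalar ODE for $p_C$. The roots of the quadratic are $p_C = 1$ and $p_C = r := (\alpha_1+\alpha_2)/(\alpha_2-\beta-\gamma)$, and the partial fraction
\begin{equation*}
\frac{1}{(p_C-1)(p_C-r)} = \frac{1}{1-r}\left(\frac{1}{p_C-1} - \frac{1}{p_C-r}\right)
\end{equation*}
reduces the integration to two logarithms. Using $(1-r)(\alpha_2-\beta-\gamma) = -(\alpha_1+\beta+\gamma)$ and the initial condition $p_C(0)=0$, the exponential solution rearranges into the displayed formula \eqref{eq-pc-time}. A degenerate case analysis may be needed when $\alpha_2-\beta-\gamma = 0$ (so that $r$ is infinite), in which case the ODE is affine and integrates directly to a formula that coincides with the limit of the stated expression.

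With $p_C$ known, the $f$-equation reads $\dot f = -(\alpha_1 + \beta p_C(t))\, f$, which is a linear first-order ODE and integrates immediately to the announced expression. For the total $n = f+c$, summing the equations of \eqref{ODE_MC} gives
\begin{equation*}
\dot n = \alpha_2 f + \gamma c = \bigl(\alpha_2(1-p_C) + \gamma p_C\bigr) n = \bigl(\alpha_2 + (\gamma-\alpha_2) p_C\bigr)\, n\,,
\end{equation*}
which is again linear and yields the formula for $n(t)$. Finally, $c(t) = n(t) - f(t)$ produces the claimed expression. The main obstacle is purely bookkeeping (handling the sign of $(1-r)$ and the possible degeneracy $\alpha_2 = \beta+\gamma$); there is no substantive analytical difficulty.
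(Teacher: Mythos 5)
Your strategy---pass to the variables $(p_C,n)$, solve the scalar Riccati-type ODE for $p_C$ by separation of variables and partial fractions, then integrate the two resulting linear equations for $f$ and $n$ and set $c=n-f$---is exactly the route the paper takes (the paper merely invokes ``the classical method of separation of variables'' without detail). Your key identities, in particular $(1-r)(\alpha_2-\beta-\gamma)=-(\alpha_1+\beta+\gamma)$ with $r=(\alpha_1+\alpha_2)/(\alpha_2-\beta-\gamma)$, the use of $p_C(0)=0$, and the observation that the degenerate case $\alpha_2=\beta+\gamma$ reduces to an affine ODE whose solution coincides with the limit of \eqref{eq-pc-time}, all check out, as do the quadratures for $f$ and $n$.

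There is, however, one incorrect intermediate display. In the quotient-rule step you claim
\begin{equation*}
\frac{\dot c f - \dot f c}{(f+c)^2} = \frac{(\alpha_1+\alpha_2) f + \beta f p_C + \gamma c}{f+c}\,(1-p_C) + \alpha_1 p_C + \beta p_C^2\,,
\end{equation*}
but the second term arises from $-\dot f c/(f+c)^2=(\alpha_1+\beta p_C)\,\tfrac{fc}{(f+c)^2}=\bigl(\alpha_1 p_C+\beta p_C^2\bigr)(1-p_C)$: you have dropped the factor $f/(f+c)=1-p_C$. As written, your right-hand side carries a spurious cubic term $\beta p_C^3$ and has quadratic coefficient $\alpha_1+\alpha_2-\beta-\gamma$ rather than $\alpha_2-\beta-\gamma$, so it does \emph{not} regroup into the right-hand side of \eqref{ODE_pC} as you assert. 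With the missing factor restored, the expression collapses to
\begin{equation*}
(\alpha_1+\alpha_2)(1-p_C)^2+(\alpha_1+\beta+\gamma)\,p_C(1-p_C)\,,
\end{equation*}
which does expand to $\alpha_1+\alpha_2-(\alpha_1+2\alpha_2-\beta-\gamma)p_C+(\alpha_2-\beta-\gamma)p_C^2$ as required. Once this slip is corrected, the rest of your argument is complete and correct.
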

	}
		{From Proposition \ref{prop-ODE_system_sol}, it is clear that the proliferative cell proportion $p_C$ converges to $1$. If $\gamma>0$, the proliferative cell number $c$ grows asymptotically exponentially at a rate $\gamma$ when $t\to\infty$. If $\gamma=0$, $c(t)$ is bounded because $t\mapsto 1-p_C(t)$ is converging exponentially fast to $0$, hence is integrable on $(0,\infty)$.}
		{Moreover, the proliferative cell proportion $p_C$ has an inflexion point if and only if \[\beta+\gamma>\alpha_1+2\alpha_2\,.\] 
		An inflexion point denotes the presence of at least two distinct phases, with a first progressive acceleration phase followed by a saturating phase.}			
			
			Finally, note that according to the observed variables, the submodels cannot be distinguished from one another, or, alternatively, different parameter values (within a same submodel) may lead to identical outputs. Indeed, the changes in the precursor cell population are independent of parameters $\alpha_2,\gamma$, and, more strikingly,  parameters $ \beta$ and $\gamma$ cannot be separated in the analytical solution \eqref{eq-pc-time}, leading to the same kinetic patterns for $p_C$ as long as the combination $\gamma + \beta$ remains unchanged.

	\subsection{Analytical expressions in the linear case}\label{ssec:proof_linear}
	
	\begin{proof}[Proof of Proposition \ref{prop-loi-fLT}]
							Let $t \geq 0$ and $f \in \llbracket 0, f_0 \rrbracket $. Since $F_t$ is autonomous and is a pure death process, we can directly write the following forward Kolmogorov equation: for all  $f \in \llbracket 0, f_0 \rrbracket$,
						\begin{multline}\label{proof-FLt-law}
						\frac{d}{dt}\mathbb{P}\left[ F^L_t = f |F_0 = f_0\right] = \\
						 \alpha_1(f + 1) \mathbb{P}\left[ F^L_t = f+1 |F_0 = f_0\right]  -  \alpha_1 f  \mathbb{P}\left[ F^L_t = f |F_0 = f_0\right] .
						\end{multline}
						Solving by recurrence \eqref{proof-FLt-law}, we deduce that, for all  $f \in \llbracket 0, f_0 \rrbracket$,
						\begin{equation*}
							\mathbb{P}\left[ F^L_t = f | F_0 = f_0\right]  = \dbinom{f_0}{f} (e^{ - \alpha_1 t})^f(1 - e^{ - \alpha_1  t})^{f_0 -f}.
						\end{equation*}
						Note that $\mathbb{P}\left[ F_t^L = 0 | F_0 = f_0 \right] = (1 - e^{-\alpha_1 t})^{f_0}$ which converges to $1$ when $t\mapsto 1-p_C(t)$. Hence, process $F^L$ extincts almost surely (a.s.) when $t$ goes to infinity, hence $\taulinear < \infty $. Before computing the law of $\taulinear$, we can directly obtain its mean using the recursive expression \eqref{eq-jump-time-FL}:
						\begin{equation*}
							\mathbb{E}\left[ \taulinear \right] = \sum_{k = 0}^{f_0 - 1} 	\mathbb{E}\left[ T_{k + 1} - T_k \right]  = \sum_{k = 0}^{f_0 - 1} 	\mathbb{E}\left[ \mathcal{E}\left(\alpha_1 (f_0 - k)\right) \right] = \frac{1}{\alpha_1}\sum_{k = 1}^{f_0}\frac{1}{k}.
						\end{equation*}
						Using again  Eq.~\eqref{eq-jump-time-FL}, we deduce that $\taulinear (= T_{f_0})$ follows a generalized Erlang law whose density function is:
						\begin{equation}\label{taup-e1}
						f_{\taulinear}(t) = \mathds{1}_{t \geq 0} \sum_{i = 0}^{f_0 - 1} \prod_{j \neq i, j = 0}^{f_0 - 1}\frac{f_0 -j}{i - j} \alpha_1 (f_0 - i)e^{-\alpha_1(f_0 - i)t}.
						\end{equation}
						Due to the specific form of the exponential rate, we can simplify Eq.~\eqref{taup-e1} further.
						As $ \displaystyle \prod_{j \neq i, j = 0}^{f_0 - 1}(f_0 -j)  = \frac{f_0 !}{f_0 - i}$ and
						\begin{align*}
						\displaystyle \prod_{j \neq i, j = 0}^{f_0 - 1} (i - j)  = & \displaystyle \prod_{j= 0}^{i-1} (i-j) \times \prod_{j= i+1}^{f_0 - 1} (i - j) \\
						& = i! (-1)^{f_0 - 1 - i} \prod_{j= 1}^{f_0 - 1 - i} j = (-1)^{f_0 -1 - i}i !(f_0 - 1 - i) !,
						\end{align*}
						we deduce
						\begin{align*}
						f_{\taulinear}(t) = & \alpha_1 \mathds{1}_{t \geq 0} \sum_{i = 0}^{f_0 - 1} \frac{f_0 !}{i !(f_0 - 1 - i) !} (-1)^{f_0 - 1 - i} e^{-\alpha_1(f_0 - i)t} \\
						=  & \alpha_1 f_0 e^{-\alpha_1t}\mathds{1}_{t \geq 0} \sum_{i = 0}^{f_0 - 1}  \dbinom{f_0 - 1}{i} (-e^{-\alpha_1t})^{f_0 - i - 1}  \\
						= & \alpha_1 f_0 e^{-\alpha_1t} (1 - e^{-\alpha_1t})^{f_0 - 1}\mathds{1}_{t \geq 0}.
						\end{align*}
			\myspecialendproof \end{proof}
			
	\begin{proof}[Proof of Proposition \ref{prop-Ce-linear}]
		According to Proposition \ref{prop-loi-fLT}, $\taulinear $ is a.s. finite. To take the expectation of $C^L_t $ at time $t =  \taulinear $, we check that  $\mathbb{E}\left[C^{k,j}_{\taulinear - T_k^j}  \right]< \infty $, for all $k$ and $j$. For all $t \geq 0$, $C_t^{k,j}$ is $L_1-$integrable (as a Yule process) with $\mathbb{E}\left[C^{k,j}_{t}  \right] = e^{\gamma t} $. Conditionning on the law of $\taulinear$, we get {(with the change of variables $x=1-e^{-\alpha_1 t}$)
				\begin{multline*}
					I:=\mathbb{E}\left[C^{k,j}_{\taulinear}  \right]= \int_{0}^{+ \infty} e^{\gamma t} f_{\taulinear}(t)dt   = f_0 \int_{0}^{+ \infty}e^{\gamma t}(1 - e^{-\alpha_1 t})^{f_0 - 1}\alpha_1e^{- \alpha_1t} dt \\
				= f_0 \int_0^1 (1-x)^{-\frac{\gamma}{\alpha_1}}x^{f_0-1}dx=f_0B\left(f_0,1-\frac{\gamma}{\alpha_1}\right)
				\end{multline*}
				where $B$ is the standard Beta function. {Hence} $I < \infty$ {if and only if} Hypothesis \ref{hyp-rate2} holds. Note that using the properties of the Beta function, we have
				\begin{equation}\label{eq:I}
			 I=\frac{f_0!}{\left(f_0-\frac{\gamma}{\alpha_1}\right)!}\,,
				\end{equation}
				where we use the notation $\left(m-x\right)!=\prod_{k=1}^m (k-x)$.
				}
				 {Thus, if Hypothesis \ref{hyp-rate2} holds true}, and given that $C^{k,j}$ is a positive increasing process, we deduce:
				\begin{equation*}
			\mathbb{E}\left[C^{k,j}_{\taulinear - T_k^j}  \right] \leq \mathbb{E}\left[C^{k,j}_{\taulinear}  \right]  < \infty.
				\end{equation*}
				Then, taking the expectation of \eqref{eq-decomp-Yule} at time $t =  \taulinear $, we obtain:
			\begin{equation}\label{eq-ctau-lin}
			\mathbb{E}\left[C^{L}_{\taulinear} \right] = \sum_{k = 1}^{f_0}  \mathbb{E} \left[  C^{k,0}_{\taulinear - T_k^0} \right] + \sum_{k = 0}^{f_0 - 1} \mathbb{E} \left[\sum_{j = 1}^{N_k(\taulinear)} C^{k,j}_{\taulinear - T_k^j} \right].
			\end{equation} 
			{Moreover, we have that each counting process $N_k(t)$ can be dominated by
			\begin{equation*}
			N_k(t) \leq \mathcal{Y}_3 \left( \alpha_2 f_0 t\right)\,,
			\end{equation*}
			so that
			\begin{equation*}
			\sum_{j = 1}^{N_k(\taulinear)} C^{k,j}_{\taulinear - T_k^j} \leq \sum_{j = 1}^{\mathcal{Y}_3(\taulinear)} C^{k,j}_{\taulinear}\,.
			\end{equation*}
			Finally, conditionally on $\taulinear$, $\mathcal{Y}_3(\taulinear)$ is independent of each $C^{k,j}_{\taulinear}$, and the latter are independent and identically distributed random variables. Using that
			\begin{equation*}
			\mathbb{E}\left[ \sum_{j = 1}^{\mathcal{Y}_3(\taulinear)} C^{k,j}_{\taulinear} \right] = \mathbb{E}\left[\mathbb{E}\left[ \sum_{j = 1}^{\mathcal{Y}_3(\taulinear)} C^{k,j}_{\taulinear} \mid \taulinear \right]\right]\,,
			\end{equation*}
			and the Wald equation \cite[Chap. XII]{Feller} , we obtain
			\begin{equation*}
			     \mathbb{E} \left[\sum_{j = 1}^{N_k(\taulinear)} C^{k,j}_{\taulinear - T_k^j} \right] \leq  \alpha_2 f_0 \int_{0}^{+ \infty} t e^{\gamma t}   f_{\taulinear}(t)dt\,, 
			\end{equation*}
			which is finite under Hypothesis \ref{hyp-rate2}. Finally, if Hypothesis \ref{hyp-rate2} does not hold, we have, as long as $f_0\geq 2$:
			\begin{equation*}
			 \mathbb{E} \left[  C^{1,0}_{\taulinear - T_1^0} \right]  \geq \mathbb{E} \left[  C^{1,0}_{T_2^0 - T_1^0} \right]=\infty\,.  
			\end{equation*}
			}
			In some {special} cases, Formula {\eqref{eq-ctau-lin}} can be used to obtain the first moment of $C^L_{\taulinear}$.
			
				When $ \gamma $ is zero, then for all $t \geq 0$, for all $k \in \llbracket 1, f_0  \rrbracket $ and for all $j \in \llbracket  1,N_k(\taulinear) \rrbracket $, $ C^{k,j}_t = 1$. We deduce directly from Eq.~\eqref{eq-ctau-lin} that
				\begin{equation}\label{eq-ctaul-proof}					\mathbb{E}\left[C^{L}_{\taulinear} \right] =  f_0 + \sum_{k = 0}^{f_0 - 1} \mathbb{E} \left[ N_k(\taulinear) \right].
				\end{equation}
				From Eq.~\eqref{eq-N_k-formula}, we have 
				\begin{multline*}
					\mathbb{E}\left[ N_k(\taulinear) \right] = \mathbb{E}\left[ \mathcal{Y}_3 \left( \alpha_2 \int_{0}^{T_{k+1}}  F^L_s ds \right) - \mathcal{Y}_3 \left( \alpha_2 \int_{0}^{T_{k}}  F^L_s ds \right) \right] \\
					= \mathbb{E}\left[ \mathcal{Y}_3 \left( \alpha_2 \int_{T_k}^{T_{k+1}}  F^L_s ds \right) \right] = \mathbb{E}\left[ \alpha_2 \int_{T_k}^{T_{k+1}}  F^L_s ds  \right],
				\end{multline*}
				by Poisson process property. Since for all $ t \in [T_k, T_{k+1})$, $F^L_t = f_0 - k$, we deduce that $\mathbb{E}\left[ N_k(\taulinear) \right] = \mathbb{E}\left[ \alpha_2 (f_0 - k)(T_{k+1} - T_k) \right]$.
				Using \eqref{eq-jump-time-FL}, we deduce that 
				$\mathbb{E} \left[ N_k(\taulinear) \right]  = \frac{\alpha_2(f_0 - k) }{\alpha_1(f_0 - k) } =  \frac{\alpha_2}{\alpha_1 } $ and conclude with \eqref{eq-ctaul-proof}.\\
				
				When $ \alpha_2 $ is zero, $ N_k (t)$ is null for all $t \geq 0$, and we deduce directly from \eqref{eq-ctau-lin} that
				\begin{equation}\label{eq-proof-Ctau-e2}
					\mathbb{E}\left[C^{L}_{\taulinear} \right] = \sum_{k = 1}^{f_0}  \mathbb{E} \left[  C^{k,0}_{\taulinear - T_k} \right].
				\end{equation}
				Since $T_{f_0} = \taulinear$, we have $  C^{f_0,0}_{\taulinear - T_{f_0}}  = 1$. 
				Let  $k \in \llbracket 1, f_0 - 1 \rrbracket $. Since $\taulinear - T_k \overset{(law)}{=} \sum_{i = k + 1}^{f_0} \mathcal{E}\left( \alpha_1 (f_0 - i + 1)\right) \overset{(law)}{=}  \sum_{i = 1}^{f_0 - k } \mathcal{E}\left( \alpha_1 i \right)  $, using Proposition \ref{prop-loi-fLT}, we deduce that the density function of $\taulinear - T_k $ is
				\begin{equation*}
					f_{\taulinear - T_k}(t) = \alpha_1 (f_0 - k) e^{- \alpha_1 t} (1 - e^{-\alpha_1 t})^{f_0 - k - 1}\mathds{1}_{t \geq 0}.
				\end{equation*}
				Then, conditioning $C^{k,0}_{\taulinear - T_k} $ on the law of $\taulinear - T_k $, we first deduce that
				\begin{equation*}
				\displaystyle \mathbb{E} \left[  C^{k,0}_{\taulinear - T_k} \right] = \int_{0}^{+ \infty} \mathbb{E} \left[  C^{k,0}_{t} \right]	f_{\taulinear - T_k}(t) dt,
				\end{equation*}
				Then, since $\mathbb{E} \left[  C^{k,0}_{t} \right] = e^{\gamma t} $, we have, {similarly as in Eq.~\eqref{eq:I},
				\begin{equation*}
				    \mathbb{E} \left[  C^{k,0}_{\taulinear - T_k} \right] = \frac{(f_0-k)!}{\left((f_0-k)-\frac{\gamma}{\alpha_1}\right)!}\,,
				\end{equation*}}
				which ends the proof using \eqref{eq-proof-Ctau-e2}.
			\myspecialendproof
			
	\end{proof}
{	The following proposition is analogous to Proposition \ref{prop-Ce-linear}, yet with the decoupled processes $\tilde F$ and $\tilde C$, whose moments are easier to estimate. Note that parameters  $\tilde \alpha,\tilde \beta, \tilde \gamma$ below are generic ones.}
				\begin{proposition}\label{prop-esp-tilde}
				{Let $\tilde F,\tilde C$ be independent pure-jump stochastic processes on $\mathbb N$, of infinitesimal generators
				\begin{align*}
				\overset{\sim}{\mathcal{L}}_F \phi(f) = \tilde\alpha f \left[ \phi(f-1) -   \phi(f) \right], \\
				\overset{\sim}{\mathcal{L}}_C \phi(c) = \left[ \tilde\beta  + \tilde\gamma c \right] \left[ \phi(c+1) -   \phi(c) \right].
				\end{align*}
				with deterministic initial condition $\tilde F(0)=f_0$ and $\tilde C(0)=n\geq 1$, and where $\tilde \alpha,\tilde \beta, \tilde \gamma$ are non-negative rate parameters. Let
				\begin{equation*}
				    \tilde \tau = \inf \{  t>0; \quad \tilde F_{t} = 0 | f_0 \}
				\end{equation*}
				}
				{For any $p\geq 1$, 
					\begin{equation}\label{prop-Csup_positgamma_bounds}
					\mathbb{E}\left[(\tilde C_{\tilde \tau})^p \right]<\infty\,.
					\end{equation}
					if, and only if,
				\begin{equation}\label{eq:hyp_2bis}
				    p \tilde\gamma < \tilde\alpha\,,
				\end{equation}
}
				Moreover, we have:
				\begin{itemize}
					\item if $\tilde\gamma >0$:
			{for $p=1$,		\begin{equation*}%\label{prop-Csup_positgamma_annexe}
					\mathbb{E}\left[\tilde C_{\tilde \tau} \right] = n\frac{f_0!}{\left(f_0-\frac{\tilde\gamma}{\tilde\alpha}\right)!}+\frac{\tilde\beta}{\tilde\gamma} \left(\frac{f_0!}{\left(f_0-\frac{\tilde\gamma}{\tilde\alpha}\right)!}-1\right)\,, 
					\end{equation*}
					}
			{and for $p=2$,		
		\begin{multline*}
		\mathbb{E}\left[(\tilde C_{\tilde \tau})^2 \right] = \left(n+\frac{\tilde\beta}{\tilde\gamma}\right)\left(n+\frac{\tilde\beta}{\tilde\gamma}+1\right)\frac{f_0!}{\left(f_0-\frac{2\tilde\gamma}{\tilde\alpha}\right)!}\\-\left(n+\frac{\tilde\beta}{\tilde\gamma}\right)\left(1+2\frac{\tilde\beta}{\tilde\gamma}\right)\frac{f_0!}{\left(f_0-\frac{\tilde\gamma}{\tilde\alpha}\right)!}+\left(\frac{\tilde\beta}{\tilde\gamma}\right)^2
				\end{multline*} 
					}
					\item if $\tilde\gamma = 0$:
				{	\begin{equation*}
					\mathbb{E}\left[\tilde C_{\tilde \tau} \right]  = n+\frac{\tilde\beta}{\tilde \alpha} \sum_{i = 1}^{f_0 } \frac{ 1}{i}\,,
					\end{equation*}
				}
				{	\begin{equation*}
					\mathbb{E}\left[(\tilde C_{\tilde \tau})^2 \right]  = n+\frac{\tilde\beta}{\tilde \alpha} \sum_{i = 1}^{f_0 } \frac{ 1}{i}+\frac{\tilde\beta^2}{\tilde \alpha^2} \left(\sum_{i = 1}^{f_0 } \frac{ 1}{i^2}+\left(\sum_{i = 1}^{f_0 } \frac{ 1}{i}\right)^2\right)\end{equation*}
				}
				\end{itemize}
			\end{proposition}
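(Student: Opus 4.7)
The backbone of the proof is the independence of $\tilde F$ and $\tilde C$, which reduces the computation to a one-dimensional integration against the law of $\tilde\tau$. First, $\tilde F$ is a linear pure-death process of rate $\tilde\alpha$, so Proposition~\ref{prop-loi-fLT} applies verbatim (with $\alpha_1$ replaced by $\tilde\alpha$): $\tilde\tau$ admits the density $f_{\tilde\tau}(t) = \tilde\alpha f_0 e^{-\tilde\alpha t}(1-e^{-\tilde\alpha t})^{f_0-1}\mathds{1}_{t\geq 0}$. Conditioning on $\tilde\tau$ and using independence, we have, for any positive measurable $\phi$,
\begin{equation*}
\mathbb{E}\left[\phi(\tilde C_{\tilde\tau})\right] = \int_0^\infty \mathbb{E}\left[\phi(\tilde C_t)\right]\, f_{\tilde\tau}(t)\, dt\,.
\end{equation*}
Hence everything reduces to (i) computing the moment $\mathbb{E}[\tilde C_t^p]$ in closed form, and (ii) integrating it against $f_{\tilde\tau}$.

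For step (i) with $\tilde\gamma > 0$, I would apply the martingale problem \eqref{MartingaleProblem_eq} to $\overset{\sim}{\mathcal L}_C$ with test functions $c$ and $c^2$ to obtain the linear ODEs
\begin{align*}
\tfrac{d}{dt}\mathbb{E}[\tilde C_t] &= \tilde\beta + \tilde\gamma\,\mathbb{E}[\tilde C_t]\,,\\
\tfrac{d}{dt}\mathbb{E}[\tilde C_t^2] &= 2\tilde\gamma\,\mathbb{E}[\tilde C_t^2] + (2\tilde\beta+\tilde\gamma)\,\mathbb{E}[\tilde C_t] + \tilde\beta\,,
\end{align*}
with initial conditions $\mathbb{E}[\tilde C_0^p] = n^p$. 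Solving gives $\mathbb{E}[\tilde C_t]$ as an affine combination of $e^{\tilde\gamma t}$ and $1$, and $\mathbb{E}[\tilde C_t^2]$ as a combination of $e^{2\tilde\gamma t}$, $e^{\tilde\gamma t}$ and $1$. For step (ii), the crucial identity, already used in the proof of Proposition~\ref{prop-Ce-linear}, is that after the change of variable $x = 1-e^{-\tilde\alpha t}$ and recognition of the Euler Beta function,
\begin{equation*}
\int_0^\infty e^{k\tilde\gamma t} f_{\tilde\tau}(t)\, dt = f_0\, B\!\left(f_0, 1-\tfrac{k\tilde\gamma}{\tilde\alpha}\right) = \frac{f_0!}{(f_0 - k\tilde\gamma/\tilde\alpha)!}\,,
\end{equation*}
finite if and only if $k\tilde\gamma < \tilde\alpha$. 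Substituting the expressions for $\mathbb{E}[\tilde C_t^p]$ and applying this identity termwise for $k=0,1,2$ produces the explicit formulas stated for $p=1,2$, and proves \eqref{prop-Csup_positgamma_bounds} under \eqref{eq:hyp_2bis}.

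The case $\tilde\gamma=0$ is more direct: $\tilde C_t - n$ is then a Poisson process of rate $\tilde\beta$, independent of $\tilde\tau$, so I would use the decomposition $\tilde\tau \overset{(\mathrm{law})}{=} \sum_{i=1}^{f_0} \mathcal{E}(\tilde\alpha i)$ of independent exponentials from \eqref{eq-jump-time-FL} to compute $\mathbb{E}[\tilde\tau]$ and $\mathbb{E}[\tilde\tau^2]$, and combine them with the mean and variance of the Poisson law via the tower property. For the necessity of \eqref{eq:hyp_2bis} in the general $p$ case, I would use that $\mathbb{E}[\tilde C_t^p] \geq c_p\, e^{p\tilde\gamma t}$ for large $t$ (which follows from the same moment ODE iterated, or from a lower-bound by an embedded Yule process with $\tilde\beta = 0$ and $n=1$), combined with the asymptotic $f_{\tilde\tau}(t) \sim \tilde\alpha f_0\, e^{-\tilde\alpha t}$: the tail integrand then behaves like $e^{(p\tilde\gamma - \tilde\alpha)t}$, which diverges whenever $p\tilde\gamma \geq \tilde\alpha$.

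The main obstacle is purely algebraic: solving the inhomogeneous second-moment ODE and repackaging the three resulting terms into the compact form stated in the proposition (in particular the prefactor $(n+\tilde\beta/\tilde\gamma)(n+\tilde\beta/\tilde\gamma+1)$ in front of $e^{2\tilde\gamma t}$) requires some careful bookkeeping, but raises no new conceptual difficulty beyond what has already been used in Propositions~\ref{prop-loi-fLT}~and~\ref{prop-Ce-linear}.
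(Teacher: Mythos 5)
Your proposal is correct and follows essentially the same route as the paper: condition on $\tilde\tau$ using independence, identify the generalized Erlang density of $\tilde\tau$ from Proposition~\ref{prop-loi-fLT}, integrate the moments of $\tilde C_t$ against it via the Beta-function identity, and treat $\tilde\gamma=0$ through the shifted Poisson law and the moments of $\tilde\tau$. The only (cosmetic) difference is that you obtain $\mathbb{E}[\tilde C_t^p]$ from the generator's moment ODEs, whereas the paper reads the exact law of $\tilde C_t$ off its decomposition into $n$ independent Yule processes (geometric marginals) plus a birth-with-immigration process (negative binomial), which also directly yields the two-sided bound $k e^{p\tilde\gamma t}\leq \mathbb{E}[(\tilde C_t)^p]\leq K e^{p\tilde\gamma t}$ needed for the equivalence at general $p$ — a bound you should state explicitly (e.g.\ by iterating the moment ODEs) to cover sufficiency for $p>2$.
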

			\begin{proof}
					Since $\tilde \tau $ and $\tilde C$ are independent, we deduce by conditioning on $\tilde \tau $ that 
				{\begin{equation}\label{proof-prop-esp-tilde-e1}
				\mathbb{E}\left[(\tilde C_{\tilde \tau})^p \right]  = \displaystyle \int_{0}^{+ \infty}\mathbb{E}\left[(\tilde C_{t})^p \right]f_{\tilde \tau}(t)dt, 
				\end{equation}}
				where $f_{\tilde \tau}$ is the density probability of $\tilde \tau $. Since $\tilde F$ is linear, we apply Proposition~\ref{prop-loi-fLT} and obtain 
			\begin{equation}\label{proof-prop-esp-tilde-e3}
				f_{\tilde \tau}(t) =\tilde \alpha f_0 e^{- \tilde\alpha t} (1 - e^{-\tilde\alpha t})^{f_0 - 1} \mathds{1}_{[0, + \infty)}(t).
				\end{equation}
				\\
				Now, we suppose that $\tilde\gamma >0 $. Then, $\tilde C $ {can be decomposed as the independent sum of $n$ Yule processes starting from $1$ (see Eq.~\eqref{eq:Yule}) and a birth process with immigration (starting from $0$)}. It is classical that {the Yule process follows a geometric law of parameter $e^{-\tilde \gamma t}$, and the birth process with immigration} follows a negative binomial law $\mathcal{BN}\left( \frac{\tilde\beta}{\tilde\gamma}, e^{- \tilde\gamma t} \right) $,
				{there exists $k,K>0$ (depending on model parameters, but independent of $t$) such that, for all $t \geq 0$,
				\begin{equation}\label{eq:control_moment_p}
				ke^{p\tilde\gamma t}\leq \mathbb{E}\left[(\tilde C_{t})^p \right] \leq Ke^{p\tilde\gamma t}\,.
				\end{equation}
				}
			{Combining Eq.~\eqref{eq:control_moment_p} with Eqs.~\eqref{proof-prop-esp-tilde-e1} and \eqref{proof-prop-esp-tilde-e3} yields \eqref{prop-Csup_positgamma_bounds}. To obtain the remaining analytical formulas, we note that	\begin{equation}\label{proof-prop-esp-tilde-e2}
				\mathbb{E}\left[\tilde C_{t} \right] = ne^{\tilde\gamma t}+\frac{\tilde\beta}{\tilde\gamma}(e^{\tilde\gamma t} - 1)=e^{\tilde\gamma t}\left(n+\frac{\tilde\beta}{\tilde\gamma}\right)-\frac{\tilde\beta}{\tilde\gamma}\,,
				\end{equation}
			and
			\begin{equation}\label{proof-prop-esp-tilde-e2_2}
				\mathbb{E}\left[(\tilde C_{t})^2 \right] = e^{2\tilde\gamma t}\left(n+\frac{\tilde\beta}{\tilde\gamma}\right)\left(n+\frac{\tilde\beta}{\tilde\gamma}+1\right)-e^{\tilde\gamma t}\left(n+\frac{\tilde\beta}{\tilde\gamma}\right)\left(1+2\frac{\tilde\beta}{\tilde\gamma}\right)+\left(\frac{\tilde\beta}{\tilde\gamma}\right)^2\,.
				\end{equation} 
				Also, for any $p$ such that \eqref{eq:hyp_2bis} holds true, we have (with the change of variables $x=1-e^{-\tilde\alpha t}$)
				\begin{equation*}
				    \int_0^\infty e^{p\tilde\gamma t}f_{\tilde \tau}(t)dt=f_0\int_0^1 (1-x)^{-\frac{p\tilde\gamma}{\tilde\alpha}}x^{f_0-1}dx=f_0B\left(f_0,1-\frac{p\tilde\gamma}{\tilde\alpha}\right)\,,
				\end{equation*}
				where $B$ is the standard Beta function. We deduce that
				\begin{equation*}
				    \int_0^\infty e^{p\tilde\gamma t}f_{\tilde \tau}(t)dt=\frac{f_0!}{\left(f_0-\frac{p\tilde\gamma}{\tilde\alpha}\right)!}\,,
				\end{equation*}
				where we use the notation $\left(m-x\right)!=\prod_{k=1}^m (k-x)$.
				Then, using Eqs.~\eqref{proof-prop-esp-tilde-e2}-\eqref{proof-prop-esp-tilde-e2_2} and \eqref{proof-prop-esp-tilde-e3}, we deduce from \eqref{proof-prop-esp-tilde-e1} that
				\begin{equation*}
		\mathbb{E}\left[\tilde C_{\tilde \tau} \right] = n\frac{f_0!}{\left(f_0-\frac{\tilde\gamma}{\tilde\alpha}\right)!}+\frac{\tilde\beta}{\tilde\gamma} \left(\frac{f_0!}{\left(f_0-\frac{\tilde\gamma}{\tilde\alpha}\right)!}-1\right)
				\end{equation*}
				and
				\begin{multline*}
		\mathbb{E}\left[(\tilde C_{\tilde \tau})^2 \right] = \left(n+\frac{\tilde\beta}{\tilde\gamma}\right)\left(n+\frac{\tilde\beta}{\tilde\gamma}+1\right)\frac{f_0!}{\left(f_0-\frac{2\tilde\gamma}{\tilde\alpha}\right)!}\\-\left(n+\frac{\tilde\beta}{\tilde\gamma}\right)\left(1+2\frac{\tilde\beta}{\tilde\gamma}\right)\frac{f_0!}{\left(f_0-\frac{\tilde\gamma}{\tilde\alpha}\right)!}+\left(\frac{\tilde\beta}{\tilde\gamma}\right)^2
				\end{multline*}
				}
			If $\tilde\gamma = 0$, then $\tilde C$ is a pure immigration process {starting from $n$}, and follows a {shifted} Poisson law $n+\mathcal{P}\left(\tilde\beta t \right) $ at time $t \geq 0$. Using the same approach, we obtain that  
				{\begin{equation*}
					\displaystyle \mathbb{E}\left[\tilde C_{\tilde \tau} \right]  =  \int_{0}^{+ \infty } (n+\tilde\beta t) f_{\tilde \tau}(t) dt =  n+\tilde\beta \mathbb{E}\left[ \tilde \tau\right]=
					n+\frac{\tilde\beta}{\tilde \alpha} \sum_{i = 1}^{f_0 } \frac{ 1}{i}\,,
				\end{equation*}
				and
				\begin{multline*}
					\displaystyle \mathbb{E}\left[(\tilde C_{\tilde \tau} )^2\right]  =  \int_{0}^{+ \infty } (n+\tilde\beta t(\tilde\beta t+1)) f_{\tilde \tau}(t) dt =  n+\tilde\beta \mathbb{E}\left[ \tilde \tau\right]+\tilde\beta^2 \mathbb{E}\left[ (\tilde \tau)^2\right]\\
					=n+\frac{\tilde\beta}{\tilde \alpha} \sum_{i = 1}^{f_0 } \frac{ 1}{i}+\frac{\tilde\beta^2}{ \tilde\alpha^2} \left(\sum_{i = 1}^{f_0 } \frac{ 1}{i^2}+\left(\sum_{i = 1}^{f_0 } \frac{ 1}{i}\right)^2\right)\,,
				\end{multline*}
				}.
			\myspecialendproof \end{proof}
			
	\subsection{Numerical scheme for $\mathbb{E}[\tau]$ and $\mathbb{E}[C_{\tau}] $}\label{appendix:numericalscheme}
	
	\paragraph{Pseudo-code}
			{
			We design algorithm~\ref{algo-gr} to compute a numerical estimate of $g(f_0,0)$, solution of Eq.~\eqref{g_function} that represents either $\mathbb{E}\left[\tau\right]$ or $  \mathbb{E}\left[C_{\tau}\right]$ according to the specific choice of boundary condition. This algorithm requires $ \gamma<\alpha_1+\beta$ to compute $\mathbb{E}\left[\tau\right]$, and $ 2\gamma<\alpha_1+\beta$ to compute $  \mathbb{E}\left[C_{\tau}\right]$, in agreement with Theorem \ref{thm_borne_cns}, Proposition \ref{prop-couplage} and Proposition \ref{prop-erreur}. The prefactor $A$ given below is obtained thanks to Proposition \ref{prop-esp-tilde}.
			}
			
			\begin{algorithm}[h]
				[1.] Fix $f_0$, $g_0$, $\alpha$, the parameter set $\theta = (\alpha_1,\alpha_2,\beta,\gamma)$ and the tolerance error $\epsilon$\;
				[2.]{ To compute $g(f_0,0)=	\mathbb{E}\left[\tau\right]$, choose  $n$ such that $\gamma<\alpha_1+\beta\frac{n}{1+n}$
				and fix 
				\begin{small}
				\begin{multline*}
				    A= \mathbb{E}\left[\tau_L^{f_0}\right]\mathbb{E}[C_{\tau_n+\overline{\tau}_n}^{n}]=\\
				  \frac{1}{\alpha_1} \left(\sum_{k = 1}^{f_0} \frac{1}{k}\right) \left( \left(n+\frac{(\alpha_1 + \beta + \alpha_2)f_0}{\gamma}\right)\frac{f_0!}{\left(f_0-\frac{\gamma}{\alpha_1+\beta\frac{n}{1+n}}\right)!}-\frac{(\alpha_1 + \beta + \alpha_2)f_0}{\gamma} \right)
				\end{multline*}
				\end{small}
				}
			[2bis.]{	To compute $g(f_0,0)=	\mathbb{E}\left[C_{\tau}\right]$, choose $n$ such that $2\gamma<\alpha_1+\beta\frac{n}{1+n}$ and
			 fix
		\begin{small}\begin{equation*}
				    A= \left(\left(1+ \frac{(\alpha_1 + \beta + \alpha_2)f_0 }{\gamma}\right) \left(\frac{f_0!}{\left(f_0-\frac{\gamma}{\alpha_1+\beta\frac{n}{1+n}}\right)!}-1\right)\right)\mathbb{E}\left[\left(C_{\tau_n+\overline{\tau}_n}^{n}\right)^2\right]
				\end{equation*}
				\end{small}
				where 
				\begin{small}
				\begin{multline*}
		\mathbb{E}\left[(C_{\tau_n+\overline{\tau}_n}^{n})^2 \right] =\left(\frac{(\alpha_1 + \beta + \alpha_2)f_0}{\gamma}\right)^2\\ +\left(n+\frac{(\alpha_1 + \beta + \alpha_2)f_0}{\gamma}\right)\left(n+\frac{(\alpha_1 + \beta + \alpha_2)f_0}{\gamma}+1\right)\frac{f_0!}{\left(f_0-\frac{2\gamma}{\alpha_1+\beta\frac{n}{1+n}}\right)!}\\-\left(n+\frac{(\alpha_1 + \beta + \alpha_2)f_0}{\gamma}\right)\left(1+2\frac{(\alpha_1 + \beta + \alpha_2)f_0}{\gamma}\right)\frac{f_0!}{\left(f_0-\frac{\gamma}{\alpha_1+\beta\frac{n}{1+n}}\right)!}
				\end{multline*} 
				\end{small}
				}
		[3.] Compute $r = \frac{A}{\epsilon}$\;
		[4.] Initialize $g_r(f,r) = {g_0(r)}$ for all $f \in \llbracket 0, f_0 \rrbracket$ \;
		[5.]		\For{c from $r - 1$ to $0$ }{
					$g_r(0,c) \leftarrow g_0(c)$ \;
					\For{$f$ from $1 $ to $f_0$}{
						$g_r(f,c) \leftarrow \frac{-\alpha + (\alpha_1 f + \beta \frac{fc}{f + c})g_r(f-1, c + 1) + (\gamma c + \alpha_2 f)  g_r(f, c + 1)}{\gamma c + (\alpha_1 f + \beta \frac{fc}{f + c}) + \alpha_2 f }$\;
				}}
			[6.]	Return $g_r(f_0,0) $\;
			
				\caption{Pseud-code for the numerical estimate of $\mathbb{E}\left[\tau\right]$ and $\mathbb{E}\left[C_{\tau}\right]$}\label{algo-gr}
			\end{algorithm}

\subsection{{\it In silico} dataset}\label{appendix-insilico}
        {We generate {\it in silico} datasets to further explore parameter identifiability.}
        {For each submodel, we {choose} two {different} parameter sets {with} contrasted values in the division rates $\alpha_2$ or $\gamma$ and/or transition rate $\beta$. The parameter values are summarized in Table~\ref{table-dataset-gen}. We obtain the corresponding 10 datasets by simulating  $1,000$ trajectories  from the SDE \eqref{eq-SDE}, with the Gillespie algorithm \cite{gillespie_general_1976}, starting from the initial condition $(F_0,0)$ at time $t=0$ up to the time when $C(t) = 31$ (the value $C(t) = 31$ corresponds to the maximal number of cuboidal cells observed in the experimental dataset). The initial  random variable $F_0$ follows a truncated Poisson law of parameter $\mu$ (see Eq.\eqref{eq-poisson-trun-law}). For each trajectory, we select uniformly randomly one point $(f,c)$ among the state space points reached by the trajectory, so that each \textit{in silico} datasets is composed of $N=1,000$ points.} {This way of sampling, letting to time-free and uncoupled datapoints, mimics the experimental protocol.}
	   
	   	\begin{table}[htb!]
	   			\centering
	   			\begin{tabular}{|c|c|c|c|c|c|}
	   				\hline 
	   				&  $\alpha_1$ &$\beta$ & $\alpha_2$ & $\gamma$ &  $\mu$ \\ 
	   				\hline 
	   				\hline \\[-2.ex]
	   				\multirow{2}{*}{$(\mathcal{R}_1,\mathcal{R}_3)$ } &  1 & 0 & 0.7 &0 & 5 \\[0.05cm]
	   							\vspace{0.05cm}																& 1 & 0  & 0.007  & 0 & 5 \\ 
	   				\hline 
	   				\hline \\[-2.ex]
	   				\multirow{2}{*}{$(\mathcal{R}_1,\mathcal{R}_4)$ } & 1 & 0  & 0  & 0.7  & 5 \\[0.05cm]  
	   								\vspace{0.05cm}		
	   								&  1 & 0  & 0  & 0.007  & 5 \\ 
	   				\hline 
	   				\hline \\[-2.ex]
	   				\multirow{2}{*}{$(\mathcal{R}_1, \mathcal{R}_2, \mathcal{R}_3)$ }	& 1 & 0.01  & 0.07  & 0 & 5 \\[0.05cm]  
	   									\vspace{0.05cm}		
	   									& 1 & 100  & 0.07  & 0 & 5 \\ 
	   				\hline 
	   				\hline \\[-2.ex]
	   				\multirow{2}{*}{$(\mathcal{R}_1, \mathcal{R}_3, \mathcal{R}_4)$ } 	& 1 & 0  & 0.007  & 0.7 & 5 \\[0.05cm]  
	   							\vspace{0.05cm}		
	   								& 1 & 0  & 0.007  & 0.07 & 5 \\ 
	   				\hline 
	   				\hline \\[-2.ex]
	   				\multirow{2}{*}{$(\mathcal{R}_1, \mathcal{R}_2, \mathcal{R}_4)$ } 	& 1 & 0.01  & 0  & 0.07 & 5 \\[0.05cm]  
	   				\vspace{0.05cm}		
	   					& 1 & 100  & 0  & 0.07 & 5 \\ 
	   				\hline 
	   			\end{tabular} 
   				\caption{\textbf{Parameter sets used to generate the \textit{in silico} datasets.} We use two distinct parameter sets for each submodel, shown in the two rows associated to each submodel.}\label{table-dataset-gen}
	   		\end{table}
	   		
	   	\subsection{Detailed fitting procedure}\label{appendix:fittingproc}
	   	
	   	\paragraph{Maximum  likelihood  estimator}
	   	    For each submodel and dataset,
	the optimal parameter values are given by the MLE $\hat{\theta} = \left(\widehat{\beta} , \widehat{\alpha_2} , \widehat{\gamma}, \widehat{\mu} \right) $, which we compute by minimizing the negative log-likelihood,  
	\begin{equation*}
		\hat{\theta} := \arg \min_{\theta \in \Theta} \left( - \log\left(\mathcal{L}(\mathbf{x};\theta)  \right)\right),
	\end{equation*}
    for a dataset $\mathbf{x}$ and where $\Theta$ is constructed by fixing all parameters related to the nonpresent events to the singleton $\{0\}$: for instance, in submodel $(\mathcal{R}_1,\mathcal{R}_4)$, we have $\Theta= \{0\} \times \{0\} \times \mathbb{R}_+ \times  [1, + \infty)$.
    
	   	    To compute the minimum, we use a derivative-free optimization algorithm: the Differential Evolution (DE) algorithm \cite{storn_differential_1997}. In the following, we describe the whole procedure for the complete model $ (\mathcal{R}_1,\mathcal{R}_2,\mathcal{R}_3,\mathcal{R}_4)$. The algorithm starts from an initial population in which each individual is represented by a set of real numbers $(\beta, \alpha_2, \gamma, \mu) $. Then, the population evolves along successive generations by mutation and recombination processes. At each generation, the likelihood function is used to assess the fitness of the individuals, and only the best individuals are kept in the population. 
	    We have set the intrinsic optimization parameters as follows: the initial population has a size of 20 individuals, and the probability of mutation and crossing-over equals to 0.8 and 0.7 respectively. The starting individual parameter sets are defined on a log scale, and drawn from a uniform distribution on $\Theta = [-6,6]^3 \times [0, 1.5] $.
	    The algorithm was run over 1,000 iterations. \\
	    
	   	\paragraph{Profile likelihood estimate}
	    	   	
	    \begin{table}[h]
	    \centering
			\begin{tabular}{|c|c ||c|c|c|}
				\hline 
				Model & Parameter & \shortstack{ Experimental \\ Wild-Type/Mutant \\ datasets}  & \shortstack{\textit{in silico}\\ Dataset $1$}  & \shortstack{\textit{in silico} \\ Dataset 2 } \\[0.05cm] 
				\hline 
				\hline
				\multirow{2}{*}{$(\mathcal{R}_1, \mathcal{R}_4) $}  & $\mu$ & 0.015 & 0.005 & 0.01 \\ 
				            & $\gamma$ & 0.12 & 0.01 & 0.06 \\ 
				\hline 
				\multirow{2}{*}{$(\mathcal{R}_1, \mathcal{R}_3) $}  & $\mu$ & 0.015 & 0.005 & 0.005 \\ 
				                                                    & $\alpha_2$ & 0.04 & 0.01 & 0.01 \\ 
				\hline 
				\multirow{2}{*}{$(\mathcal{R}_1, \mathcal{R}_2, \mathcal{R}_4) $}   & $\mu$ & 0.015 & 0.01 & 0.015 \\ 
				             & $\beta$ & 0.12 & 0.07 & 0.12 \\ 
				             & $\gamma$ & 0.12 & 0.07 & 0.12 \\ 
				\hline 
				\multirow{2}{*}{$(\mathcal{R}_1, \mathcal{R}_2, \mathcal{R}_3) $}   & $\mu$ & 0.015 & 0.015 & 0.015 \\ 
				         & $\beta$ & 0.12 & 0.12 & 0.12 \\ 
				         & $\alpha_2$ & 0.12 & 0.02 & 0.02 \\ 
				\hline 
				\multirow{2}{*}{$(\mathcal{R}_1, \mathcal{R}_3, \mathcal{R}_4) $ } & $\mu$ & 0.01 & 0.01 & 0.01   \\
				 & $\alpha_2$ & 0.08 & 0.01 & 0.01   \\
				  & $\gamma$  & 0.08 & 0.01 & 0.01   \\
				\hline 
				\multirow{2}{*}{$(\mathcal{R}_1, \mathcal{R}_2,\mathcal{R}_3, \mathcal{R}_4) $} & $\mu$ & 0.015 &  0.015 &  0.015   \\
			    	& $\beta$ & 0.12 & 0.12 & 0.12   \\
				    & $\alpha_2$ & 0.12 & 0.12 & 0.12   \\
				    & $\gamma$  & 0.12 & 0.12 & 0.12   \\
				\hline 
				\hline
			\end{tabular} 
			\caption{\textbf{Size-step} used for each parameter in the PLE estimate,  in log-scale, within each submodel and each datasets.}\label{table-stepsize-PLE}
		\end{table}
		
	   For each $i$th component of the MLE $\hat{\theta}_i$, $i \in \llbracket 1, 4 \rrbracket$,  we compute a vector $\hat{\theta}|[\theta_{i} = x]$ on a grid $G_i$ around the MLE $\hat{\theta}$, with $x \in G_i$:
	    \begin{equation*}
		\hat{\theta}| [\theta_{i} = x] := \arg \min_{\theta \in \Theta, \theta_{i} = x} \left( - \log\left(\mathcal{L}(\mathbf{x};\theta)  \right)\right),
	\end{equation*}
	and its associated PLE (vector)  $\mathcal{L}(\mathbf{x};\hat{\theta}|\theta_{i}) $. We design the grid $G_i$ around the MLE $\hat{\theta}_{i}$ with a fixed step size (see Table \ref{table-stepsize-PLE} for details), and re-optimize the remaining parameters using the DE algorithm with the same optimization parameters (mut=0.8, crossp=0.7, popsize=20, its = 1,000) and initial parameter sets defined on a log scale, and drawn from a uniform distribution on $[-6,6]^3$ for parameters $\beta$, $\alpha_2$ and $\gamma$, and on $[-1 + \log(\hat{\mu}), \log(\hat{\mu}) + 1]$  for parameter $\mu$.

\paragraph{Confidence intervals}

	Pointwise likelihood-based confidence intervals are constructed thanks to the likelihood ratio test, following \cite{raue_structural_2009} ; for each estimated parameter $\hat{\theta}_{i}$, we select all the parameters $\theta_{i} = x$ such that:
	\begin{equation*}
	   \mathcal{L}(\mathbf{x};\theta|[\theta_{i} =x]) - \mathcal{L}(x;\hat{\theta}) < 0.5 * \Delta_{\alpha} ,
	\end{equation*}
	where $\Delta_{0.95} = \chi^2(0.95, 1) = 3.84$ is the $0.95$-quantile of the $\chi^2$ law with $1$ degree of freedom. 
    
        \paragraph{Model selection.}
    {
            AIC and BIC analyses were performed to compare the submodels.
            The reader can refer to  \cite{burnham_model_2003} (Chapter 6) for a detailed presentation of the rule of thumb, classically used to analyze the $\Delta^{AIC}_i := AIC_i - AIC_{\min}$ and $\Delta^{BIC}_i = BIC_i - BIC_{\min}$ values, where $i$ is the index of the $i$th model:
            \begin{itemize}
            \item a $\Delta$ value lower than 2 indicates that  the considered model is almost as probable as the ``best'' model;
            \item a $\Delta$ value between 2 and 7 suggests that the considered model is a suitable alternative to the ``best'' model;
            \item a $\Delta$ value between 7 and 10 suggests that the considered model is less relevant than the ``best'' model;
            \item a $\Delta$ value upper than 10 suggests that the considered model can be safely ruled out.
            \end{itemize} 
            This $\Delta$ approach is completed by the AIC and BIC weight analyzes. For each dataset and  criterion (AIC or BIC), we order the AIC/BIC weights from the highest to the lowest values. We then compute the cumulative sum of these weights, starting from the highest one. The selected models are the first ones such that the cumulative sum reaches the threshold p-value $0.95$.
    }
    
	   	\subsection{Detailed calibration analysis} \label{appex:detail_cal}
	   	\paragraph{Two-event submodels}
	   	    {The fitting results obtained for submodels $(\mathcal{R}_1,\mathcal{R}_3)$ and $(\mathcal{R}_1,\mathcal{R}_4)$ from the experimental datasets are shown in Figure \ref{fig:bestfit} and discussed in the main text, Section \ref{ssec:fittingresuts}.}
			One fitting result for the \textit{in silico} datasets and for submodels $(\mathcal{R}_1,\mathcal{R}_3)$ and $(\mathcal{R}_1,\mathcal{R}_4)$ is shown in Figure \ref{fig:datafitt_ana_insilico}. We verify that the inferred trajectories are coherent with the selected datasets.

			In Figures \ref{fig:datafitt_2parms_insilico}, we show the PLE for each estimated parameter in each \textit{in-silico} dataset. Both the initial condition parameter $\mu$ (orange solid lines) and asymmetric division rate $\alpha_2 $ (green solid line) are practically identifiable  (in the sense given in \cite{raue_structural_2009}), while parameter $\gamma$ (blue solid line) is only partially practically identifiable in most cases. We observe that both parameters $\alpha_2$ ($\mathcal{R}_3$) and $\gamma$ ($\mathcal{R}_4$) are practically identifiable and close to their expected values (less than one $log10$ of difference) when the parameters are of the same order of magnitude than $\alpha_1$. In contrast, a small parameter value compared to $\alpha_1$ leads to a biased parameter estimate, with a huge shift between the estimated and true parameter values (up to two $log10$ difference).\\ {The estimator for the initial condition parameter $\mu$ may also be slightly biased with submodel $(\mathcal{R}_1,\mathcal{R}_3)$ (less than one $log10$ of difference) compared to submodel $(\mathcal{R}_1,\mathcal{R}_4)$ .}

	 \begin{figure}[htb!]
		\centering
	\includegraphics[width=\textwidth]{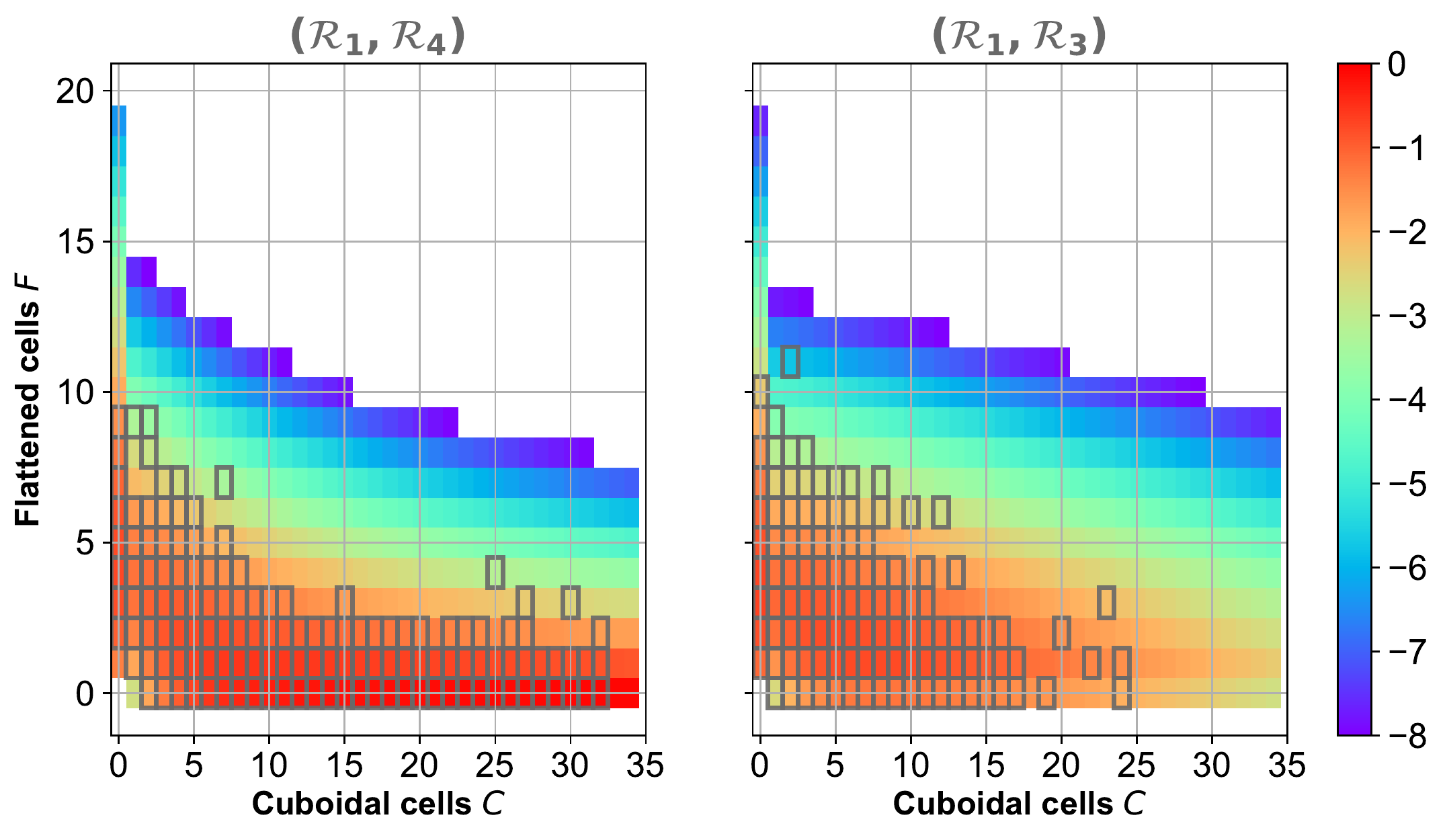}	\caption{\textbf{Two-event submodels: Best fit trajectories for \textit{in silico} datasets.} Using Eqs.\eqref{eq-like}-\eqref{eq-poisson-trun-law}, we compute each probability $\mathbb{P}\left[F_c = f\right] $ for submodel $(\mathcal{R}_1, \mathcal{R}_4)$ (left panel) and $(\mathcal{R}_1, \mathcal{R}_3)$ (right panel) with their respective MLE parameter set associated to the \textit{in silico dataset 1}. Each dark gray square corresponds to a data point. {The colormap corresponds to the probability values $\mathbb{P}\left[F_c = f\right] $ in log10 scale.} }
		\label{fig:datafitt_ana_insilico}
	\end{figure}
	
		 \begin{figure}[htb!]
	 	 \centering
	 	 \includegraphics[width=\linewidth]{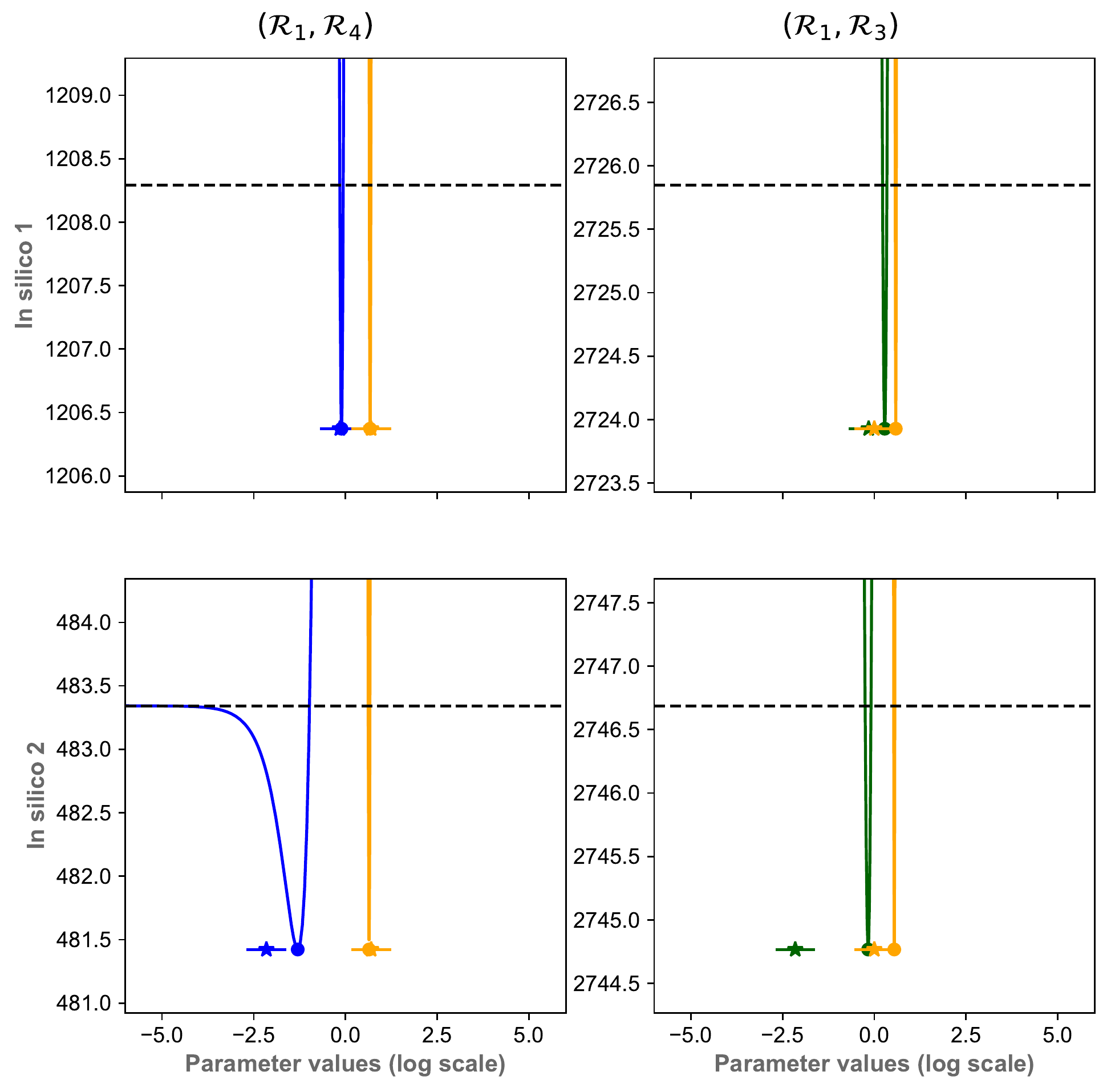}
	 	\caption{\textbf{Two-event submodels: PLE for \textit{in silico} datasets.}  Each panel represents the PLE, in log10 scale, obtained from the \textit{in silico} datasets, and either submodel $(\mathcal{R}_1,\mathcal{R}_4)$ (left panels) or $(\mathcal{R}_1,\mathcal{R}_3)$ (right panels). The dashed black line represents the 95\%-statistical threshold. Orange solid lines: PLE values for the initial condition parameter $\mu$; blue solid lines: PLE values for the symmetric cell proliferation rate $\gamma$; green solid lines: PLE values for the asymmetric cell division rate  $\alpha_2$.
				 The colored points represent the associated MLE, and the star symbols are the expected (true) parameter values (see Table \ref{table-dataset-gen}).}
	 	\label{fig:datafitt_2parms_insilico}
	 \end{figure}

	\paragraph{Three-event submodels and complete model}
	   We turn now to the analysis of three-event submodels $(\mathcal{R}_1,\mathcal{R}_2,\mathcal{R}_3)$, $(\mathcal{R}_1,\mathcal{R}_2,\mathcal{R}_4)$ and $(\mathcal{R}_1,\mathcal{R}_3,\mathcal{R}_4)$) and the complete model ($(\mathcal{R}_1,\mathcal{R}_2,\mathcal{R}_3,\mathcal{R}_4)$. 
	   Qualitatively, the fitting results for submodel $(\mathcal{R}_1,\mathcal{R}_2,\mathcal{R}_3)$ are similar to those for submodel $(\mathcal{R}_1,\mathcal{R}_3)$ (data not-shown); they are characterized by a high probability to produce ten or more proliferative cells before the precursor cell extinction. The fitting results for submodels $(\mathcal{R}_1,\mathcal{R}_2,\mathcal{R}_4)$ and $(\mathcal{R}_1,\mathcal{R}_3,\mathcal{R}_4)$ are rather similar to submodel\\ $(\mathcal{R}_1,\mathcal{R}_4)$; they are characterized by direct cell transition with very little concomitant cell proliferation, followed by prolonged cell proliferation after precursor cell extinction. The fitting results for the complete model are shown in the bottom panels of Figure \ref{fig:bestfit} for both the Wild-type and Mutant subsets {and discussed in the main text, Section \ref{ssec:fittingresuts}.}
	   
        \begin{figure}
				\centering
	       	    \includegraphics[width=\linewidth]{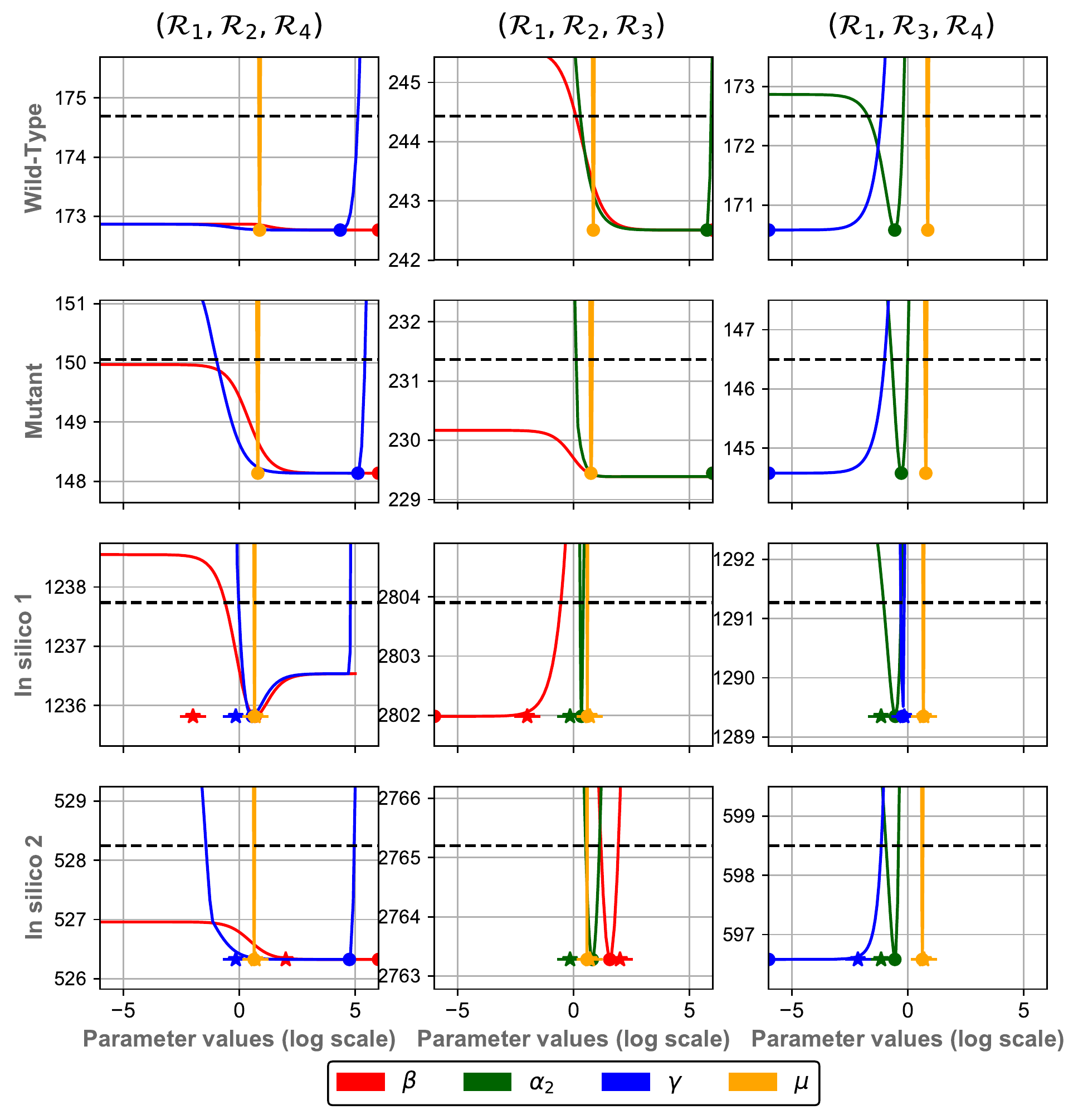}
				\caption{\textbf{Three-event submodels: PLE}. Each panel represents the PLE, in log10 scale, obtained from the experimental (top panels) and \textit{in silico} datasets (bottom panels), and either submodel $(\mathcal{R}_1,\mathcal{R}_2,\mathcal{R}_4)$ (left panels), $(\mathcal{R}_1,\mathcal{R}_2,\mathcal{R}_3)$ (center panels), or $(\mathcal{R}_1,\mathcal{R}_3),\mathcal{R}_4)$ (right panels).
				The dashed black line represents the 95\%-statistical threshold.  Orange solid lines: PLE values for the initial condition parameter $\mu$; blue solid lines: PLE values for the symmetric cell proliferation rate $\gamma$; green solid lines: PLE values for the asymmetric cell division rate  $\alpha_2$; red solid lines: PLE values for the self-amplification transition rate $\beta$.
				 The colored points represent the associated MLE, and (in the bottom panels) the star symbols are the expected (true) parameter values (see Table \ref{table-dataset-gen}).}
				\label{fig:datafitt_3parm}
			\end{figure}
			
	   The PLEs for each dataset and each parameter are presented in Figure \ref{fig:datafitt_3parm} for the three-event submodels. The corresponding parameter values and confidence intervals for the Wild-Type and Mutant subsets are given in Tables \ref{table-wt} and \ref{table-mut}. As observed for the two-event submodels, in each case, the initial condition parameter $\mu$ (orange solid lines) is always practically identifiable, and its fitted value is close to the true one for the \textit{in silico} datasets. In contrast, all other parameters have a lack of identifiability, {both with the experimental and \textit{in silico} datasets}. Specifically, the asymmetric division rate $\alpha_2$ is practically not identifiable for submodel  $(\mathcal{R}_1,\mathcal{R}_2,\mathcal{R}_3)$ with the experimental subsets. Interestingly, when the asymmetric division event is combined with the symmetric division event (submodel $(\mathcal{R}_1,\mathcal{R}_3,\mathcal{R}_4)$) rather than with the auto-amplified transition  (submodel $(\mathcal{R}_1,\mathcal{R}_2,\mathcal{R}_3)$), the asymmetric division rate {$\alpha_2$} becomes identifiable in the experimental subsets, which reveals complex parameter dependencies between the asymmetric division rate $\alpha_2$ and auto-amplified transition rate $\beta$. 
			
	\begin{table}[htb!]
	\centering
			\begin{tabular}{|c|c|c|c|c|}
				\hline 
				Model & $\beta$  & $\alpha_2 $  & $\gamma $  & $\mu$  \\ 
				\hline 
				\hline
				$(\mathcal{R}_1, \mathcal{R}_4) $  & / & / & \shortstack{$10^{-6}$ \\  $ \in (0; 0.12]$} & \shortstack{$7.49$ \\ $\in [7.05;7.83]$} \\ 
				\hline 
				$(\mathcal{R}_1, \mathcal{R}_3) $ &  / & \shortstack{$1.18$ \\ $ \in [0.67; 1.57]$}  & /  & \shortstack{$ 7.22 $ \\ $ \in [6.81;7.83]$ } \\ 
				\hline 
				$(\mathcal{R}_1, \mathcal{R}_2, \mathcal{R}_4) $  & $10^6 \in \mathbb{R}$ & / & \shortstack{$10^{4.35}$ \\ $\in (0; 10^{5.03}]$} & \shortstack{$7.45$ \\ $ \in [7.05;7.83] $} \\ 
				\hline 
				$(\mathcal{R}_1, \mathcal{R}_2, \mathcal{R}_3) $ & \shortstack{$10^6$ \\ $\in [1.52; + \infty)$} & \shortstack{$10^{5.75}$ \\ $\in [2.00; 10^{5.88}]$} & / & \shortstack{$7.07$ \\ $\in [5.15;6.35]$} \\ 
				\hline 
				$(\mathcal{R}_1, \mathcal{R}_3, \mathcal{R}_4) $ & / & \shortstack{$0.27$ \\ $\in [0.022;0.52]$} & \shortstack{$10^-6$ \\ $\in (0;0.068]$} & \shortstack{$7.20$ \\ $\in [6.69;7.69]$} \\ 
				\hline 
				$(\mathcal{R}_1, \mathcal{R}_2,\mathcal{R}_3, \mathcal{R}_4) $ & \shortstack{$10^6$ \\ $\in [4.64; + \infty)$ } & \shortstack{$10^{4.78}$ \\ $\in [0.87; 10^{5.27}]$ } & \shortstack{$10^{-6}$ \\ $\in (0; 10^{4.67}]$ } & \shortstack{$7.06$ \\ $\in [6.58;7.56]$ }   \\ 
				\hline 
				\shortstack{Primordial follicle \\ dataset} & / & / & / & \shortstack{$6.22$ \\ $\in [5.54;6.67]$} \\
				\hline
				\hline
			\end{tabular} 
			\caption{Wild-Type MLE parameter sets. MLE estimates and confidence intervals for each submodel using the likelihood given by Eqs.\eqref{eq-like}-\eqref{eq:likelihood_total}, and (last row) for the initial condition parameter using likelihood given by Eq.~\eqref{eq:likeli_init}.}\label{table-wt}
		\end{table}
		\vspace{-0.5cm}
		\begin{table}[htb!]
		\centering
			\begin{tabular}{|c|c|c|c|c|}
				\hline 
				Model & $\beta$  & $\alpha_2 $  & $\gamma $  & $\mu$  \\ 
				\hline 
				\hline 
				$(\mathcal{R}_1, \mathcal{R}_4) $  & / & / & \shortstack{0.14 \\ $\in (0;0.28]$ } & \shortstack{6.40 \\ $\in [5.93;6.81]$ }\\ 
				\hline 
				$(\mathcal{R}_1, \mathcal{R}_3) $ &  / & \shortstack{1.63 \\ $\in [1.26;2.20]$ } & /  & \shortstack{5.91 \\ $\in [5.34;6.35]$ } \\ 
				\hline 
				$(\mathcal{R}_1, \mathcal{R}_2, \mathcal{R}_4) $ & \shortstack{$10^6$ \\ $\in \mathbb{R}$ } & / & \shortstack{$10^{5.11}$ \\ $\in [0.12; 10^{5.39}]$ } & \shortstack{6.26 \\ $\in [5.72; 6.81]$ }\\ 
				\hline 
				$(\mathcal{R}_1, \mathcal{R}_2, \mathcal{R}_3) $& \shortstack{$10^6$ \\ $\in \mathbb{R}$ }& \shortstack{$10^{6}$ \\ $\in [1.52; + \infty)$} & / & \shortstack{ 5.57 \\ $\in [5.15; 6.35]$ }  \\ 
				\hline 
				$(\mathcal{R}_1, \mathcal{R}_3, \mathcal{R}_4) $& / & \shortstack{$0.52$ \\ $\in [0.21; 0.91]$} & \shortstack{$10^{-6}$ \\ $\in (0,0.98]$ } & \shortstack{ 5.94 \\ $\in [5.43; 6.54]$ }  \\ 
				\hline 
				$(\mathcal{R}_1, \mathcal{R}_2,\mathcal{R}_3, \mathcal{R}_4) $ & \shortstack{$2.81$ \\ $\in \mathbb{R}_+$ }  & \shortstack{$1.16$ \\ $\in [0.28;10^{5.51}]$ }  & \shortstack{$10^{-6}$ \\ $\in (0; 10^{4.9}]$ }  & \shortstack{$5.83$ \\ $\in [ 5.15; 6.35]$ }   \\ 
				\hline 
				\shortstack{Primordial follicle \\ dataset}  & / & / & / & \shortstack{$6.77$ \\ $\in [5.75;7.60]$} \\
				\hline
				\hline 
			\end{tabular} 
			\caption{Mutant parameter sets. MLE estimates and confidence intervals for each submodel using the likelihood given by Eqs.\eqref{eq-like}-\eqref{eq:likelihood_total}, and (last row) for the initial condition parameter using likelihood given by Eq.~\eqref{eq:likeli_init}.}\label{table-mut}
		\end{table}

	\begin{figure}[htb!]
		\centering
		\includegraphics[width=\linewidth]{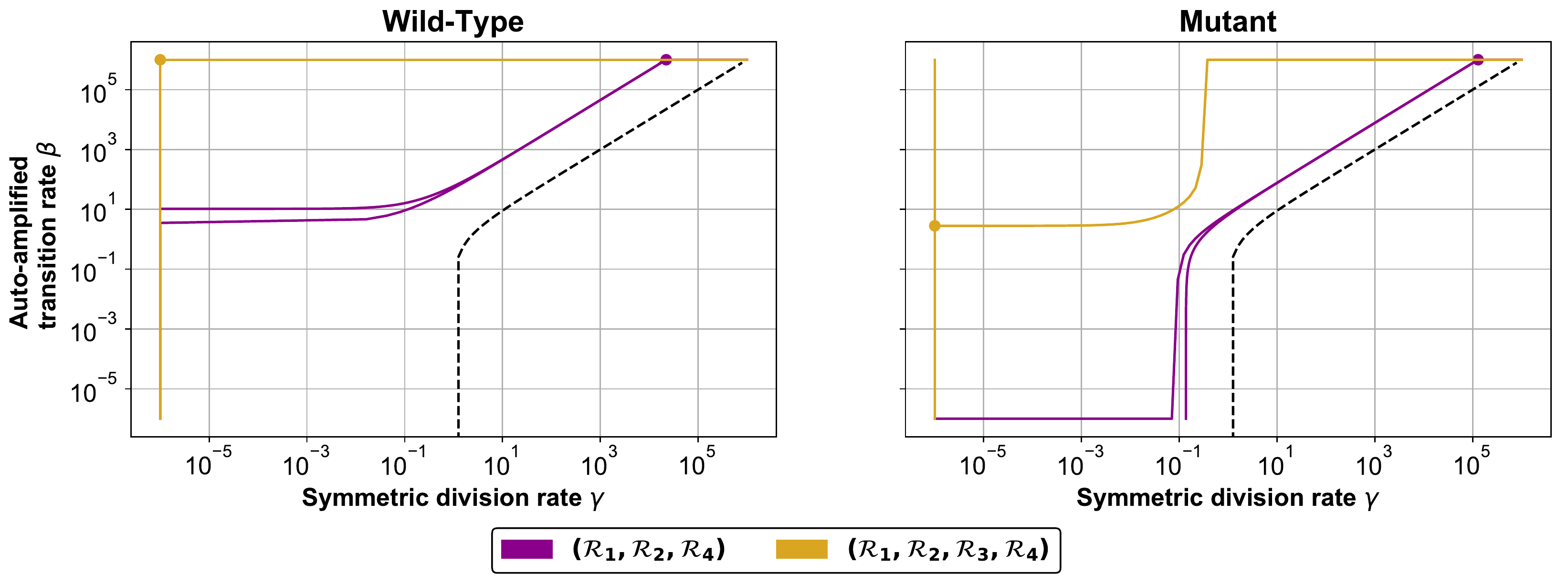}
		\caption{\textbf{Proliferation versus transition}. For the Wild-Type (left panel) and Mutant (right panel) datasets, and for submodel $(\mathcal{R}_1,\mathcal{R}_2,\mathcal{R}_4)$ and complete model $(\mathcal{R}_1,\mathcal{R}_2,\mathcal{R}_3,\mathcal{R}_4)$,  we represent in colored lines both the optimal value of self-amplification transition rate $\beta$ along the PLE of the symmetric cell proliferation rate $\gamma$, and the optimal value of the symmetric cell proliferation rate $\gamma$ along the PLE of self-amplification transition rate $\beta$. In black dashed line, we represent the straight line $\gamma=\beta+\alpha_1=\beta+1$.}
		\label{fig:ple_fit_gamma_beta1}
	\end{figure}

\clearpage

\begin{acknowledgements}
	The authors wish to thank Ken McNatty for providing the experimental dataset and Danielle Monniaux for helpful discussions.
\end{acknowledgements}

% BibTeX users please use one of
%\bibliographystyle{plain} 
%\bibliographystyle{spbasic}      % basic style, author-year citations
\bibliographystyle{spmpsci}      % mathematics and physical sciences
\bibliography{Condition_Initiale}   % name your BibTeX data base

% Non-BibTeX users please use
%\begin{thebibliography}{}
%
% and use \bibitem to create references. Consult the Instructions
% for authors for reference list style.
%
%\bibitem{RefJ}
% Format for Journal Reference
%Author, Article title, Journal, Volume, page numbers (year)
% Format for books
%\bibitem{RefB}
%Author, Book title, page numbers. Publisher, place (year)
% etc
%\end{thebibliography}

\end{document}